\documentclass{article}

\usepackage[DIV=14]{typearea}
\usepackage{ifthen}


\newboolean{talk}
\setboolean{talk}{false}
\newboolean{paper}
\setboolean{paper}{false}


\newboolean{IEEEstyle}
\setboolean{IEEEstyle}{false}
\newboolean{lipicsstyle}
\setboolean{lipicsstyle}{false}
\newboolean{eptcsstyle}
\setboolean{eptcsstyle}{false}
\newboolean{sigplanstyle}
\setboolean{sigplanstyle}{false}
\newboolean{PACMPL}
\setboolean{PACMPL}{false}
\newboolean{acmartstyle}
\setboolean{acmartstyle}{false}


\newboolean{needstheorems}
\setboolean{needstheorems}{false}
\newboolean{withimages}
\setboolean{withimages}{false}
\newboolean{withproofs}
\setboolean{withproofs}{true}


\newboolean{french}
\setboolean{french}{false}

\ifthenelse{\boolean{talk}}{}{\usepackage[utf8]{inputenc}}
\ifthenelse{\boolean{PACMPL}}{}{
	\ifthenelse{\boolean{french}}{\usepackage[french]{babel}}{
	  \ifthenelse{\boolean{lipicsstyle}}{}{\usepackage[english]{babel}}}
	  }
\ifthenelse{\boolean{PACMPL}}{}{
	\usepackage{amssymb} }
\usepackage{amsmath}
\usepackage{graphicx}
\usepackage{bussproofs}
\usepackage{fancybox}
\usepackage{cmll}
\usepackage{stmaryrd}
\usepackage{multirow}
%
\ifthenelse{\boolean{IEEEstyle}}{
\usepackage[bookmarks={false}]{hyperref}}{
\usepackage{hyperref}}
\usepackage{proof}
\usepackage{hhline}
\usepackage{xspace}
\usepackage{booktabs}
\usepackage{subcaption}
\usepackage{wrapfig}
\usepackage{array}
\usepackage{arydshln}

\ifthenelse{\boolean{IEEEstyle}}{
\setboolean{needstheorems}{true}}{}

\ifthenelse{\boolean{eptcsstyle}}{
\setboolean{needstheorems}{true}}{}

\ifthenelse{\boolean{sigplanstyle}}{
\setboolean{needstheorems}{true}}{}


\ifthenelse{\boolean{needstheorems}}{

\usepackage{amsthm}

    \newtheorem{theorem}{Theorem}[section]
    \newtheorem{lemma}[theorem]{Lemma}
    \newtheorem{corollary}[theorem]{Corollary}
    \newtheorem{proposition}[theorem]{Proposition}
    \newtheorem{definition}[theorem]{Definition}

    \newtheorem{example}[theorem]{Example}
}{}

\newcommand{\myproof}[1]{
\ifthenelse{\boolean{withproofs}}{#1}{}}

\newcommand{\withproofs}[1]{
\ifthenelse{\boolean{withproofs}}{#1}{}}

\newcommand{\withoutproofs}[1]{
\ifthenelse{\boolean{withproofs}}{}{#1}}



\newcommand{\tm}{t}
\newcommand{\tmtwo}{u}
\newcommand{\tmthree}{r}
\newcommand{\tmfour}{w}

\newcommand{\var}{x}
\newcommand{\vartwo}{y}
\newcommand{\varthree}{z}

\newcommand{\rootRew}[1]{\mapsto_{#1}}
\newcommand{\Rew}[1]{\rightarrow_{#1}}

\renewcommand{\to}{\Rew{}}







\newcommand{\toh}{\Rew{h}}




\newcommand{\symfont}[1]{\mathsf{#1}}

\newcommand{\lssym}{{\symfont{ls}}}
\newcommand{\gcsym}{{\symfont{gc}}}
\newcommand{\db}{{\symfont{dB}}}

\newcommand{\varsym}{{\symfont{var}}}



\newcommand{\ctxholep}[1]{[#1]}
\newcommand{\ctxhole}{\ctxholep{\cdot}}

\newcommand{\ctx}{C}
\newcommand{\ctxtwo}{D}

\newcommand{\ctxp}[1]{\ctx\ctxholep{#1}}

\newcommand{\hctx}{H}
\newcommand{\hctxtwo}{K}
\newcommand{\hctxthree}{G}
\newcommand{\hctxp}[1]{\hctx\ctxholep{#1}}




\newcommand{\nbvctxtwo}[1]{\nbvctxtwo{#1}}

\newcommand{\sctx}{S}
\newcommand{\sctxtwo}{\sctx'}

\newcommand{\sctxp}[1]{\ctxholep{#1}\sctx}


\newcommand{\defeq}{:=}

\newcommand{\grameq}{::=}

\newcommand{\isub}[2]{\{#1/#2\}}
\newcommand{\esub}[2]{[#1/#2]}




\newcommand{\togc}{\Rew{\gcsym}}

\newcommand{\tohl}{\multimap}
\newcommand{\tohldb}{\multimap_\db}
\newcommand{\tohlls}{\multimap_\lssym}
\newcommand{\tohlgc}{\multimap_\gcsym}
\newcommand{\tohllst}[1]{\multimap_{\lssym,#1}}
\newcommand{\tohlgcv}[1]{\multimap_{\gcsym,#1}}



\newcommand{\llbrace}{\{ \kern -0.27em \vert}
\newcommand{\rrbrace}{\vert \kern -0.27em \}}



\renewcommand{\l}{\lambda}
\newcommand{\ie}{\textit{i.e.}\xspace}
\newcommand{\eg}{\textit{e.g.}\xspace}
\newcommand{\ih}{\textit{i.h.}\xspace}
\newcommand{\fv}[1]{\symfont{fv}(#1)}


\newcommand{\red}[1]{{\color{red} {#1}}}
\newcommand{\blue}[1]{{\color{blue} {#1}}}

\newcommand{\ignore}[1]{}

\newcommand{\myinput}[1]{\ifthenelse{\boolean{withimages}}{\input{#1}}{}}


\newcommand{\levy}{{L{\'e}vy}\xspace}





\newcommand{\ax}{\mathsf{ax}}
\newcommand{\cut}{\mathsf{cut}}


\newcommand{\tens}{\otimes}
\newcommand{\lolli}{\multimap}



\newcommand{\der}{{\mathsf{d}}}
\newcommand{\contr}{{\mathsf{c}}}




\newcommand{\set}[1]{\{#1\}}

\newcommand{\nat}{\mathbb{N}}














\newcommand{\size}[1]{|#1|}









\ifthenelse{\boolean{PACMPL}}{\renewcommand{\state}{s}}{\newcommand{\state}{s}}
\newcommand{\statetwo}{{s'}}
\newcommand{\statethree}{s''}
\newcommand{\statefour}{q}



\newcommand{\exec}{\rho}




\newenvironment{varitemize}
{
\begin{list}{\labelitemi}
{\setlength{\itemsep}{0pt}
 \setlength{\topsep}{0pt}
 \setlength{\parsep}{0pt}
 \setlength{\partopsep}{0pt}
 \setlength{\leftmargin}{15pt}
 \setlength{\rightmargin}{0pt}
 \setlength{\itemindent}{0pt}
 \setlength{\labelsep}{5pt}
 \setlength{\labelwidth}{10pt}
}}
{
 \end{list} 
}

\newcounter{numberone}

\newenvironment{varenumerate}
{
\begin{list}{\arabic{numberone}.}
{
  \usecounter{numberone}
  \setlength{\itemsep}{0pt}
  \setlength{\topsep}{0pt}
  \setlength{\parsep}{0pt}
  \setlength{\partopsep}{0pt}
  \setlength{\leftmargin}{15pt}
  \setlength{\rightmargin}{0pt}
  \setlength{\itemindent}{0pt}
  \setlength{\labelsep}{5pt}
  \setlength{\labelwidth}{15pt}
}}
{
\end{list} 
}

\newcounter{numbertwo}


\newcommand{\hnf}[1]{\textsf{hnf}(#1)}

\newcommand{\lhnf}[1]{\textsf{lhnf}(#1)}

\newcommand{\letin}[3]{{\sf let}\ #1=#2\ {\sf in}\ #3}

\newcommand{\trpos}{logged position\xspace}
\newcommand{\trposs}{logged positions\xspace}

\newcommand{\TrPoss}{Logged Positions\xspace}
\newcommand{\tr}{logged\xspace}

\newcommand{\Log}{Log\xspace}
\newcommand{\Logs}{Logs\xspace}

\newcommand{\pntransl}[1]{#1^{\dagger}}
\newcommand{\mtype}{o}
\newcommand{\med}[2]{
($(#1)!.5!(#2)$)
}

\renewcommand{\ctxholep}[1]{\langle #1\rangle}
\newcommand{\ctxtwop}[1]{\ctxtwo\ctxholep{#1}}





\newcommand{\reflemma}[1]{Lemma~\ref{l:#1}}
\newcommand{\reflemmap}[2]{Lemma\ref{l:#1}.\ref{p:#1-#2}}

\newcommand{\refprop}[1]{Prop.~\ref{prop:#1}}
\newcommand{\refsect}[1]{Sect.~\ref{sect:#1}}

\newcommand{\refeq}[1]{(\ref{eq:#1})}

\newcommand{\refcoro}[1]{Corollary~\ref{coro:#1}}
\newcommand{\refdef}[1]{Def.~\ref{def:#1}}

\newcommand{\ls}{\lssym}

\newcommand{\lctx}[1]{\ctx_{#1}}
\newcommand{\lctxp}[2]{\lctx{#1}\ctxholep{#2}}

\newcommand{\octx}[1]{O_{#1}}


\renewcommand{\esub}[2]{[#1{\shortleftarrow}#2]}
\renewcommand{\isub}[2]{\{#1{\shortleftarrow}#2\}}

\newcommand{\expset}{\mathcal{L}}

\newcommand{\resm}{\psym}
\renewcommand{\resm}{\bullet}
\newcommand{\psym}{\mathsf{p}}
\newcommand{\qsym}{\mathsf{q}}
\newcommand{\rsym}{\mathsf{r}}
\newcommand{\lsym}{\mathsf{l}}

\newcommand{\lpos}{p}
\renewcommand{\lpos}{l}

\newcommand{\sizelpos}[1]{\size{#1}_{\exps}}
\newcommand{\exps}{\lpos}
\newcommand{\expstwo}{\exps'}

\newcommand{\polset}{\mathcal{D}}
\newcommand{\upp}{\blue{\uparrow}}
\newcommand{\downp}{\red{\downarrow}}

\newcommand{\tlog}{L}

\newcommand{\ste}{\tlog}
\newcommand{\stetwo}{\ste'}

\newcommand{\tape}{T}
\newcommand{\tapetwo}{\tape'}
\newcommand{\tapethree}{\tape''}
\newcommand{\stme}{\tape}
\newcommand{\stmetwo}{\stme'}
\newcommand{\stmethree}{\stme''}
\newcommand{\pol}{d}
\newcommand{\poltwo}{\pol'}

\newcommand{\iamev}{\rightarrow}
\newcommand{\iamevp}[1]{\iamev_{#1}}


\newcommand{\iamdvartwo}{\iamevp{\textsf{var}2}}
\newcommand{\iamdes}{\iamevp{\textsf{es}}}
\newcommand{\iamues}{\iamevp{\textsf{es}2}}
\newcommand{\iamuestwo}{\iamevp{\textsf{var}3}}
\newcommand{\relm}{{\blacktriangleright_{\db}}}
\newcommand{\rele}{\blacktriangleright_{\ls}}
\newcommand{\relgc}{{\blacktriangleright_{\gcsym}}}
\newcommand{\relf}{{\blacktriangleright}}

\newcommand{\nopolstate}[5]{(#1,#2,#4,#3,#5)}

\newcommand{\dstate}[4]{(\underline{\red{#1}},#2,#4,#3)}

\newcommand{\ustate}[4]{(#1,\underline{\blue{#2}},#4,#3)}

\newcommand{\dstatetab}[4]{\underline{\red{#1}} & #2 & #4 & #3 }

\newcommand{\ustatetab}[4]{#1 & \underline{\blue{#2}} & #4 & #3 }

\newcommand{\ndstatetab}[5]{\underline{\red{#1}} & #2 & #4 & #3 & #5}
\newcommand{\nustatetab}[5]{#1 & \underline{\blue{#2}} & #4 & #3 & #5}

\newcommand{\cons}{{\cdot}}

\newcommand{\IAM}{\mathrm{\l\mbox{-}IAM}}

\newcommand{\tomachhole}[1]{\rightarrow_{#1}}

\newcommand{\btsym}{\mathsf{bt}}
\newcommand{\tomachdotone}{\tomachhole{\resm 1}}
\newcommand{\tomachdottwo}{\tomachhole{\resm 2}}
\newcommand{\tomachvar}{\tomachhole{\varsym}}
\newcommand{\tomachbttwo}{\tomachhole{\btsym 2}}
\newcommand{\iamdap}{\tomachdotone}
\newcommand{\iamdlamone}{\tomachdottwo}
\newcommand{\iamdvar}{\tomachvar}
\newcommand{\iamdlamtwo}{\tomachbttwo}

\newcommand{\argsym}{\mathsf{arg}}
\newcommand{\tomachdotthree}{\tomachhole{\resm 3}}
\newcommand{\tomachdotfour}{\tomachhole{\resm 4}}
\newcommand{\tomacharg}{\tomachhole{\argsym}}
\newcommand{\tomachbtone}{\tomachhole{\btsym 1}}
\newcommand{\iamuapltwo}{\tomachdotthree}
\newcommand{\iamulam}{\tomachdotfour}
\newcommand{\iamuaplone}{\tomacharg}
\newcommand{\iamuapr}{\tomachbtone}

\newcommand{\tomachbttwodec}{\tomachhole{\btsym 2, \lpos}}
\newcommand{\tomachbtonedec}{\tomachhole{\btsym 1,\lpos}}




\newcommand{\toiam}{\rightarrow_{\IAM}}

\newcommand{\stempty}{\epsilon}


\newcommand{\sizee}[1]{\sizelpos{#1}}
\newcommand{\la}[1]{\lambda #1.}
\newcommand{\expsn}{\ste_n}

\newcommand{\expsind}[1]{\ste_{#1}}

\newcommand{\outs}[1]{\mathsf{out}(#1)}
\newcommand{\outsi}[2]{\mathsf{out}_{#1}(#2)}

\newcommand{\sem}[2]{\llbracket#1\rrbracket_{#2}}

\newcommand{\exstates}{\mathcal{E}}

\newcommand{\midd}{\; \; \mbox{\Large{$\mid$}}\;\;}

\newcommand{\rel}{\mathcal{R}}












\usepackage{tikz}
\usetikzlibrary{matrix,arrows}
\usetikzlibrary{decorations.shapes}
\usetikzlibrary{decorations.text}
\usetikzlibrary{decorations.pathmorphing}
\usetikzlibrary{decorations.markings}
\usetikzlibrary{arrows.meta}


\usepackage{tikz}
\usetikzlibrary{calc}
\usetikzlibrary{backgrounds}
\usetikzlibrary{decorations.shapes}
\usetikzlibrary{decorations.text}
\usetikzlibrary{decorations.pathmorphing}
\usetikzlibrary{decorations.markings}
\usetikzlibrary{matrix}
\usetikzlibrary{decorations.pathreplacing}
\usetikzlibrary{arrows}
\usetikzlibrary{positioning}
\usepackage[nofancy]{tikz-inet}

\setboolean{withimages}{true}

\tikzset{%
%
  ocenter/.style={baseline={([yshift=-.5ex, xshift=-.5ex]current bounding box)}},  
%
  nospace/.style={inner sep= 0pt},
  etic/.style={inner sep= 0.5pt, fill=white, anchor= center},
  letic/.style={inner sep= 0.7pt, fill=white, anchor= center,circle,draw=black},
%
  port/.style={inner sep= 0.3pt, anchor= center,circle,draw=black,fill=black, minimum size = 0.3pt},
%
  eport/.style={inner sep= 0.8pt, anchor= center,circle,draw=cyan,fill=white, minimum size = 0.8pt},
  mport/.style={inner sep= 0.8pt, anchor= center, circle,draw=white,fill=brown, minimum size = 0.8pt, solid, 
line width=0.10ex},
  eportn/.style={inner sep= 0.8pt, anchor= center,circle,draw=blue,fill=white, minimum size = 0.8pt},
  mportn/.style={inner sep= 0.8pt, anchor= center, circle,draw=red,fill=red, minimum size = 0.8pt, solid, 
line width=0.10ex},
%
%
  nopol/.style={->, shorten <=0.5pt, shorten >=0.5pt, draw=gray, line width=0.18ex},
  nopolrev/.style={<-, shorten <=1pt, shorten >=1pt, draw=gray, line width=0.18ex},
  nopolgen/.style={preaction={decorate},decoration={markings,mark=at position .5 with {\draw [shorten >=0pt, shorten <=0pt,draw=gray,-](0pt,3pt) -- (0pt,-3pt);}},->, shorten <=0.5pt, shorten >=0.5pt, draw=gray, line width=0.18ex},
  nopolrevgen/.style={preaction={decorate},decoration={markings,mark=at position .5 with {\draw [shorten >=0pt, shorten <=0pt,draw=gray,-](0pt,3pt) -- (0pt,-3pt);}},<-, shorten <=1pt, shorten >=1pt, draw=gray, line width=0.18ex},
  outEdge/.style={<-, shorten >=0.5pt, shorten <=0.5pt, draw=blue, line width=0.18ex},
  inpEdge/.style={->, densely dotted, shorten >=0.5pt, shorten <=0.5pt, draw=red, line width=0.18ex, overlay},
  inpEdgeGen/.style={preaction={decorate},decoration={markings,mark=at position .5 with {\draw [shorten >=0pt, shorten <=0pt,draw=red,-](0pt,3pt) -- (0pt,-3pt);}},->, densely dotted, shorten >=0.5pt, shorten <=0.5pt, draw=red, line width=0.18ex, overlay},
  %
    eprinc/.style={postaction={decorate},decoration={markings,mark=at position .4 with {\node [inner sep= 0.5pt, anchor= center,circle,draw=cyan, fill=cyan, minimum size = 0.8pt, solid, line width=0.10ex]{};}}},
    mprinc/.style={postaction={decorate},decoration={markings,mark=at position .4 with {\node [inner sep= 0.5pt, anchor= center,circle,draw=brown, fill=brown, minimum size = 0.8pt, solid, line width=0.10ex]{};}}},
    mprincpar/.style={postaction={decorate},decoration={markings,mark=at position .6 with {\node [inner sep= 0.5pt, anchor= center,circle,draw=brown, fill=brown, minimum size = 0.8pt, solid, line width=0.10ex]{};}}},
    pure/.style={->, shorten <=1pt, shorten >=1pt, draw=brown, line width=0.18ex},
    pureRev/.style={<-, shorten <=1pt, shorten >=1pt, draw=brown, line width=0.18ex},
    exppure/.style={->, shorten <=1pt, shorten >=1pt, draw=cyan, line width=0.18ex, densely dotted},
    exppureRev/.style={<-, shorten <=1pt, shorten >=1pt, draw=cyan, line width=0.18ex, densely dotted},    genexppure/.style={preaction={decorate},decoration={markings,mark=at position .5 with {\draw [shorten >=0pt, shorten <=0pt,draw=cyan,-](0pt,3pt) -- (0pt,-3pt);}},->, shorten <=1pt, shorten >=1pt, draw=cyan, line width=0.18ex, densely dotted},
    enaryedge/.style={preaction={decorate},decoration={markings,mark=at position .5 with {\draw [shorten >=0pt, shorten <=0pt,draw=cyan,-](0pt,3pt) -- (0pt,-3pt);}}},
    eprincn/.style={postaction={decorate},decoration={markings,mark=at position .4 with {\node [inner sep= 0.7pt, anchor= center,circle,draw=blue, fill=blue, minimum size = 0.8pt, solid, line width=0.10ex]{};}}},
    mprincn/.style={postaction={decorate},decoration={markings,mark=at position .6 with {\node [inner sep= 0.7pt, anchor= center,circle,draw=red, fill=white, minimum size = 0.8pt, solid, line width=0.10ex]{};}}},
    mprincnpar/.style={postaction={decorate},decoration={markings,mark=at position .4 with {\node [inner sep= 0.7pt, anchor= center,circle,draw=red, fill=white, minimum size = 0.8pt, solid, line width=0.10ex]{};}}},
    puren/.style={->, shorten <=1pt, shorten >=1pt, draw=red, line width=0.18ex},
    purenRev/.style={<-, shorten <=1pt, shorten >=1pt, draw=red, line width=0.18ex},
    exppuren/.style={->, shorten <=1pt, shorten >=1pt, draw=blue, line width=0.18ex, densely dotted},
    exppurenRev/.style={<-, shorten <=1pt, shorten >=1pt, draw=blue, line width=0.18ex, densely dotted},    genexppuren/.style={preaction={decorate},decoration={markings,mark=at position .5 with {\draw [shorten >=0pt, shorten <=0pt,draw=blue,-](0pt,3pt) -- (0pt,-3pt);}},->, shorten <=1pt, shorten >=1pt, draw=blue, line width=0.18ex, densely dotted},
    enaryedge/.style={preaction={decorate},decoration={markings,mark=at position .5 with {\draw [shorten >=0pt, shorten <=0pt,draw=blue,-](0pt,3pt) -- (0pt,-3pt);}}},
%
  jboxline/.style={draw= gray,rounded corners, line width=0.20ex, overlay},
  exboxline/.style={draw= gray,line width=0.20ex, overlay},
  noboxline/.style={draw= white,rounded corners, line width=0ex},
%
  net/.style={draw=gray,inner sep=2pt,thick,ellipse, anchor=center, font=\scriptsize},
  inductiveTr/.style={draw=black!50, minimum size=0.9cm},
  inductiveTrSmall/.style={draw=black!50, minimum size=0.6cm},
%
  every label/.style={label distance = 1pt, font=\scriptsize, inner sep= 1pt},  
  every node/.style={font=\scriptsize }
}

\newcommand{\stalt}{22pt}
\newcommand{\stlar}{15pt}
\newcommand{\hstalt}{\stalt/2}
\newcommand{\hstlar}{\stlar/2}





\newcommand{\lUnarySymbol}[5]{
\node at ($(#2.center) ! .5 ! (#1.center)$) [#5] (#3){ #4};
}

\newcommand{\lUnary}[5]{
\lUnarySymbol{#1}{#2}{#3}{#4}{etic}

\draw[#5] (#2) to (#3);
\draw[#5] (#3) to (#1);
}


\newcommand{\lBinSymbol}[6]{
\node at ($(#2.center) ! .5 ! (#3.center) ! .5 ! (#1.center)$) [#6] (#4){#5};
}

\newcommand{\lBinEdgesAbove}[5]{
\draw[#4, in=150, out=-90] (#1) to (#3);
\draw[#5, in=30, out=-90] (#2) to (#3);
}

\newcommand{\lBinaryWithEdgeAboveTypes}[8]{
\lBinSymbol{#1}{#2}{#3}{#4}{#5}{etic}

\lBinEdgesAbove{#2}{#3}{#4}{#7}{#8}

\draw[#6] (#4) to (#1);
}

\newcommand{\lBinary}[6]{
\lBinaryWithEdgeAboveTypes{#1}{#2}{#3}{#4}{#5}{#6}{#6}{#6}
}


\newcommand{\sepbox}{2pt}

\newcommand{\boxnodes}[4]{
\node at (#1.center)[left = #2,nospace](#1so){};
\node at (#1.center)[right = #3,nospace](#1se){};
\node at (#1se.center)[above=#4,nospace](#1ne){};
\node at (#1ne-|#1so)[nospace](#1no){};}

\newcommand{\boxline}[2]{
\draw[#2](#1.center) -- (#1se.center) -- (#1ne.center) -- (#1no.center) -- (#1so.center)--(#1.center);}

\newcommand{\abox}[5]{
\boxnodes{#1}{#3}{#4}{#5}
\boxline{#1}{#2line}}






\newcommand{\lder}[3]{
\lUnary{#1}{#2}{#3}{$\der$}{nopol}
}

\newcommand{\lcontr}[4]{
\lBinary{#1}{#2}{#3}{#4}{$\contr$}{nopol}}


\newcommand{\gdotsname}[4]{
\node at ($(#1)!.5!(#2)$) [#3](#4){\tiny $\ldots$};
}

\bibliographystyle{alpha}

\usepackage{microtype}
\usepackage{versions}
\excludeversion{SHORT}
\includeversion{LONG}

\begin{document}

\title{The Abstract Machinery of Interaction}

\author{Beniamino Accattoli\and Ugo Dal Lago\and Gabriele Vanoni}

\date{}

\maketitle

\begin{abstract}
This paper revisits the Interaction Abstract Machine (IAM), a machine 
based on Girard's Geometry of Interaction, introduced by Mackie and Danos 
\& Regnier. It is an unusual machine, not relying on environments, presented 
on linear logic proof nets, and whose soundness proof is convoluted and passes 
through various other formalisms. Here we provide a new direct proof of its 
correctness, based on a variant 
of Sands's improvements, a natural notion of bisimulation. Moreover, our proof is 
carried out on a new presentation of the IAM, defined as 
	a machine acting directly on $\lambda$-terms, rather than on
	linear logic proof nets.

\end{abstract}

\section{Introduction}

The advantage, and at the same time the drawback, of the $\l$-calculus is its 
distance from low-level, implementative details. It comes with just one rule, 
$\beta$-reduction, and with no indications about how to implement it on 
low-level machines. It is an 
advantage when \emph{reasoning} about programs expressed as $\l$-terms. It is a 
drawback, instead, when one wants to \emph{implement} the $\l$-calculus, or to do 
complexity analyses, because $\beta$-steps are far from being atomic 
operations. 
In particular, terms can grow exponentially with the number of $\beta$-steps, a 
degeneracy known as \emph{size explosion}, which is why $\beta$-reduction 
cannot be reasonably implemented, at least if one sticks to an explicit 
representation of $\lambda$-terms.

\paragraph{Environment Machines} Implementations solve this issue by evaluating the 
$\l$-calculus up to \emph{sharing of sub-terms}, where sharing is realized 
through a data structure called \emph{environment}, collecting the sharing 
annotations generated by the machine during the execution, one for each 
encountered $\beta$-redex. For common \emph{weak} evaluation strategies (\ie 
that do not inspect in scope of $\lambda$-abstractions) such as 
call-by-name/value/need, the number of $\beta$-steps is a reasonable time cost 
model 
\cite{DBLP:conf/fpca/BlellochG95,DBLP:conf/birthday/SandsGM02,DBLP:journals/tcs/LagoM08}.
 Environment machines---whose most famous examples are Landin's 
SECD~\cite{landin_correspondence_1965}, Felleisen and Friedman's 
CEK~\cite{felleisen_control_1986} or Krivine's
KAM~\cite{krivine_call-by-name_2007}---can be extended to open terms and optimized in such a way that they 
run within a linear overhead with respect to the number of $\beta$-steps 
\cite{DBLP:conf/lics/AccattoliC15,DBLP:conf/fsen/AccattoliG17}. Said 
differently, they \emph{respect} the time cost model (see~\cite{DBLP:journals/entcs/Accattoli18} for an overview). For space, the situation is different. 
Only very recently the problem has been 
tackled~\cite{cbv_reasonable} and some preliminary and limited results have 
appeared. Then, environment machines store information for every 
$\beta$-step, therefore using space linear in time, which is \emph{the worst 
possible use of space}\footnote{On sequential models space cannot exceed time, 
as one needs a unit of time to use a unit of space.}.

\paragraph{Beyond Environments} In practice, frameworks based on the $\l$-calculus 
are invariably implemented using environments. Nonetheless, the lack of a fixed 
execution schema for the $\l$-calculus leaves open, in theory, the possibility 
of alternative implementation schemes. The theory of linear logic indeed 
provides a completely different style of abstract machines, rooted in Girard's 
Geometry of Interaction~\cite{girard_geometry_1989} (shortened to GoI in the following). These 
GoI machines were pioneered by Mackie and Danos \& Regnier in the nineties 
\cite{mackie_geometry_1995,danos_reversible_1999}. The basic idea is that the 
machine does not use environments, while it keeps track of information that 
allows \emph{retrieving} previous $\beta$-redexes, by using a data structure 
called \emph{token}, saving information about the history of the computation. 
The key point is that the token does not store  information about every single 
$\beta$-redex, thus disentangling space-consumption from time-consumption. In 
other words, GoI machines are good candidates for space-efficient 
implementation schemes, as first shown by Sch\"opp and coauthors 
\cite{bllspace,dal_lago_computation_2016}. The price to pay is that the machine 
wastes a 
lot of time to retrieve $\beta$-redexes, so that time is sacrificed for space. 
The same, however, happens with space-sensitive Turing machines.

\paragraph{The Interaction Abstract Machine} The original GoI machine is the 
\emph{Interaction Abstract Machine} (IAM). It was developed at the same time by Mackie and Danos \& Regnier, and its first appearance is in a paper by Mackie in 1995 \cite{mackie_geometry_1995}, dealing with implementations. Danos and Regnier study it in two papers, one in 1996 together with Herbelin
\cite{DBLP:conf/lics/DanosHR96}, where it is dealt with quickly, and
its implementation theorem (or correctness\footnote{The result that an abstract machine implements a strategy is sometimes called \emph{correctness} of the machine. We prefer to avoid such a terminology, because it suggests the existence of a dual \emph{completeness} result, that is never given because already contained in the statement of correctness. We then simply talk of an \emph{implementation theorem}.}) is proved via game semantics, and one
 by themselves \cite{danos_reversible_1999}, published only in 1999 but reporting work dating back of a few years, dedicated to the IAM and to
 an optimization based on a fine analysis of IAM
runs. These papers differ on many details but they all formulate the
IAM on linear logic proof-nets as a reversible, bideterministic
automaton.


In \cite{DBLP:conf/lics/DanosHR96}, Danos, Herbelin, and Regnier prove
that the IAM implements \emph{linear head evaluation}
$\rightarrow_{\mathtt{LHE}}$ (shortened to LHE), a refinement of head
evaluation, arising from the linear logic decomposition of the
$\l$-calculus.
Their proof of the implementation theorem for the IAM---the only one in the literature---is
indirect and rooted in game semantics, as it follows from a sequence
of results relating the IAM to AJM games, AJM games to HO games, HO
games to another abstract machine, the PAM, and finally the PAM to
LHE.  Moreover, the proof is technical, the main ingredients unclear,
and it is not as neat as for environment machines.


\paragraph{New Proof of the Implementation Theorem} The main contribution of this paper is an 
alternative proof of the implementation theorem for the IAM, which is 
independent of game semantics and of other abstract machines. Our proof is 
direct and based on a simple tool, namely a variation over 
Sands' 
\emph{improvements} \cite{DBLP:journals/toplas/Sands96}, a natural notion of 
bisimulation. 

The implementation theorem of GoI machines amounts to showing that their result 
is an adequate  and sound semantics for LHE, that is, it is invariant by LHE 
(soundness) and it exists if and only if LHE terminates (adequacy). The key 
point for soundness is that---in contrast to the study of environment 
machines---steps of the GoI machine are not mapped to LHE steps, because the 
GoI computes differently. What is shown is that if $\tm 
\rightarrow_{\mathtt{LHE}} \tmtwo$ then the run of the machine on $\tm$ is 
'akin' to the run on $\tmtwo$, and they produce the same result---see 
\refsect{soundness-intro} for more details.

In our proof, 'akin' is naturally interpreted as being \emph{bisimilar}. An \emph{improvement} is a bisimulation asking that the run on $\tmtwo$ is no longer than the run on $\tm$. Building on such a quantitative refinement, we prove adequacy.

The proof of our implementation theorem is arguably conceptually simpler than Danos, Herbelin, and Regnier's. Of course, their deep connection with game semantics is an important contribution that is not present here. We believe, however, that having independent and simpler proof techniques is also valuable.



\paragraph{The Lambda Interaction Abstract Machine} The second contribution of 
the 
paper is a formulation of the IAM as a machine acting directly
on $\lambda$-terms rather than on linear logic proof nets. Our proof might also 
have been carried out on proof nets, but we prefer switching to $\l$-terms for 
two reasons. First, manipulating terms rather than proof nets is easier and 
less error-prone for the technical development. Second, we aim at minimizing 
the background required for understanding the IAM, and so doing we remove 
any explicit reference to linear logic and graphical syntaxes.

The starting point of our \emph{Lambda} Interaction 
Abstract Machine ($\IAM$) is seeing a position in the code $\tm$ 
(what is usually the position of the token on the proof net representation of 
$\tm$) as a pair $(\tmtwo, \ctx)$ of a sub-term $\tmtwo$ and a context $\ctx$ 
such that $\ctxp\tmtwo = \tm$. These positions are simply a readable 
presentation of pointers\footnote{For the acquainted reader, they play a role 
akin to the initial labels in \levy's labeled $\l$-calculus, itself having deep connections 
with the IAM~\cite{DBLP:conf/lics/AspertiDLR94}.}. 

The main novelty of the new presentation is that some of the exponential transitions on proof nets are packed together in macro transitions. The shape of our transitions makes a sort of backtracking mechanism more evident. \emph{Careful}: that the IAM rests on backtracking is the key point of Danos and Regnier in \cite{danos_reversible_1999}, and therefore it is not a novelty in itself. What is new is that such a mechanism is already visible at the level of transitions, while on proof nets it requires a sophisticated analysis of runs.

It may be argued that linear logic provides a useful conceptual 
framework for the GoI. While this is undeniable, we are trying to show 
that such a framework is however not needed, and that an alternative 
presentation provides other useful intuitions---the two presentations give 
different insights, and thus complement each other. The easy correspondence 
between the two is stated in \refsect{pn}. 



\paragraph{More About the $\IAM$} The original papers on the IAM \cite{mackie_geometry_1995,DBLP:conf/lics/DanosHR96,danos_reversible_1999} differ on many points. Here we follow \cite{DBLP:conf/lics/DanosHR96}, modelling the $\IAM$ on the call-by-name translation of the $\l$-calculus in 
linear logic and considering only the path/runs starting on the distinguished conclusion corresponding to the output of 
the net/term. This is natural for terms, and also along the lines of 
how AJM games interpret terms. Similarly to AJM games, then, our GoI 
semantics is sound also for open terms with respect to erasing steps.
 
An original point of our work is the identification of a 
 new invariant of the $\IAM$---probably of independent interest---based on what 
 we call \emph{exhaustible states}. Informally, a state of the
 $\IAM$ is exhaustible if its token can be emptied
 in a certain way, somehow mimicking the computation which leads to the
 state itself. The invariant is an essential ingredient of the proof of soundness.

\paragraph{This Paper in Perspective} This paper is one of the last chapters of 
a long-time endeavor by the authors
directed at understanding complexity measures and implementation schemas for 
the $\l$-calculus. We provide a new proof technique for GoI implementation theorems not relying on game semantics, together with an new presentation of 
the original machine by Mackie and Danos \& Regnier not relying on linear logic. The aim is to set the ground for 
a formal, robust, and systematic study of GoI machines and their complexity, 
while at the same time shrinking to the minimum the required background. A 
further motivation behind our work is the desire to make the study of GoI 
machines easier to formalize in proof assistants, as proof nets are 
particularly cumbersome in that respect.

\paragraph{Related Work on GoI} This is certainly \emph{not} the first paper on the 
GoI and the $\lambda$-calculus.  Indeed, the literature on the topic
and its applications is huge, and goes from Girard's original
papers~\cite{girard_geometry_1989}, to Abramsky \emph{et al}'s
reformulation using the
INT-construction~\cite{abramsky_geometry_2002}, Danos and Regnier's
using path algebras~\cite{danos_local_1993}, Ghica's applications to
circuit synthesis~\cite{ghica_geometry_2007}, together with extensions
by Hoshino, Muroya, and Hasuo to languages with various kinds of
effects \cite{hoshino_memoryful_2014}, and Laurent's extension to the
additive connectives of linear
logic~\cite{DBLP:conf/tlca/Laurent01}. In all these cases, the GoI
interpretation, even when given on $\lambda$-terms, goes
\emph{through} linear logic (or symmetric monoidal categories) in an
essential way.  The only notable exceptions are perhaps the recent
contributions by Sch\"opp on the relations between GoI, CPS, and
defunctionalization~\cite{Schopp14,Schopp15} in which, indeed, some
deep relations are shown to exist between GoI and classic tools in the
theory of $\lambda$-calculus. Even there, however, GoI is seen as
obtained through the 
INT-construction~\cite{joyal_street_verity_1996,abramsky_geometry_2002},
although applied to a syntactic category of terms.

The GoI has also been studied in relationship with implementation of
functional languages, by Gonthier, Abadi and Levy as a proof
methodology in the study of optimal
implementations~\cite{gonthier_geometry_1992}, and by Mackie with his
GoI machine for \textsf{PCF}~\cite{mackie_geometry_1995} and G\"odel
System $\mathsf T$ \cite{DBLP:conf/wollic/Mackie17}.  Recently, the
space-efficiency studied by Dal Lago and
Sch\"opp~\cite{dal_lago_computation_2016} has been exploited by Mazza
in \cite{DBLP:conf/csl/Mazza15} and, together with Terui, in
\cite{DBLP:conf/icalp/MazzaT15}. Dal Lago and coauthors have also
introduced variants of the IAM acting on proof nets for a number of
extensions of the $\l$-calculus
\cite{DBLP:conf/csl/LagoFHY14,DBLP:conf/lics/LagoFVY15,DBLP:conf/popl/LagoFVY17,DBLP:conf/lics/LagoTY17}. Curien
and Herbelin study abstract machines related to game semantics and the
IAM in \cite{CurienHFlops,DBLP:journals/corr/abs-0706-2544}.  Muroya
and Ghica have recently studied the GoI in combination with rewriting
and abstract machines in \cite{DBLP:conf/csl/MuroyaG17}. The already
cited works by Sch\"opp~\cite{Schopp14,Schopp15} highlight how GoI can
be seen as an optimized form of CPS transformation, followed by
defunctionalization.


\paragraph{Related Work on Environment Machines} 
The \emph{time} efficiency of environment machines has been recently
closely scrutinized. Before 2014, the topic 
had 
been mostly neglected---the only two counterexamples being Blelloch and Greiner in 
1995 \cite{DBLP:conf/fpca/BlellochG95} and Sands, Gustavsson, and Moran in 
2002 \cite{DBLP:conf/birthday/SandsGM02}. Since 2014---motivated by advances by 
Accattoli and Dal Lago on time cost models for the $\l$-calculus 
\cite{accattoli_leftmost-outermost_2016}---Accattoli and co-authors have 
explored time analyses of environment machines from different angles 
\cite{DBLP:conf/icfp/AccattoliBM14,DBLP:conf/fsen/AccattoliG17,DBLP:conf/ppdp/AccattoliB17,DBLP:conf/ppdp/AccattoliCGC19}.



\section{A Gentle Introduction to the Geometry of 
	Interaction}\label{sec:example}
This section is an informal introduction to Girard's
Geometry of Interaction as implemented by the $\IAM$,
the abstract machine we are introducing in this paper.  Many details are left out, and shall be covered in the next sections.
\medskip

\paragraph{Preliminaries.} 
The $\IAM$ implements \emph{head evaluation}, the simple reduction defined as:
\[
\la{\var_1} \ldots \la{\var_k} (\la\vartwo \tm) \tmtwo \tmthree_1 \ldots 
\tmthree_h \ \ \toh \ \ \la{\var_1} \ldots \la{\var_k} \tm \isub\vartwo \tmtwo 
\tmthree_1 \ldots \tmthree_h.
\]
The meaning of ``implement'' is explained a bit here, and more extensively in 
\refsect{soundness-intro}. Moreover, the $\IAM$ rather implements a linear 
variant of $\toh$, but for now the difference does not matter. 

An essential point is that the initial code $\tm$ of the machine never changes. 
The $\IAM$ only moves over it, in a local way, with no rewriting of the code 
and without ever substituting terms for variables. The current position in the 
code $\tm$ is represented as a pair $(\tmtwo,\ctx)$ where $\ctx$ is a context 
(that is, a term with a hole)
and $\ctx\ctxholep\tmtwo=\tm$. 

Beyond the current position, the state of the machine 
also contains the \emph{token}, which is 
given by two stacks, called \emph{log} and \emph{tape} respectively. The log 
is dedicated exclusively to the internal functioning of the machine. The tape, 
additionally, has an input/output role. Their functioning shall be explained 
soon.

Environment machines are either weak (that is, 
\emph{never} enter abstractions) or strong (they enter into \emph{all} 
abstractions). In contrast, the $\IAM$ is \emph{incrementally strong}, \ie, it has a finer mechanism for entering into 
\emph{some} abstractions. The number of head abstractions that a run of 
the $\IAM$ can cross, called the \emph{depth} of the run, is specified at the 
beginning by the content of the tape, coded in unary: depth $n$ is represented 
with $n$ occurrences of the distinguished symbol $\resm$. Note the difference 
with environment machines: once the code $\tm$ is fixed, such machines have 
only one initial state, while the $\IAM$ has a \emph{family} of initial states, 
one for each depth.

A tricky point is that, given $\tm$, the $\IAM$ does not compute the whole head 
normal form $\hnf\tm$ of $\tm$, but only the head variable of $\hnf\tm$. This 
is very much in accordance with the idea of head evaluation, in which the 
arguments of the head variable are never touched. More about this shall be 
explained in \refsect{soundness-intro}.

Before giving an example run, we need one last concept. Beyond the current position and the token, a $\IAM$ 
state has a \emph{direction}, $\downp$ or $\upp$. When the direction is downwards ($\downp$), the machine looks 
for the head variable of the subterm. When it is upwards ($\upp$), the 
$\IAM$ looks for the argument the found head variable would be substituted for 
under head evaluation (explanations below).

\medskip

\paragraph{An Example of $\IAM$ run.} Suppose one wants to evaluate the term 
$\tm\defeq((\la z\la x x)w)(\la y y)$,
 whose head normal form is $\la y y$. We know that the head variable $\vartwo$ 
 in $\hnf\tm$ is under one abstraction. 
 Then, to find it, we have to run the $\IAM$ at depth 1, that is, starting from
 $(\red\tm,\ctxhole,\epsilon,\resm,\downp)$, that is, on position 
 $(\tm,\ctxhole)$, with empty log, with $\resm$ on the tape, and direction 
 $\downp$. We expect $\vartwo$ as the result 
 of the run.
 

Let's then consider the first four transitions of such a computation, that perform a visit of the
leftmost branch of $\tm$, called the \emph{spine}, until a variable is found.
\[{\footnotesize
\begin{array}{c|c|c|c|c}
\mathsf{Sub}\mbox{-}\mathsf{term} & \mathsf{Context} & \mathsf{\Log} & 
\mathsf{Tape} & \mathsf{Dir}
\\
		\cline{1-5}
\ndstatetab{((\la z\la x x)w)(\la y y)} {\ctxhole} {\resm} {\epsilon} 
\downp\\
\ndstatetab{(\la z\la x x)w} {\ctxhole(\la y y)} {\resm\cdot\resm} {\epsilon} 
\downp\\
\ndstatetab{\la z\la x x} {(\ctxhole w)(\la y y)} {\resm\cdot\resm\cdot\resm} 
{\epsilon} 
\downp\\
\ndstatetab{\la x x} {((\la z\ctxhole) w)(\la y y)} {\resm\cdot\resm} 
{\epsilon} \downp\\
\ndstatetab{x} {((\la z\la x\ctxhole) w)(\la y y)} {\resm} {\epsilon} 
\downp
\end{array}}
\]
Note the \emph{pushing and popping of $\resm$}:
  one of the tasks of the tape is to account for the
  abstractions and applications encountered along the spine:
  the symbol $\resm$ is pushed on applications, and
  pulled on abstractions (when the direction is $\downp$), so that the crossing of a $\beta$-redex 
  leaves the tape unchanged. We shall say that the \emph{$\IAM$ searches up to $\beta$-redexes}.
  Note also that, contrary to environment machines,
  arguments of the encountered $\beta$ redexes are \emph{not} saved,
  this way saving space, and disentangling space from time.

Once in the state $\dstate{x} {((\la z\la x\ctxhole) w)(\la y y)} {\resm} 
{\epsilon} $, the $\IAM$ switches to phase $\upp$, and starts to 
check whether $\var$ would be substituted during head evaluation. In
the KAM, it is enough to look up the environment, while in the
$\IAM$, this needs to be reconstructed, because encountered $\beta$-redexes were not recorded. This is done by the
next four steps, where again the search is up to $\beta$-redexes.
\[{\footnotesize 
\begin{array}{c|c|c|c|c}
\mathsf{Sub}\mbox{-}\mathsf{term} & \mathsf{Context} & \mathsf{\Log} & 
\mathsf{Tape} & \mathsf{Dir}
\\
		\cline{1-5}

  \ndstatetab{x} {((\la z\la x\ctxhole) w)(\la y y)} {\resm} {\epsilon} 
  \downp \\
  \nustatetab{\la x x} {((\la z\ctxhole) w)(\la y y)} 
  {(x,\la x\ctxhole,\epsilon)\cdot\resm} {\epsilon}{\upp} \\
  \nustatetab{\la z\la x x} {(\ctxhole w)(\la y y)} 
  {\resm\cdot(x,\la x\ctxhole,\epsilon)\cdot\resm} 
  {\epsilon}{\upp}\\
  \nustatetab{(\la z\la x x)w} {\ctxhole(\la y y)} 
  {(x,\la x\ctxhole,\epsilon)\cdot\resm} {\epsilon} 
  \upp\\
  \ndstatetab{\la y y} {((\la z\la x x)w)\ctxhole} {\resm} 
  {(x,\la x\ctxhole,\epsilon)}{\downp} \\
\end{array}}
\]
Some further crucial aspects of the $\IAM$ show up here.
\begin{itemize}
\item \emph{Phases}:
  the $\IAM$ starts looking for the term that may be substituted for $x$, 
  from a natural place, 
  namely the $\lambda$-abstraction
  binding $\var$. One needs to keep track of which
  of the (possibly many) occurrences of the bound variable one is coming from.
  This is done by simply pushing on the tape the position of the found occurrence of $\var$ (w.r.t. its binder), and by switching the machine in \emph{upward} mode $\upp$. 
  \item \emph{Locality}: transitions are \emph{local} in the sense that they 
  move between contiguous positions. Note that also the transition from the 
  variable occurrence to the binder is local if $\l$-terms are represented by 
  implementing occurrences as pointers to their binders, as in the proof net 
  representation of $\l$-terms, see \refsect{pn} for a precise comparison.

  \item \emph{Log}:
  the upward journey is guided by the context (note the \blue{blue} color). In the example, a term that would be substituted is found, namely $\la\vartwo\vartwo$. Observe 
  that the \emph{log}
  gets touched for the first time. Roughly, it saves the information that the 
  sub-term $\la\vartwo\vartwo$ currently under evaluation is meant to replace 
  the occurrence of position $(x,\la x\ctxhole)$, even if such replacement 
  never happens. The log keeps enough information as to 
  potentially backtrack to 
the position in its entry, called \emph{\trpos}, as it shall be explained in the next section.

  \item \emph{Succeed or iterate}: in general, if the machine finds no term to 
  substitute on $\var$, then the \trpos shall not be removed from the tape, providing the result of the run---that is, the head variable. 
 If instead 
  a term $\tmtwo$ to substitute is found, as in the example, then the process 
  starts over, switching to $\downp$ phase and looking for the head variable 
  of $\tmtwo$.
\end{itemize}
Once the argument $\la y y$ is found, the $\IAM$ now looks for its head 
variable $y$. Please note that this is possible because of the $\resm$ on the tape. Otherwise, \ie if the initial state were 
$\dstate\tm\ctxhole\epsilon\epsilon$, the $\IAM$ would be stuck in this 
\emph{final} 
state, signaling that $\tm\toh\la y\tmtwo$ for a term $\tmtwo$. Indeed, each 
$\resm$ in the initial state allows for the inspection of one head lambda 
of $\hnf\tm$.
\[{\footnotesize
\begin{array}{c|c|c|c|c}
\mathsf{Sub}\mbox{-}\mathsf{term} & \mathsf{Context} & \mathsf{\Log} & 
\mathsf{Tape} & \mathsf{Dir}
\\
\cline{1-5}

\ndstatetab{\la y y} {((\la z\la x x)w)\ctxhole} {\resm} 
{(x,\la\var\ctxhole,\epsilon)}{\downp} \\
\ndstatetab{y} {((\la z\la x x)w)(\la y\ctxhole)} {\epsilon} 
{(x,\la\var\ctxhole,\epsilon)}{\downp} \\
\nustatetab{\la y y} {((\la z\la x x)w)\ctxhole} {(y,\la\vartwo\ctxhole,\epsilon)} 
{(x,\la\var\ctxhole,\epsilon)}{\upp} \\
{\cdots}&{\cdots}&{\cdots}&{\cdots}&{\cdots}\\
\nustatetab{\tm} {\ctxhole} {(y,\la\vartwo\ctxhole,\epsilon)} 
{\epsilon} \upp
\end{array}}
\]
The head variable $y$ is found in two steps. After that, the machine switches 
to $\upp$ phase and runs again though the same path, thus arriving again at the 
root of the term $\tm$. The $\IAM$ then stops and gives its output: $y$, the 
head variable of $\hnf\tm$, is on the tape.


\section{The Lambda Interaction Abstract Machine}\label{sec:iam}
In this section we introduce the data structures used by the $\IAM$ and its transition rules. 

\paragraph{Terms and Levelled Contexts.} Let $\mathcal{V}$ be a countable set 
of variables. 
Terms of the \emph{$\lambda$-calculus} are defined as follows.
\[\begin{array}{rrcl}
   \textsc{$\l$-terms}\quad\quad & \tm,\tmtwo,\tmthree & \grameq & x\in\mathcal{V}\midd \lambda x.\tm\midd 
\tm\tmtwo.
  \end{array}
\]
\emph{Free} and \emph{bound variables} are defined as 
usual: $\la\var\tm$ binds $\var$ in $\tm$.
Terms are considered modulo $\alpha$-equivalence, and $\tm\isub\var\tmtwo$ denotes capture-avoiding (meta-level) substitution of 
all the free occurrences of $\var$ for $\tmtwo$ in $\tm$. 

The study of the $\IAM$ requires \emph{contexts}, that are terms with a single 
occurrence of a special constant $\ctxhole$, called \emph{the hole}, that is a 
place-holder for a removed sub-term. In fact, we need a notion of 
context more informative than the usual one, introduced 
next.
\[
\begin{array}{cc}
\textsc{Leveled contexts}\quad\quad & 
\begin{array}{rcl}
\ctx_0		& \grameq &	\ctxhole \midd \la\var\ctx_0 \midd \ctx_0\tm;\\
\ctx_{n+1}	& \grameq &	\la\var\ctx_{n+1}\midd \ctx_{n+1}\tm \midd \tm\ctx_{n}.
\end{array}
\end{array}
\]
The index $n$ in $\ctx_n$ counts the number of arguments into which the 
hole $\ctxhole$ is contained in $\ctx_n$. Such an index has a natural 
interpretation in linear logic terms. According to the standard (call-by-name) 
translation of the $\lambda$-calculus into linear logic proof nets, in a 
context $\ctx_n$ the hole lies inside
exactly $n$ $!$-boxes. Contexts of level 0 are also called \emph{head contexts} 
and are 
denoted by $\hctx,\hctxtwo,\hctxthree$. The level of a context shall be omitted 
when not relevant to the discussion---note that any ordinary 
context can be written \emph{in a unique way} as a leveled context, so that the 
omission is anyway harmless. 

The \emph{plugging} 
$\ctx_n\ctxholep\tm$ of a term 
$\tm$ in $\ctx_n$ is defined by replacing the hole $\ctxhole$ with $\tm$, 
potentially capturing free variables of $\tm$. Plugging 
$\ctx_n\ctxholep{\ctx_m}$ of a context for a context is defined similarly. 
A \emph{position} (of 
level $n$) in a term $\tmtwo$ is a pair $(\tm,\ctx_{n})$ such that 
$\ctx_n\ctxholep{\tm}=\tmtwo$.

%

\paragraph{Logs and \TrPoss.} The $\IAM$ relies on two mutually recursive 
notions, namely \emph{\trposs} and \emph{logs}: a \trpos is a position $(\tm, 
\ctx_n)$ together with a log\footnote{In computer science logs are traces that 
can only grow, while here they also shrink. The terminology suggests a tracing 
mechanism---\emph{trace} is avoided because related to categorical formulations 
of the GoI.} $\ste_n$, that is a list of \trposs, having length $n$.
\[
\begin{array}{c@{\hspace{2cm}} c@{\hspace{.8cm}} c}
\textsc{\TrPoss}
&\multicolumn{2}{c}{\textsc{\Logs}}
\\
\exps  \grameq (\tm, \ctx_n,\ste_{n}) 
&
\ste_0  \grameq  \epsilon
&
\ste_{n+1}  \grameq  \exps\cdot\ste_n
\end{array}
\]
The set of \trposs is $\expset$, and we use $\cdot$ also to concatenate logs, 
writing, \eg, $\ste_n\cdot\ste$, using $\ste$ for a log of unspecified length. 
Intuitively, logs contain some minimal information for backtracking to the 
associated position.\medskip

\paragraph{Tape, Token, Direction, State.} 
The \emph{tape} $\stme$ is a finite sequence of elements of two kinds, namely \trposs, and occurrences of the special symbol $\resm$, needed to cross abstractions 
and applications. A \emph{token} is a log plus a tape. A machine \emph{state} 
is given by a position and a token, together with a mode of operation called 
\emph{direction}.
\begin{definition}[$\IAM$ State] 
	A state $\state$ of the $\IAM$ is a quintuple $(\tm,\ctx,\ste,\stme,\pol)$ where:
	\begin{enumerate}
		\item $\tm$ is a $\lambda$-term: the \emph{code term};
		\item $\ctx$ is a context: the \emph{code context};
		\item $\ste$ is an element of $\expset^*$: the \emph{log};
		\item $\stme$ is an element of $(\set{\resm}\cup\expset)^*$: the \emph{tape};
		\item $\pol$ is an element in $\polset=\{\upp,\downp\}$: the \emph{direction}.
	\end{enumerate}
\end{definition}
Directions shall be represented mostly via colors and underlining: the code 
term in \red{red} and underlined, to represent 
$\downp$, and the code context in \blue{blue} and underlined, to represent 
$\upp$. This way, 
the fifth component is often omitted.

\paragraph{Initial  States.} The $\IAM$ starts on \emph{initial states} of the 
form $\state_{\tm,k}\defeq (\tm,\ctxhole,\epsilon,\resm^k,\downp)$, where 
$\tm$ is a term, $k\geq 0$ is the \emph{depth} of the state, and 
$\epsilon$ is the empty log. Intuitively, the machine
evaluates the term $\tm$ being allowed to inspect up to $k$ 
$\lambda$-abstractions of the head normal form of $\tm$. 
Note that there are many initial states for a given term $\tm$, one for each tape $\resm^k$.
\begin{figure*}[t]
\centering
${\small \begin{array}{l@{\hspace{.6cm}} 
l@{\hspace{.6cm}}l@{\hspace{.6cm}}lll@{\hspace{.6cm}} 
l@{\hspace{.6cm}}l@{\hspace{.6cm}}l}
	\mathsf{Sub}\mbox{-}\mathsf{term} & \mathsf{Context} & \mathsf{Log} & \mathsf{Tape}
	&&	\mathsf{Sub}\mbox{-}\mathsf{term} & \mathsf{Context} & \mathsf{Log} & \mathsf{Tape}
	\\
		\hhline{=========}
		\\[-8pt]
    \dstatetab{ \tmtwo\tm }{ \ctx }{ \stme }{ \ste } &
    \iamdap &
    \dstatetab{ \tmtwo }{ \ctxp{\ctxhole\tm} }{ \resm\cdot\stme }{ \ste }
	\\[3pt]		
    
    \dstatetab{ \la\var\tm }{ \ctx }{ \resm\cdot\stme }{ \ste } &
    \iamdlamone &
    \dstatetab{ \tm }{ \ctxp{\la\var\ctxhole} }{ \stme }{ \ste }
	\\[3pt]  
	
	\dstatetab{ \var }{ \ctxp{\la\var\ctxtwo_n} }{ \stme }{ \expsn\cdot\ste } &
    \iamdvar &
    \ustatetab{ \la\var\ctxtwo_n\ctxholep\var}{ \ctx }{ (\var,\la\var\ctxtwo_n,\expsn)\cdot\stme }{ \ste }
	\\[3pt]
	
	\dstatetab{ \la\var\ctxtwo_n\ctxholep{\var} }{ \ctx }{ 
	(\var,\la\var\ctxtwo_n,\ste_n)\cdot\stme }{ \ste } &
    \iamdlamtwo &
    \ustatetab{ \var }{ \ctxp{\la\var\ctxtwo_n} }{ \stme }{ \ste_n\cdot\ste }
	 \\[3pt]
		\cline{1-9}
		\\[-8pt]
    \ustatetab{ \tmtwo }{ \ctxp{\ctxhole\tm} }{ \resm\cdot\stme }{ \ste } &
    \iamuapltwo &
    \ustatetab{ \tmtwo\tm }{ \ctx }{ \stme }{ \ste }
	\\[3pt]

    \ustatetab{ \tm }{ \ctxp{\la\var\ctxhole} }{ \stme }{ \ste } &
    \iamulam &
    \ustatetab{ \la\var\tm }{ \ctx }{ \resm\cdot\stme }{ \ste }
	\\[3pt]

    \ustatetab{ \tmtwo }{ \ctxp{\ctxhole\tm} }{ \exps\cdot\stme }{ \ste } &
    \iamuaplone &
    \dstatetab{ \tm }{ \ctxp{\tmtwo\ctxhole} }{ \stme }{ \exps\cdot\ste }
	 \\[3pt]
    
    \ustatetab{ \tm }{ \ctxp{\tmtwo\ctxhole} }{ \stme }{ \exps\cdot\ste } &
    \iamuapr &
    \dstatetab{ \tmtwo }{ \ctxp{\ctxhole\tm} }{ \exps\cdot\stme }{ \ste }
	\\
		\cline{1-9}
	\end{array}}$
	\vspace{-8pt}
	\caption{$\IAM$ transitions.}
	\label{tab:iam}
\end{figure*}

\paragraph{Transitions.} The transitions of the $\IAM$ are in 
Fig.~\ref{tab:iam}. Their union is noted $\toiam$. A \emph{run} is a potentially empty sequence of transitions. A state $s$ is \emph{reachable} if 
$s_{\tm,k} \toiam^*s$ for an initial state $s_{\tm,k}$ and it is \emph{final} if there 
exists no $\statetwo$ such that $\state \toiam \statetwo$. The shape of final 
states is characterized in Sect.~\ref{sec:prop}. 

The idea is that $\downp$-states 
$\dstate\tm\ctx\stme\ste$ are queries about the head variable of (the head normal form of) $\tm$ and 
$\upp$-states $\ustate\tm\ctx\stme\ste$ are queries about the argument of an abstraction.
Next, we explain how the transitions realize three entangled mechanisms of the machine.

\paragraph{Mechanism 1: Search Up to $\beta$-Redexes} Note that $\iamdap$ skips the 
 argument and adds a $\resm$ on the tape. The idea is that $\resm$ keeps track 
 that an argument has been encountered---its identity is however forgotten. 
 Then $\iamdlamone$ does the dual job: it skips an abstraction when the tape 
 carries a $\resm$, that is, the trace of a previously encountered 
 argument. This mechanism thus realizes search \emph{up to $\beta$-redexes}, 
 that is, without recording them and leaving the tape unchanged. Note that 
 $\iamuapltwo$ and $\iamulam$ realize the same during the $\upp$ phase.

\paragraph{Mechanism 2: Finding Variables and Arguments} When the head variable $\var$ of the active subterm is found, transition $\iamdvar$  switches from direction $\downp$ to $\upp$, and the machine starts looking for potential substitutions for $\var$. The $\IAM$ then moves to the position of the binder $\lambda\var$ of $\var$, and starts exploring the context $\ctx$, looking for the first argument up to $\beta$-redexes. The relative position of $\var$ w.r.t. its binder is recorded in a new \trpos that is added to the tape. Since the machine moves out of a context of level $n$, namely $\ctxtwo_n$, the \trpos contains the first $n$ \trposs of the log. Roughly, this is an encoding of the run that led from the level of $\la\var\ctxtwo_n\ctxholep\var$ to the occurrence of $\var$ at hand, in case the machine would later need to backtrack. 
 
 When the argument $\tm$ for the abstraction binding the variable $\var$ in 
 $\exps$ is found, transition $\iamuaplone$ switches direction from $\upp$ to $\downp$, making the machine looking for the head 
 variable of $\tm$. Note that moving to $\tm$, the level 
 increases, and that the \trpos $\exps$ is moved from the tape to the 
log. The idea is that $\exps$ is now a completed argument query, 
 and it becomes part of the history of how the machine got to the current 
 position, to be potentially used for backtracking.

\paragraph{Mechanism 3: Backtracking} It is started by transition $\iamuapr$. 
 The idea is that the search for an argument of the $\upp$-phase has to 
 temporarily stop, because there are no arguments left at the current level. 
 The search of the argument then has to be done among the arguments of the 
 variable occurrence that triggered the search, encoded in $\exps$. Then the 
 machine enters into backtracking mode, which is denoted by a $\downp$-phase 
 with a \trpos on the tape, to reach the position in $\exps$. 
 Backtracking is over when $\iamdlamtwo$ is fired. 
 
 The $\downp$-phase and the \trpos
 on the tape mean that the $\IAM$ is backtracking. In fact, in this configuration the machine is 
 not looking for the head variable of the current subterm $\la\var\tm$, it is 
 rather going back to the variable position in the tape, to find its 
 argument. This is realized by moving to the position in the tape and 
 changing direction. Moreover, the log $\ste_n$
 encapsulated in the \trpos is put back on the global log. An invariant 
 shall guarantee that the \trpos on the tape always contains a position 
 relative to the active abstraction.\begin{SHORT} We provide an example of a 
 $\IAM$ run that exhibits backtracking in the Appendix (\refsect{backtracking}).\end{SHORT}

\begin{LONG}\begin{example}
	We provide an example of a $\IAM$ run that exhibits backtracking. Let us consider the $\lambda$-term $\tm:=(\la\var\var\var)(\la\vartwo\vartwo)$. We evaluate $\tm$ according to weak head reduction, thus starting from the state $\dstate\tm\ctxhole\epsilon\epsilon$. The first steps of the computation are needed to reach the head variable, namely $\var$.
	\[{\footnotesize
	\begin{array}{c|c|c|c|c}
	\mathsf{Sub}\mbox{-}\mathsf{term} & \mathsf{Context} & \mathsf{\Log} & 
	\mathsf{Tape} & \mathsf{Dir}
	\\
	\cline{1-5}
	
	\ndstatetab{(\la x xx)(\la y y)} {\ctxhole} {\epsilon} {\epsilon} \downp\\
	\ndstatetab{\la x xx} {\ctxhole(\la y y)} {\resm} {\epsilon} 
	\downp\\
	\ndstatetab{xx} {(\la x \ctxhole)(\la y y)} {\epsilon} {\epsilon} \downp\\
	\ndstatetab{x} {(\la x \ctxhole x)(\la y y)} {\resm} {\epsilon} 
	\downp
	\end{array}}
	\]
	Once the head variable $\var$ has been found, the machine switches to upward mode $\upp$ in order to find its argument $\la\vartwo\vartwo$.
	\[{\footnotesize
	\begin{array}{c|c|c|c|c}
	\mathsf{Sub}\mbox{-}\mathsf{term} & \mathsf{Context} & \mathsf{\Log} & 
	\mathsf{Tape} & \mathsf{Dir}
	\\
	\cline{1-5}
	\ndstatetab{x} {(\la x \ctxhole x)(\la y y)} {\resm} 
	{\epsilon}{\downp} \\
	\nustatetab{\la x xx} {\ctxhole(\la y y)} 
	{(x,\la x \ctxhole x,\epsilon)\cdot\resm} {\epsilon}{\upp} \\
	\ndstatetab{\la y y} {(\la x xx)\ctxhole} {\resm} 
	{(x,\la x \ctxhole x,\epsilon)}{\downp} \\
	\end{array}}
	\]
	Intuitively, the first occurrence of $\var$ has been substituted for $\la\vartwo\vartwo$, thus forming a new virtual $\beta$-redex $(\la\vartwo\vartwo)x$. Indeed, a $\resm$ is on top of the tape, thus allowing the $\IAM$ to inspect $\la\vartwo\vartwo$, reaching its head variable $\vartwo$.
	\[{\footnotesize
	\begin{array}{c|c|c|c|c}
	\mathsf{Sub}\mbox{-}\mathsf{term} & \mathsf{Context} & \mathsf{\Log} & 
	\mathsf{Tape} & \mathsf{Dir}
	\\
	\cline{1-5}
	\ndstatetab{\la y y} {(\la x xx)\ctxhole} {\resm} 
	{(x,\la x \ctxhole x,\epsilon)}\downp \\
	\ndstatetab{y} {(\la x xx)(\la y \ctxhole)} {\epsilon} 
	{(x,\la x \ctxhole x,\epsilon)}\downp \\
	\nustatetab{\la y y} {(\la x xx)\ctxhole} 
	{(y,\la\vartwo\ctxhole,\epsilon)} {(x,\la x \ctxhole x,\epsilon)}\upp \\
	\end{array}}
	\]
	Once the head variable $y$ has been found, the machine, in upward mode $\upp$, starts looking for the argument of $y$ from its binder $\la\vartwo\vartwo$. However, $\la\vartwo\vartwo$ was not the left side of an application forming a $\beta$-redex. Indeed, it was virtually substituted for the first occurrence of $\var$, in the log, thus creating the virtual redex $(\la\vartwo\vartwo)x$. Its argument is thus the second occurrence of $\var$. The $\IAM$ is able to retrieve it, walking again the path towards the variable $\la\vartwo\vartwo$ has been virtually substituted for, namely the first occurrence of $x$, saved in the log. This is what we call backtracking.
	\[{\footnotesize
	\begin{array}{c|c|c|c|c}
	\mathsf{Sub}\mbox{-}\mathsf{term} & \mathsf{Context} & \mathsf{\Log} & 
	\mathsf{Tape} & \mathsf{Dir}
	\\
	\cline{1-5}
	\nustatetab{\la y y} {(\la x xx)\ctxhole} 
	{(y,\la\vartwo\ctxhole,\epsilon)} {(x,\la x \ctxhole x,\epsilon)}\upp \\
	\ndstatetab{\la x xx} {\ctxhole(\la y y)} 
	{(x,\la x \ctxhole x,\epsilon)\cdot(y,\la\vartwo\ctxhole,\epsilon)} {\epsilon}\downp \\
	\nustatetab{x} {(\la x \ctxhole x)(\la y y)} 
	{(y,\la\vartwo\ctxhole,\epsilon)} {\epsilon}\upp \\
	\ndstatetab{x} {(\la x x\ctxhole)(\la y y)} {\epsilon} 
	{(y,\la\vartwo\ctxhole,\epsilon)}\downp \\
	\end{array}}
	\]
	Notice that we are able to backtrack because we saved the occurrence of the substituted variable in the token, otherwise the machine would not be able to know  which occurrence of $x$ is the right one. Of course, when the first occurrence of $x$ is reached the $\IAM$, now again in upward mode $\upp$, finds immediately its argument, that is the second occurrence of $x$. At this point the machine looks for the argument of this last occurrence of $x$, finding, of course, again $\la\vartwo\vartwo$.
	\[{\footnotesize
	\begin{array}{c|c|c|c|c}
	\mathsf{Sub}\mbox{-}\mathsf{term} & \mathsf{Context} & \mathsf{\Log} & 
	\mathsf{Tape} & \mathsf{Dir}
	\\
	\cline{1-5}
	\ndstatetab{x} {(\la x x\ctxhole)(\la y y)} {\epsilon} 
	{(y,\la\vartwo\ctxhole,\epsilon)}\downp \\
	\nustatetab{\la x xx} {\ctxhole(\la y y)} 
	{(x,\la x x\ctxhole,(y,\la\vartwo\ctxhole,\epsilon))} {\epsilon}\upp \\
	\ndstatetab{\la y y} {(\la x xx)\ctxhole} {\epsilon} 
	{(x,\la x x\ctxhole,(y,\la\vartwo\ctxhole,\epsilon))}\downp\\
	\end{array}}
	\]
	The computation then stops, signalling that $\tm$ has weak head normal form. Please notice that the position on the log has now a nested structure. Indeed it carries information about the virtual substitutions already performed.
\end{example}\end{LONG}




\section{Properties of the $\lambda$-IAM}\label{sec:prop}
Here we first discuss a few invariants of the data structures of the machine, and then we analyze final states and the semantic interpretation defined by the $\IAM$.\medskip

\paragraph{The Code Invariant.} An inspection of the 
rules shows that, along a computation, the 
machine travels on a $\lambda$-term without altering it.
\begin{proposition}[Code Invariant]
	If $(\tm,\ctx,\ste,\stme,\pol)\toiam(\tmtwo,\ctxtwo,\stetwo,\stmetwo,\pol')$, then $\ctxp{\tm}=\ctxtwop{\tmtwo}$.
\end{proposition}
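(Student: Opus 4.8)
The plan is to prove the statement by a straightforward case analysis on the transition rule fired, inspecting each of the eight rules of Fig.~\ref{tab:iam}. The log, tape and direction are irrelevant to the equation at hand: their contents only control \emph{which} rule is enabled, not its effect on the code pair. So I would drop them and check, for each transition sending $(\tm,\ctx)$ to $(\tmtwo,\ctxtwo)$, that $\ctxp{\tm}=\ctxtwop{\tmtwo}$. Beyond unfolding the definition of plugging, the only fact needed is its \emph{compositionality}: for all contexts $\ctx,\ctxtwo$ and term $\tm$ one has $\ctxp{\ctxtwo\ctxholep{\tm}} = (\ctxp{\ctxtwo})\ctxholep{\tm}$; as an immediate consequence, plugging a term into $\ctxp{\ctxhole\tm}$, $\ctxp{\la\var\ctxhole}$, or $\ctxp{\tmtwo\ctxhole}$ just rebuilds the corresponding constructor around the plugged subterm.

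Then I would dispatch the eight cases, four of which are the $\upp$-reverses of $\downp$-cases. For $\iamdap$, the position $(\tmtwo\tm,\ctx)$ becomes $(\tmtwo,\ctxp{\ctxhole\tm})$, and $(\ctxp{\ctxhole\tm})\ctxholep{\tmtwo} = \ctxp{\tmtwo\tm}$; transition $\iamuapltwo$ is this equality read backwards. For $\iamdlamone$, $(\la\var\tm,\ctx)$ becomes $(\tm,\ctxp{\la\var\ctxhole})$, and $(\ctxp{\la\var\ctxhole})\ctxholep{\tm} = \ctxp{\la\var\tm}$; likewise for $\iamulam$. For $\iamuaplone$, $(\tmtwo,\ctxp{\ctxhole\tm})$ becomes $(\tm,\ctxp{\tmtwo\ctxhole})$, and both source and target plug to $\ctxp{\tmtwo\tm}$; transition $\iamuapr$ is the same reversed. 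Finally, for $\iamdvar$, $(\var,\ctxp{\la\var\ctxtwo_n})$ becomes $(\la\var\ctxtwo_n\ctxholep\var,\ctx)$, and compositionality gives $(\ctxp{\la\var\ctxtwo_n})\ctxholep{\var} = \ctxp{\la\var\ctxtwo_n\ctxholep\var}$, which is exactly the plug of the target position; transition $\iamdlamtwo$ is the exact converse. In every case $\ctxp{\tm} = \ctxtwop{\tmtwo}$, as desired.

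I do not expect any genuine obstacle: the proposition is a bookkeeping check. The one point worth a word is that in $\iamdvar$ and $\iamdlamtwo$ the plugging $\ctxtwo_n\ctxholep{\var}$ is \emph{capturing}, since the hole of $\ctxtwo_n$ gets filled with precisely the variable $\var$ bound by the surrounding $\la\var$; but this is exactly the (possibly capturing) notion of plugging fixed before the statement, so nothing misfires, and taking terms modulo $\alpha$-equivalence keeps the notation unambiguous. The analogous equality for an entire run then follows at once by transitivity of equality, although it is not needed in that form.
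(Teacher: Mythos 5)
Your proposal is correct and is exactly the argument the paper has in mind: the paper gives no explicit proof, merely noting that ``an inspection of the rules shows'' the invariant, and your case-by-case check of the eight transitions via compositionality of plugging is precisely that inspection, carried out in full.
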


\paragraph{The Balance Invariant.} Given a state 
$(\tm,\ctx,\ste,\stme,\pol)$, the log and the tape, \ie the token, 
verify two easy invariants connecting them to the position $(\tm,\ctx)$ and the 
direction $\pol$. The log $\ste$, together with the position 
$(\tm,\ctx)$,  forms a \trpos, \ie the length of $\ste$ is exactly the level of 
the code context 
$\ctx$. Then, the length of $\ste$ is exactly the number of (linear 
logic) \emph{boxes} in which the code term is contained. This fact guarantees 
that the $\IAM$ never gets stuck because the log is not long enough for 
transitions $\iamdvar$ and $\iamuapr$ to apply.

About the tape, note that every time the machine switches from a 
$\downp$-state to an $\upp$-state (or vice versa), a \trpos is 
pushed (or popped) from the tape $\stme$. Thus, for reachable states, the number of \trposs in $\stme$ gives the direction of the state. These intuitions are formalized by the balance invariant below. Given a direction $\pol$ we use
$\pol^n$ for the direction obtained by switching $\pol$ exactly $n$
times (i.e., $\downp^0=\downp$, $\upp^0=\upp$, $\downp^{n+1} =
\upp^{n}$ and $\upp^{n+1}=\downp^{n}$).
%
\begin{lemma}[Balance Invariant]\label{lemma:invarianttwo}
  Let $\state = \nopolstate{\tm}{\ctx_n}{\stme}{\ste}{\pol}$ be a reachable state and $\sizee\stme$ the number of \trposs in $\stme$. Then 
  \begin{enumerate}
  	\item \emph{Position and log}: $(\tm,\ctx_n, \ste)$ is a \trpos, and 
	\item \emph{Tape and direction}: $\pol=\downp^{\sizee\stme}$.
  \end{enumerate}
\end{lemma}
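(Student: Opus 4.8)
\medskip
\noindent\textbf{Proof plan.} The plan is to prove both items at once, by induction on the length of a run $\state_{\tm,k}\toiam^* \state$ witnessing that $\state$ is reachable. Before the induction I would isolate two elementary facts: first, by a routine induction on the grammar of leveled contexts, that plugging composes levels additively, i.e.\ $\ctx_k\ctxholep{\ctx_m}$ has level $k+m$---so that crossing a $\lambda$ or the left branch of an application (in either direction) does not change the level, while entering or leaving the argument of an application changes it by one; second, by mere inspection of the definition of $\expset$, that in any \trpos $(\tmtwo,\ctx_n,\ste_n)$ the log component has length exactly $n$, matching the level of the context component. Since, by definition of a state, the log and tape of $\state$ are built from such \trposs (plus occurrences of $\resm$ on the tape), every \trpos that ever occurs in $\state$ is balanced in this sense; these are exactly the facts the bookkeeping below relies on.

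The base case is trivial: if $\state$ is an initial state $(\tm,\ctxhole,\epsilon,\resm^k,\downp)$, then $\ctxhole$ has level $0$ and the empty log has length $0$, so $(\tm,\ctxhole,\epsilon)$ is a \trpos, while $\resm^k$ contains no \trpos, so $\sizee\stme=0$ and $\pol=\downp=\downp^0$. For the inductive step I would take $\state\toiam\statetwo$ with $\state$ satisfying the invariant and inspect the eight rules of Fig.~\ref{tab:iam}. The four \emph{search-up-to-$\beta$} transitions $\iamdap$, $\iamdlamone$, $\iamuapltwo$, $\iamulam$ are immediate: they keep the direction fixed, they move the hole of the code context only across a $\lambda$ or the left branch of an application---so its level is unchanged---and they only push or pop $\resm$ on the tape, touching neither the log nor the number $\sizee\stme$ of \trposs on the tape. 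Hence both items pass to $\statetwo$ unchanged.

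The content is concentrated in the four transitions $\iamdvar$, $\iamdlamtwo$, $\iamuaplone$, $\iamuapr$. Each of them flips the direction and pushes or pops exactly one \trpos on the tape, so $\sizee\stme$ changes by exactly one; since $\pol=\downp^{\sizee\stme}$ for $\state$ and $\downp^{j}$ depends only on the parity of $j$, item~2 holds for $\statetwo$. For item~1 I would check, rule by rule, that the change in the level of the code context equals the change in the length of the log. For $\iamuaplone$ and $\iamuapr$ the hole enters, respectively leaves, the argument of an application, so the level shifts by $\pm1$, exactly matched by the single \trpos these rules move between the tape and the log. For $\iamdvar$ the code context loses, around its hole, a fragment $\la\var\ctxtwo_n$ of level $n$, so by additivity its level drops by $n$; accordingly the log drops its first $n$ entries---a legitimate split precisely because item~1 for $\state$ ensures the log has length at least $n$---and the \trpos newly pushed on the tape, bundling exactly those $n$ entries, is balanced. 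Transition $\iamdlamtwo$ is the mirror image: the \trpos popped from the tape is balanced, hence carries a log of length $n$ equal to the level of the fragment $\la\var\ctxtwo_n$ reinserted around the hole, and prepending that log to the current one restores the invariant.

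I do not anticipate a real obstacle: this is a bookkeeping invariant and the induction is essentially mechanical. The only spot that needs genuine care is the pair $\iamdvar$/$\iamdlamtwo$, where the rebalancing works out only because the internal log of the \trpos packed onto (respectively unpacked from) the tape has exactly the required length $n$---which is where the balance of \trposs and the additivity of levels under plugging are actually used. I would also stress that the two items are genuinely proved together: item~1 for $\state$ is precisely what ensures that the log of $\state$ is long enough to be split as $\iamdvar$ demands and nonempty as $\iamuapr$ demands, so that these transitions are applicable at all---hence the remark that the $\IAM$ never gets stuck for want of log length comes out as a by-product of the very same induction.
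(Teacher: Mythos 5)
Your proof is correct and follows essentially the same route as the paper's: induction on the length of the run from the initial state, with a case analysis on the transition rules checking that the context level tracks the log length and that the parity of the number of \trposs on the tape tracks the direction. Your grouping of the four search-up-to-$\beta$ transitions as trivial and your explicit isolation of the additivity of levels under plugging is a slightly cleaner organization of the same bookkeeping, and your closing remark about the log always being long enough for $\iamdvar$ and $\iamuapr$ matches the observation the paper makes in the prose preceding the lemma.
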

\begin{LONG}\begin{proof}
  By induction on the execution $\state_0 \toiam^k \state$ from the initial 
  state
  $\state_0$. If $k=0$,
  $s=i=\dstate\tm\ctxhole{\resm^k}\epsilon$. Clearly
  $\ctxhole$ is a level $0$ context, and $|\ste|=0$. Moreover,
  $|\stme|_\textsf{e}=0$ and $\downp^0=\downp$. Now, let us
  consider a IAM run of length $k>0$ and let $\{s_h\}_{0\leq h\leq k}$
  be the sequence of states of this run. By induction hypothesis
  $s_{k-1}=(\tm,\ctx_{n},\stme,\ste,\pol)$ is a \trpos i.e $|\ste|=n$
  and $\downp^{|\stme|_\textsf{e}}=\pol$. We can show, by cases, that the Lemma 
  holds for $s_k$.
  \begin{itemize}
  	\item $\pol=\downp$.
  	\begin{itemize}
  		\item
  		$\tm=\tmtwo\tmthree$. Then
  		$s_k=\dstate\tmtwo{\ctx\ctxholep{\ctxhole\tmthree}}{\resm\cdot\stme}\ste$.
  		$\ctx\ctxholep{\ctxhole\tmthree}$
  		is a context of level $n=|\ste|$ and both $|\stme|_\textsf{e}$
  		and $\pol$ are unchanged.
  		\item
  		$\tm=\la x\tmtwo$ and $\stme=\resm\cdot\stme'$. Then
  		$s_k=\dstate\tmtwo{\ctx\ctxholep{\la
  				x\ctxhole}}{\stme'}\ste$. $\ctx\ctxholep{\la
  			x\ctxhole}$ is a context of level $n=|\ste|$ and both
  		$|\stme|_\textsf{e}$ and $\pol$ are unchanged.
  		\item
  		$\tm=\la x\ctxtwo_m\ctxholep\var$ and $\stme=(x,\la
  		x\ctxtwo_m,\stetwo)\cdot\stme'$. Then $s_k=\ustate x{\ctx\ctxholep{\la
  				x\ctxtwo_m}}{\stme'}{\stetwo\cdot\ste}$. $\ctx\ctxholep{\la
  			x\ctxtwo_m}$ is a context of level $n+m=|\ste|+|\stetwo|$ and
  		since $\downp^{|\stme|_\textsf{e}}=\downp$, then
  		$\downp^{|\stme|_\textsf{e}'}=\downp^{|\stme|_\textsf{e}-1}=\upp$.
  		\item $\tm=x$, $\ctx=\ctx_m'\ctxholep{\lambda x.\ctxtwo_l}$ and 
  		$\ste=\ste_l\cdot\ste'$. Then $s_k=\ustate{\lambda 
  			x.\ctxtwo_l\ctxholep{x}}{\ctx_m'}{(x,\lambda 
  			x.\ctxtwo_l,\ste_l)\cdot\stme}{\ste'}$. Since $m+l=|\ste|$, then 
  		$|\ste'|=m$ and since $\downp^{|\stme|_\textsf{e}}=\downp$, then 
  		$\downp^{|\stme|_\textsf{e}+1}=\upp$.
  		\item
  		$\tm=x$, $\ctx=\ctx_m'\ctxholep{\ctxtwo_l[x\leftarrow\tmtwo]}$
  		and $\ste=\ste_l\cdot\ste'$\footnote{Notice that proofs are already 
  		carried out in the more general framework of the linear substitution 
  		calculus, to be introduced in Section~\ref{sect:micro-refinement}.}. 
  		Then
  		$s_k=\ustate\tmtwo{\ctx_m'\ctxholep{\ctxtwo_l\ctxholep{x} 
  				[x\leftarrow\ctxhole]}}\stme{(x,\ctxtwo_l[x\leftarrow\tmtwo],\ste_l)\cdot\ste'}$.
  		Since $m+l=|\ste|$, then
  		$|\ste'|=m$. Thus
  		$|(x,\ctxtwo_l[x\leftarrow\tmtwo],\ste_l)\cdot\ste'|=m+1$ which the 
  		level of
  		$\ctx_m'\ctxholep{\ctxtwo_l\ctxholep{x}[x\leftarrow\ctxhole]}$. Both
  		$|\stme|_\textsf{e}$ and $\pol$ are unchanged.
  		\item
  		$\tm=\tmtwo[x\leftarrow\tmthree]$. Then
  		$s_k=\dstate\tmtwo{\ctx\ctxholep{\ctxhole[x\leftarrow\tmthree]}}\stme\ste$.
  		$\ctx\ctxholep{\ctxhole[x\leftarrow\tmthree]}$
  		is context of level $n=|\ste|$. Both $|\stme|_\textsf{e}$ and
  		$\pol$ are unchanged.
  	\end{itemize}
  	\item $\pol=\upp$. The proof is equivalent to the one above.
  \end{itemize}
\end{proof}\end{LONG}
Note that, because of the invariant, the tape $\stme$ of a reachable 
$\upp$-state always contains at 
least one \trpos, which is why it can be seen as the answer to a query about 
the head variable.
\medskip

\paragraph{The Exhaustible State Invariant.} The study of the $\IAM$ requires 
to prove that some 
bad configurations never arise. On states such as 
$\dstate{ \la\var\ctxtwop\var }{ \ctx }{ \exps\cons\stme }{ \ste  }$, transition $\iamdlamtwo$ requires 
the \trpos $\exps$ to have  shape $(\var, \la\var\ctxtwo, \ste')$, 
that is, to contain a position isolating an occurrence of $\var$ in 
$\la\var\ctxtwop\var$, otherwise the machine is stuck.
 The \emph{exhaustible state 
invariant} guarantees that the machine never 
gets stuck for this reason. The invariant being technical, it is 
developed in the Section~\ref{sec:exh}. 
Here we only mention its main consequence.

\begin{proposition}[\TrPoss Never Block the $\IAM$]
\label{prop:exhaust} 
Let $\state$ be a reachable state.
  If $\state = \dstate{ \la\var\ctxtwop\var }{ \ctx }{ \exps\cons\stme }{ 
		\ste
	}$ then $\exps =
	(\var,\la\var\ctxtwo,\stetwo)$.
\end{proposition}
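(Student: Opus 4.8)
The plan is to derive this proposition as a corollary of the \emph{exhaustible state invariant}, which is stated to be developed in Section~\ref{sec:exh}. So the real work is to set up that invariant and show (a) it holds for initial states, (b) it is preserved by every $\IAM$ transition in Figure~\ref{tab:iam}, and (c) it implies the shape constraint on the \trpos $\exps$ sitting on the tape of a $\downp$-state whose code term is $\la\var\ctxtwop\var$. The informal description given in the excerpt tells us what the invariant must say: a state is \emph{exhaustible} if its token (log plus tape) can be "emptied" by a sequence of legal moves that mimics — in reverse — the computation that produced the state. Concretely, I expect the invariant to assert that for every \trpos $\exps = (\vartwo,\ctxtwo_m,\stetwo)$ occurring anywhere in the log or the tape of a reachable state, the context component $\ctxtwo_m$ really is a valid leveled context isolating an occurrence of $\vartwo$ (i.e. $\ctxtwo_m$ has the form $\la\vartwo\ctxthree_m$ with the hole in the scope of that binder), together with a coherence condition linking $\stetwo$ to $\ctxtwo_m$ via the balance invariant (Lemma~\ref{lemma:invarianttwo}): $|\stetwo| = m$.

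First I would make the definition of exhaustibility precise, phrasing it so that it is manifestly closed under the transitions: the natural formulation is that the token data can be consumed by "running the machine backwards" or, equivalently, that there is a reachability-style certificate attached to each \trpos. Then I would prove preservation by a case analysis on $\toiam$. The interesting transitions are exactly the four that manipulate \trposs: $\iamdvar$ \emph{creates} a \trpos $(\var,\la\var\ctxtwo_n,\expsn)$ and pushes it on the tape — here one checks directly from the premise of the rule ($\tm = \var$ with $\ctx = \ctx'\ctxholep{\la\var\ctxtwo_n}$) that the new \trpos has the required shape, and that $|\expsn| = n$ by the Balance Invariant; $\iamuaplone$ \emph{moves} a \trpos from tape to log, preserving its shape; $\iamuapr$ \emph{moves} a \trpos from log to tape, likewise; and $\iamdlamtwo$ \emph{consumes} a \trpos from the tape — for this one the invariant must already guarantee the $\iamdvar$-shape so the rule is applicable, which is precisely the content of the proposition. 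The search-up-to-$\beta$ transitions ($\iamdap$, $\iamdlamone$, $\iamuapltwo$, $\iamulam$) only push/pop $\resm$'s and change the position by one constructor; they touch no \trpos, so the certificates carry over unchanged after a routine adjustment of the position bookkeeping.

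Having established that the invariant holds at all reachable states, the proposition is immediate: if $\state = \dstate{\la\var\ctxtwop\var}{\ctx}{\exps\cons\stme}{\ste}$ is reachable, then $\exps$ is a \trpos occurring on the tape, so by the invariant $\exps = (\vartwo,\la\vartwo\ctxthree,\stetwo)$ isolates an occurrence of $\vartwo$; and a separate, easy position-matching argument (again leaning on the Code Invariant and Balance Invariant, which force the active position of a $\downp$-state carrying a \trpos on top of its tape to sit at the binder named by that \trpos) identifies $\vartwo$ with $\var$ and $\ctxthree$ with $\ctxtwo$, giving $\exps = (\var,\la\var\ctxtwo,\stetwo)$ as claimed. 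The main obstacle I anticipate is pinning down the \emph{right} strength of the exhaustibility predicate: it must be weak enough to hold at initial states and to be genuinely preserved (the nesting of logs inside \trposs, visible in the backtracking example, means the certificate is recursive, so the induction is on a subtle well-founded order rather than on plain run length), yet strong enough to entail both the shape of $\exps$ and the positional coherence needed in the last step. Getting that formulation and its preservation proof right — rather than any individual case — is where the technical weight lies, which is no doubt why the authors defer it to its own section.
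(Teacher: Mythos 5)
There is a genuine gap, and it sits exactly where you declare the work to be ``easy''. The invariant you propose is \emph{local to each \trpos}: it says that every $(\vartwo,\ctxtwo_m,\stetwo)$ occurring in the token is internally well formed (the context starts with the binder of $\vartwo$, and $|\stetwo|=m$). That property is true, preservable, and essentially already contained in the balance invariant --- but it is far too weak to give the proposition. The proposition does not say that the \trpos on the tape names \emph{some} abstraction of the right syntactic shape; it says that it names \emph{the abstraction the machine is currently sitting at}: in $\dstate{\la\var\ctxtwop\var}{\ctx}{\exps\cons\stme}{\ste}$ the conclusion is $\exps=(\var,\la\var\ctxtwo,\stetwo)$ with the \emph{same} $\ctxtwo$ that decomposes the current code term, so that $\iamdlamtwo$ can fire. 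Your local invariant cannot rule out the bad scenario in which the machine arrives at $\la\var\dots$ carrying a perfectly well-formed \trpos that records a \emph{different} abstraction elsewhere in the term --- which is precisely the stuck configuration the proposition excludes. The closing ``position-matching argument'' you invoke does not follow from the Code and Balance invariants: those only constrain the immutability of the code, the \emph{length} of the log, and the \emph{number} of \trposs on the tape; they carry no information about \emph{which} positions the \trposs record relative to the current one. That correspondence is a global property of the run (the \trpos is created by $\iamdvar$ far away, shuttled between tape and log, and nested inside other \trposs before being consumed), and establishing it is the entire difficulty.

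The paper's exhaustible-state invariant is designed precisely to express this positional coherence, and it has a rather different flavour from what you sketch: to each \trpos $\exps$ in a state one associates a \emph{test state} (truncating the tape after $\exps$, or stripping the log and shifting the position outward, with a possibly non-reachable direction), and exhaustibility demands that this test state evolves, via $\toiam^*$ followed by the backtracking transition that consumes $\exps$, to an exhaustible state \emph{surrounding} the position recorded in $\exps$. Determinism of the $\IAM$ then transfers this certificate along transitions, and the proposition falls out because the tape test of $\dstate{\la\var\ctxtwop\var}{\ctx}{\exps}{\ste}$ can only make its mandatory first move via $\iamdlamtwo$, which is applicable only if $\exps$ has exactly the claimed shape relative to the current code term. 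If you want to avoid the test machinery, you must still find \emph{some} invariant that ties each \trpos to the machine's trajectory (the recursive, run-dependent nature of the certificate is unavoidable, as your own observation about nested logs suggests); a shape-only invariant will not close the argument.
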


\paragraph{Reversibility.} The proof of \refprop{exhaust} relies on a key 
property of the $\IAM$, that is, bi-determinism, or
reversibility: for each state $s$ 
there is at most one state $s'$ such that $s'\toiam s$. 
The property follows by simply inspecting the rules. Moreover, a run can be 
reverted by 
simply switching the direction. 
\begin{proposition}[Reversibility]
	If \\$\nopolstate{\tm}{\ctx}{\stme}{\ste}{\pol}\toiam
	\nopolstate{\tmtwo}{\ctxtwo}{\stmetwo}{\stetwo}{\poltwo}$, then
	$\nopolstate{\tmtwo}{\ctxtwo}{\stmetwo}{\stetwo}{\poltwo^1}\toiam
	\nopolstate{\tm}{\ctx}{\stme}{\ste}{\pol^1}$.
\end{proposition}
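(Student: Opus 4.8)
The plan is to prove the statement by a direct case analysis on which of the eight transitions of Figure~\ref{tab:iam} realizes the step $\nopolstate{\tm}{\ctx}{\stme}{\ste}{\pol}\toiam\nopolstate{\tmtwo}{\ctxtwo}{\stmetwo}{\stetwo}{\poltwo}$. The organizing observation is that the eight transitions come in four \emph{dual pairs} --- $\iamdap$ with $\iamuapltwo$, $\iamdlamone$ with $\iamulam$, $\iamdvar$ with $\iamdlamtwo$, and $\iamuaplone$ with $\iamuapr$ --- and that, within each pair, the output pattern of one rule, read with the direction switched, is exactly the input pattern of the other, whose firing then reproduces the original state with its direction switched. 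Since the pairing is symmetric, checking one half of each pair suffices.

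Concretely, I would spell out one pair and observe that the other three are settled by the same bookkeeping. For $\iamdap$ the step is $\dstate{\tmtwo\tm}{\ctx}{\stme}{\ste}\toiam\dstate{\tmtwo}{\ctxp{\ctxhole\tm}}{\resm\cdot\stme}{\ste}$, so here $\pol=\poltwo=\downp$ and hence $\pol^1=\poltwo^1=\upp$. Re-reading the target as a $\upp$-state gives $\ustate{\tmtwo}{\ctxp{\ctxhole\tm}}{\resm\cdot\stme}{\ste}$, which is literally the left-hand side of $\iamuapltwo$ with its metavariables instantiated by $\tmtwo,\ctx,\tm,\stme,\ste$; firing $\iamuapltwo$ yields $\ustate{\tmtwo\tm}{\ctx}{\stme}{\ste}$, i.e., the source with direction $\upp$, as required. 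The remaining pairs are analogous: $\iamdlamone$ is reversed by $\iamulam$ (both stay within direction $\downp$, the $\resm$ popped off the tape being pushed back); $\iamdvar$ is reversed by $\iamdlamtwo$, where one should note that both are $\downp$-to-$\upp$ rules, so reversing a $\iamdvar$-step means starting from the reached $\upp$-state re-read as a $\downp$-state and applying $\iamdlamtwo$, which pops precisely the \trpos $(\var,\la\var\ctxtwo_n,\expsn)$ that $\iamdvar$ had pushed and puts its log component $\expsn$ back in front of the log; and symmetrically $\iamuaplone$ is reversed by $\iamuapr$ (both $\upp$-to-$\downp$ rules, with a \trpos passing between log and tape). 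In each case the token operations of the two rules are exact inverses and the code position merely moves between two contiguous positions; no side condition can block the reversed rule --- in particular \refprop{exhaust} is not needed here --- because the output pattern of the forward rule matches the input pattern of its dual by construction.

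The argument is entirely mechanical and there is no genuine obstacle; the only point that deserves a moment's care is the direction bookkeeping, namely noticing that the dual of $\iamdvar$ is $\iamdlamtwo$ (and of $\iamuaplone$ is $\iamuapr$), not the rule itself --- a consequence of the fact that reversing $\state\toiam\statetwo$ sends $\statetwo$-with-switched-direction back to $\state$-with-switched-direction, which swaps the two members of such a pair. Once the one-step statement is in place, the claim recalled from the surrounding text --- that an entire $\IAM$ run can be reverted by switching the directions of its endpoints --- follows immediately by induction on the length of the run.
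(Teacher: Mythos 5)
Your proof is correct and matches the paper's approach: the paper simply asserts that reversibility "follows by simply inspecting the rules," and your case analysis --- organizing the eight transitions into the four dual pairs $\iamdap/\iamuapltwo$, $\iamdlamone/\iamulam$, $\iamdvar/\iamdlamtwo$, $\iamuaplone/\iamuapr$ and checking that each rule's output, read with switched direction, is exactly the input pattern of its dual --- is precisely that inspection, carried out correctly. The observation that the dual's side conditions are automatically satisfied because the forward rule's output matches them by construction is the right justification for why no reachability or exhaustibility hypothesis is needed.
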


\paragraph{Final States.} A run of initial state 
$\state_{\tm,k}=\dstate{\tm}{\ctxhole}{\resm^k}{\epsilon}$  may either 
never stop or end in one of three possible final states. To explain them, let 
$\la{\var_0}\ldots \la{\var_i} (\vartwo \tmtwo_1\ldots \tmtwo_j)$ be the 
head normal form $\hnf\tm$ of $\tm$. The exhaustible state invariant and \refprop{exhaust} together guarantee 
that the final states of the $\IAM$ can only have one of these three 
shapes:
\begin{varitemize}
	\item \emph{Failure} $\dstate{\la\var\tm}{\ctx}{\epsilon}{\ste}$: this is 
	the machine's way of saying that $i>k$, that is, $\hnf\tm$ has more head 
	abstraction than 
	those that the depth $k$ of the initial state $\state_{\tm,k}$ allows to explore.
	\item \emph{Open success} $\dstate{\vartwo}{\ctx}{\resm^j}{\ste}$: the 
	machine found the head variable, and it is 
	the \emph{free} variable $\vartwo$, which has $j$ arguments. Note that if 
	$y$ is instead bound by a 
	$\lambda$-abstraction, then the machine is not stuck, as the machine would 
	do a $\iamdvar$ transition (as guaranteed by the balance invariant).
	\item \emph{Bound success} $\ustate{\tm}{\ctxhole}{\resm^m\cdot 
	\exps\cdot\resm^j}{\ste}$: the head variable has been found and 
	it is $\vartwo = \var_{m}$, to which $j$ arguments are applied. When the 
	machine $\downp$-travels on the head 
	variable $\vartwo$, and it is abstracted, the \trpos $\exps$ containing 
	$\var_m$ is put on the tape and the direction switches---the answer has 
	been found. The sequence $\resm^m$ on top of tape in the final state 
	comes from the $\upp$ backtracking along the spine of $\hnf\tm$ for the 
	equivalent of $m$ abstractions, each one adding one $\resm$. At this point 
	the $\IAM$ stops. Thus the abstraction binding $\vartwo$ is $\l\var_{m}$.
\end{varitemize}

\paragraph{The Semantics.} The characterization of final states induces a semantic interpretation of terms, that we are going to 
show to be sound and adequate with respect to (linear) head evaluation.


\begin{definition}[$\IAM$ Semantics]
\label{def:semantics}
	We define the $\IAM$ semantics of $\lambda$-terms by way of a family of 
	functions 
	$\sem{\cdot}{k}:\Lambda\rightarrow(\mathbb{N}\times\mathbb{N}) 
	\cup(\mathcal{V}\times\mathbb{N})\cup\{\Downarrow,\bot\}$,
	 where $k\in\mathbb{N}$, defined as follows.
	\[
	\sem{\tm}{k}=\begin{cases}
	\langle h,j \rangle & \text{ if } 
	(\red{\tm},\ctxhole,\epsilon,\resm^k)\toiam^* 
	(\tm,\blue\ctxhole,\epsilon,\resm^h\cdot \exps\cdot\resm^j),\\
	\langle\var,h\rangle & \text{ if } 
	(\red{\tm},\ctxhole,\epsilon,\resm^k)\toiam^* 
	\dstate{\var}{\ctx}{\resm^h}{\ste},\\
	\Downarrow & \text{ if } (\red{\tm},\ctxhole,\epsilon,\resm^k)\toiam^*  
	(\red{\la\var\tmtwo},\ctx,\ste,\epsilon),\\
	\bot & \text{otherwise.}
	\end{cases}
	\]
\end{definition}

\subsection{Further Properties}
The following properties of the $\IAM$ are required for the proofs but are not essential for a first understanding of its functioning, so we suggest to skip them at a first reading.

\paragraph{Lifting} The $\IAM$ verifies a sort of context-freeness with respect 
to the tape $\stme$. Intuitively, the $\IAM$ consumes the next entry of the 
initial input only when the question asked by the previous one(s) has been 
fully answered. Precisely, \emph{lifting} the tape preserves the shape of the 
run and of the final state (up to lifting).

\begin{lemma}[Lifting]
\label{l:pumping}
	If $\nopolstate{\tm}{\ctx}{\tape}{\ste}{\pol}\toiam^n
	\nopolstate{\tmtwo}{\ctxtwo}{\tapetwo}{\stetwo}{\poltwo}$, then
	$\nopolstate{\tm}{\ctx}{\tape\cons\tapethree}{\ste}{\pol}\toiam^n
	\nopolstate{\tmtwo}{\ctxtwo}{\tapetwo\cons\tapethree}{\stetwo}{\poltwo}$.
\end{lemma}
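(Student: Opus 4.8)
The plan is to prove the Lifting Lemma by a routine induction on the length $n$ of the run; the actual content is a case analysis on the eight rules of Fig.~\ref{tab:iam}, showing that a fragment appended at the \emph{bottom} end of the tape — the end opposite to where transitions read and write — is never inspected and is simply carried along unchanged.

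First I would isolate the one-step version of the statement: for every transition $\nopolstate{\tm}{\ctx}{\tape}{\ste}{\pol}\toiam\nopolstate{\tm'}{\ctx'}{\tape'}{\ste'}{\pol'}$ and every tape fragment $\tapethree$, one also has $\nopolstate{\tm}{\ctx}{\tape\cons\tapethree}{\ste}{\pol}\toiam\nopolstate{\tm'}{\ctx'}{\tape'\cons\tapethree}{\ste'}{\pol'}$. This I would check by inspecting the rules, which split into two groups. The transitions $\iamdap$, $\iamdvar$, $\iamulam$ and $\iamuapr$ do not read the tape at all — their side conditions involve only the code position and the log — and they push a single element on top of it (an occurrence of $\resm$ for the first and third, a \trpos for the second and fourth); such a transition therefore fires with tape $\tape\cons\tapethree$ exactly as with $\tape$, and the resulting tape is the old result with $\tapethree$ still appended, i.e.\ $\tape'\cons\tapethree$. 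The transitions $\iamdlamone$, $\iamdlamtwo$, $\iamuapltwo$ and $\iamuaplone$ instead require the tape to be non-empty and test only its topmost element — that it be an $\resm$, or a \trpos (of the shape prescribed by the code position, in the case of $\iamdlamtwo$) — and then pop it; since the original transition fires, $\tape$ is non-empty with a top element of the required form, and appending $\tapethree$ at the bottom disturbs neither the non-emptiness nor the top element, so the very same transition fires and produces $\tape'\cons\tapethree$. This exhausts the eight cases.

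Then I would conclude by induction on $n$. The base case $n=0$ is immediate. For $n>0$, I would split the run as a first transition $\nopolstate{\tm}{\ctx}{\tape}{\ste}{\pol}\toiam\nopolstate{\tm'}{\ctx'}{\tape'}{\ste'}{\pol'}$ followed by an $(n-1)$-step run, apply the one-step statement to the first transition (obtaining the step from $\nopolstate{\tm}{\ctx}{\tape\cons\tapethree}{\ste}{\pol}$ to $\nopolstate{\tm'}{\ctx'}{\tape'\cons\tapethree}{\ste'}{\pol'}$), apply the induction hypothesis to the remaining $(n-1)$-step run with the same fragment $\tapethree$, and concatenate the two runs.

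I do not expect any genuine obstacle here: the only points requiring care are the bookkeeping across the eight transition rules and the orientation of the tape as a stack, so that $\cons\tapethree$ really does sit at the end untouched by the rules. It is worth noting, finally, that the argument uses no reachability hypothesis — it depends only on the syntactic form of the transitions — so the lemma holds for arbitrary, not necessarily reachable, states, exactly as stated.
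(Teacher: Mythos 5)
Your proof is correct and follows essentially the same route as the paper's: an induction on $n$ whose content is the observation that every $\IAM$ transition reads and pops at most the topmost element of the tape, so a fragment appended at the bottom is never inspected. Your explicit eight-case check of the one-step version is just a more detailed spelling-out of what the paper's proof states in one line.
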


\begin{LONG}\begin{proof}
	We proceed by induction on $n$. Thus we have that if 
	$\nopolstate{\tm}{\ctx}{\tape}{\ste}{\pol}\toiam^{n-1}
	\nopolstate{\tmtwo}{\ctxtwo}{\tapetwo}{\stetwo}{\poltwo}$, then 
	$\nopolstate{\tm}{\ctx}{\tape\cons\tapethree}{\ste}{\pol}\toiam^{n-1}
	\nopolstate{\tmtwo}{\ctxtwo}{\tapetwo\cdot\tapethree}{\stetwo}{\poltwo}$. 
	The proof now proceeds analyzing all possible transitions from 
	$\nopolstate{\tmtwo}{\ctxtwo}{\stme}{\tape}{\poltwo}$ and 
	$\nopolstate{\tmtwo}{\ctxtwo}{\tape\cons\tapethree}{\stetwo}{\poltwo}$. The 
	key point 
	is that every transition of the $\IAM$ consumes at most $1$ element of the 
	tape. This is why the pushed stack $\tapethree$ never gets touched.
\end{proof}\end{LONG}

\paragraph{Monotonicity of Runs} The previous lemma states that lifting the 
input from $\resm^k$ to 
$\resm^{k+1}$ cannot decrease the length of the $\IAM$ run. Next, we show that 
if the run of input $\resm^k$ is successful then the run of input 
$\resm^{k+1}$ is also successful, in the same way, and it has the same length. 
As a consequence, the length may increase only if the run on $\resm^k$ fails.

We write $|\tm|_k$ for the length of the $\IAM$ run of initial state 
$\state_{\tm,k} \defeq \dstate{\tm}{\ctxhole}{\resm^k}{\epsilon}$, that is for 
the length of the maximum sequence of transitions $\state_{\tm,k}$, if the 
$\IAM$ terminates, and $|\tm|_k= \infty$ if the machine diverges. The next 
lemma compares run lengths, for which we consider that $i<\infty$ for every 
$i\in \nat$ and $\infty \not<\infty$. We also write $\state_{\tm,k}^n$ for the 
state such that $\state_{\tm,k} \toiam^n \state_{\tm,k}^n$, if it exists.

\begin{lemma}[Monotonicity of runs]
\label{l:runs-monotonicity}
 The length of runs cannot decrease if the input increases, that is, $|\tm|_k\leq |\tm|_{k+1}$. Moreover, if $|\tm|_k =n\in\nat$ and the final state $\state_{\tm,k}^n$ is bound (resp. open) successful then $|\tm|_k = |\tm|_{h}$ for every $h>k$ and the final state $\state_{\tm,h}^n$ is bound (resp. open) successful. 
\end{lemma}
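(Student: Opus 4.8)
The plan is to get everything out of the Lifting Lemma (Lemma~\ref{l:pumping}) together with the characterization of final states stated just above. The key starting remark is that, since $\resm^h = \resm^k\cdot\resm^{h-k}$ for every $h \ge k$, the initial state $\state_{\tm,h}$ is obtained from $\state_{\tm,k}$ by appending $\resm^{h-k}$ at the bottom of the tape; so Lemma~\ref{l:pumping} tells us that the $h$-run replays the $k$-run step by step — dragging the extra $\resm^{h-k}$ along, untouched, at the bottom of the tape — for as long as the $k$-run proceeds.

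For the inequality $|\tm|_k \le |\tm|_{k+1}$, I would fix any $n$ with $n \le |\tm|_k$ (reading this as: any $n \in \nat$, when $|\tm|_k = \infty$). Then $\state_{\tm,k}\toiam^n\statetwo$ for some state $\statetwo$, and Lemma~\ref{l:pumping} with appended tape $\resm$ gives a run $\state_{\tm,k+1}\toiam^n\statetwo'$; hence $|\tm|_{k+1}\ge n$. Since this holds for all such $n$, we conclude $|\tm|_k\le|\tm|_{k+1}$ (in particular $|\tm|_{k+1}=\infty$ if $|\tm|_k=\infty$).

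For the second part, I would assume $|\tm|_k = n\in\nat$ with $\state_{\tm,k}^n$ open (resp.\ bound) successful, fix $h>k$, and apply Lemma~\ref{l:pumping} to the run $\state_{\tm,k}\toiam^n\state_{\tm,k}^n$ with appended tape $\resm^{h-k}$, getting $\state_{\tm,h}\toiam^n\statetwo_h$, where $\statetwo_h$ is $\state_{\tm,k}^n$ with $\resm^{h-k}$ appended at the bottom of the tape. By the characterization of final states, in the open case $\state_{\tm,k}^n = \dstate{\vartwo}{\ctx}{\resm^j}{\ste}$ with $\vartwo$ free in $\ctx$, so $\statetwo_h = \dstate{\vartwo}{\ctx}{\resm^{j+h-k}}{\ste}$; no transition applies to it, because the only $\downp$-transition on a variable is $\iamdvar$, which needs the occurrence to be bound — and $\vartwo$ is free — so $\statetwo_h$ is final, and having the shape of an open-success state it is open successful. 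In the bound case $\state_{\tm,k}^n = \ustate{\tm}{\ctxhole}{\resm^m\cdot\exps\cdot\resm^j}{\ste}$, so $\statetwo_h = \ustate{\tm}{\ctxhole}{\resm^m\cdot\exps\cdot\resm^{j+h-k}}{\ste}$; the empty code context $\ctxhole$ matches none of the shapes $\ctxp{\ctxhole\tmthree}$, $\ctxp{\la\var\ctxhole}$, $\ctxp{\tmtwo\ctxhole}$ demanded by the $\upp$-transitions, so again $\statetwo_h$ is final and bound successful. In either case $\statetwo_h$ is the final state of the $h$-run, whence $\state_{\tm,h}^n = \statetwo_h$ and $|\tm|_h = n = |\tm|_k$.

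The whole thing is bookkeeping over Lemma~\ref{l:pumping}; the one point that needs care — the natural obstacle — is checking that the lifted successful state is still final, which is exactly where the precise shapes from the characterization of final states come in (a free head variable blocks $\iamdvar$; the empty code context blocks every $\upp$-transition). It is also instructive that this step genuinely breaks for failure states $\dstate{\la\var\tm}{\ctx}{\epsilon}{\ste}$: lifting turns their empty tape into $\resm^{h-k}$, which re-enables $\iamdlamone$, so the $h$-run keeps going — which is precisely why run length can strictly increase only through a failure.
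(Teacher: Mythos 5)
Your proof is correct and follows essentially the same route as the paper's: both rest on the Lifting Lemma applied to the whole run plus the characterization of final states, observing that appending $\resm$'s leaves success states final while re-enabling $\iamdlamone$ on failure states. The only (harmless) differences are that you lift by $\resm^{h-k}$ in one shot where the paper treats $h=k+1$ and then inducts, and that you spell out why the lifted success states admit no transition, which the paper merely asserts.
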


\begin{LONG}\begin{proof}
 Let $\state_{\tm,k} \toiam^n (\tmtwo,\ctx,\ste,\stme,\pol) = \state_{\tm,k}^n$.
 By the pumping lemma (\reflemma{pumping}), if $\state_{\tm,k+1} \toiam^n 
 (\tmtwo,\ctx,\ste,\stme\cdot\resm,\pol) =\state_{\tm,k+1}^n$. If $|\tm|_k= 
 \infty$ then $\state_{\tm,k} \toiam^n \state_{\tm,k}^n$ for every $n\in\nat$ 
 and so $\state_{\tm,k+1} \toiam^n \state_{\tm,k+1}^n$, that is, $|\tm|_{k+1}= 
 \infty=|\tm|_k$. 
 
 If $|\tm|_k= n\in \nat$ then $\state_{\tm,k}^n$ is final. Two cases. If 
 $\state_{\tm,k}^n$ is an approximating final state 
 $\dstate{\la\var\tm}{\ctx}{\epsilon}{\ste}$ then 
 $\state_{\tm,k+1}^n=\dstate{\la\var\tm}{\ctx}{\resm}{\ste}$ which can make a 
 transition, that is, $|\tm|_k< |\tm|_{k+1}$. If instead $\state_{\tm,k}^n$ is 
 a bound successful final state $\ustate{\tm}{\ctxhole}{\resm^m\cdot 
 \exps\cdot\resm^n}{\ste}$ then $\state_{\tm,k+1}^n = 
 \ustate{\tm}{\ctxhole}{\resm^m\cdot \exps\cdot\resm^{n+1}}{\ste}$ which is 
 also a successful final state, and $|\tm|_k = |\tm|_{k+1}$. Similarly for an 
 open successful final state. A straightforward induction then shows that the 
 same holds for every other $h>k$.
\end{proof}\end{LONG}


\section{Soundness and Adequacy, Explained}
\label{sect:soundness-intro}

Proving the implementation theorem of the $\IAM$ amounts to showing that the 
interpretation $\sem{\tm}{}$ of \refdef{semantics} is a sound and adequate 
semantics for $\lambda$-terms with respect to head evaluation. \emph{Soundness} 
is the invariance of $\sem{\tm}{}$ by head evaluation. \emph{Adequacy} is the 
fact that $\sem{\tm}{}$ reflects the observable behavior of 
$\tm$, that is, termination in the case of weak evaluation. 
In the rest of this section, we compare this notion with the fundamentally different notion of implementation for environment machines. 

\paragraph{Soundness of Environment Machines.} An environment abstract machine $M$ 
executes a term $\tm$ according to a strategy $\to$ if from \emph{the} initial 
state $\state_\tm$ of code $\tm$ it computes a representation of the normal 
form $\mathsf{nf}_{\rightarrow}(\tm)$. In particular, the machine somehow maintains the representation of how the strategy $\to$ modifies the term $\tm$ they both 
evaluate. Soundness is a weak bisimulation between the transitions $\state 
\Rew{M} \statetwo$ of the machine and the steps $\tm\to\tmtwo$ of the strategy. 
In particular, a run $\exec_\tm$ of the machine on $\tm$ passes through some 
states representing $\tmtwo$, and the final states $\state_f$ of the machine 
decode to $\to$-normal forms.
\begin{figure*}[t!]
	\begin{center}$
		\begin{array}{c@{\hspace{.8cm}} cc}
		\begin{array}{rcl}
		\multicolumn{3}{c}{\textsc{Rules at top level}}
		\\
		\sctxp{\lambda x.\tm}\tmtwo \quad&\mapsto_\db& \quad 
		\sctxp{\tm\esub\var\tmtwo}\\
		\hctxp \var\esub\var\tm\quad&\mapsto_\ls& \quad \hctxp\tm\esub\var\tm\\
		\tm\esub\var\tmtwo\quad&\mapsto_\gcsym& \quad \tm\quad\text{if 
		$\var\notin\fv\tm$}
		\end{array}
		&
		\begin{array}{c}
		\textsc{Contextual closure}\\ 
		\infer[\tt{a} \in \set{\db,\ls,\gcsym}]{\hctxp\tm \lolli_{\tt{a}} 
			\hctxp\tmtwo}{\tm\rootRew{\tt{a}}\tmtwo}
		\end{array}
		&
		\begin{array}{c}
		\textsc{Notation}\\
		\tohl\,\defeq\, \tohldb\cup\tohlls\cup\tohlgc
		\end{array}
		\end{array}
		$
		\vspace{-8pt}
		\caption{Rewriting rules for linear head evaluation $\tohl$.}
		\label{fig:lhe}
	\end{center}
\end{figure*}

\paragraph{Soundness of the $\IAM$.} The $\IAM$, and more generally GoI machines, 
do implement strategies, but in a different way. The $\IAM$ has many initial 
states, therefore many runs, for a given code 
$\tm$, one for each possible depth $k\in\nat$. Moreover, the machine does not 
trace how the strategy modifies the term. If $\tm\toh\tmtwo$, a run of code 
$\tm$ never passes through a representation of $\tmtwo$, as soundness denotes 
something else. The idea is that, on a fixed input, the run of code $\tm$ is 
bisimilar to the run of code $\tmtwo$. Notably, the latter is 
shorter---rewriting the code is a way of improving the associated $\IAM$.
Notice the difference with environment machines: there the bisimulation is 
between \emph{steps} on terms and transitions on states. For the $\IAM$, it is 
between \emph{transitions} on states (of code $\tm$) and transitions on states 
(of code $\tmtwo$).

\paragraph{On Not Computing Results.} Another difference is that the $\IAM$ does not 
compute a code representation of the result $\hnf\tm$. It recovers the micro 
information $\sem\tm k$ about it, by exploring only the immutable code $\tm$. 
This is in accordance with other models: space-sensitive Turing machines do not 
compute the whole output but only single bits of it.
To compute the spine of $\hnf\tm$, one needs to compute $\sem\tm k$ for 
various values of $k$, one for each abstraction of the spine, starting each 
time with a different input $\stme =\resm^k$, and then once more for the head 
variable (adding a $\resm$), if the machine ever terminates. On a head 
normal form $\tm$, the runs of the $\IAM$ become an immediate interactive 
reading of the spine of $\tm$. Inputs represent questions about the head of the 
normal form, the answer is encoded in the tape at the end of the run, when the run succeeds.

\paragraph{Adequacy.} Soundness is not enough. A trivial semantics where every object is mapped on 
the same element, for example, is sound but not informative. Adequacy guarantees 
that the interpretation $\sem{\tm}{}$ reflects some 
observable aspects of $\tm$ and 
vice versa. For a head strategy in an untyped calculus, one usually observes 
termination, and, if it holds, the identity of the head variable. And this is 
exactly what $\sem{\tm}{}$ reflects, or is it adequate for.


\section{Micro-Step Refinement}
\label{sect:micro-refinement}
The proof of soundness of the $\IAM$ cannot be directly carried out with 
respect to head evaluation: this is specified using meta-level substitutions, 
here noted $\tm\isub\var\tmtwo$, which is a macro operation, potentially making 
\emph{many} copies of $\tmtwo$ and modifying $\tm$ in \emph{many} places, while 
the $\IAM$ does a minimalistic evaluation that in general does not even pass 
through most of those many places. It is very hard---if possible at all---to 
define explicitly a bisimulation of $\IAM$ runs (as required for soundness) 
that relates states whose code is modified by meta-level substitution.

We then switch to \emph{linear} head evaluation (shortened to LHE), a refinement of head evaluation in which substitution is performed in \emph{micro-steps}, replacing only the head variable occurrence, and keeping the substitution suspended for all the other occurrences. This is also the approach followed by Danos, Herbelin, and Regnier~\cite{DBLP:conf/lics/DanosHR96}. 

We depart from their approach, however, in the way we formally define LHE. We 
adopt a formulation where the suspension of the substitution is formalized via 
a sharing constructor $\tm\esub\var\tmtwo$, which is nothing else but a compact 
notation for $\letin \var\tmtwo\tm$, and the rewriting is modified accordingly. 
They instead avoid sharing, by encoding $\tm\esub\var\tmtwo$ as 
$(\la\var\tm)\tmtwo$, which is more compact but conflates different concepts 
and makes the technical development less clean.

An important point is that head evaluation and its linear variant are 
observationally equivalent, that is, one terminates on $\tm$ if and only if the 
other terminates on $\tm$, and they produce the same head 
variable\begin{SHORT}, for more details see the Appendix (\refsect{lhe-explanation})\end{SHORT}.

\paragraph{The Adopted Presentation.} Linear head evaluation was introduced by
Mascari \& Pedicini and Danos \& Regnier~\cite{DBLP:journals/tcs/MascariP94,Danos04headlinear} as a strategy on proof 
nets. It is to proof nets for the $\l$-calculus what head evaluation is to the $\l$-calculus. The presentation adopted here, noted $\tohl$,
was introduced by Accattoli~\cite{DBLP:conf/rta/Accattoli12}, formulated as a strategy in a $\l$-calculus with explicit sharing, the \emph{linear substitution
calculus}\footnote{The LSC is a subtle
reformulation of Milner's calculus with explicit
substitutions~\cite{DBLP:journals/entcs/Milner07,KesnerOConchuir},
 inspired by Accattoli and Kesner structural
$\l$-calculus ~\cite{DBLP:conf/csl/AccattoliK10}.} (shortened to LSC). The LSC presentation of $\tohl$ is isomorphic to the one on proof nets \cite{DBLP:conf/ictac/Accattoli18}, while the one used by Danos and Regnier---although closely related to proof nets---is not. It is isomorphic only up to Regnier's $\sigma$-equivalence~\cite{regnier_sigma}.

\paragraph{LSC Terms and Levelled contexts.} Let $\mathcal{V}$ be a countable set of variables. 
	Terms of the \emph{linear substitution calculus} (LSC) are defined by the 
	following grammar.
	\[\begin{array}{rrcl}
   \textsc{LSC terms}\quad\quad & \tm,\tmtwo,\tmthree & \grameq & x\in\mathcal{V}\midd \lambda x.\tm\midd 
	\tm\tmtwo\midd\tm\esub\var\tmtwo.

  \end{array}
\]
The construct $\tm\esub\var\tmtwo$ is called an \emph{explicit 
substitution} or ES, not to be confused with the meta-level substitution 
$\tm\isub\var\tmtwo$. 
As is standard, $\tm\esub\var\tmtwo$ binds 
$\var$ in $\tm$, but 
not in $\tmtwo$---terms are still considered up to $\alpha$-conversion. Levelled contexts naturally extend to the LSC.
\[
\begin{array}{rcl}
\multicolumn{3}{c}{\textsc{Leveled Contexts}}\\
\ctx_0		& \grameq &	\ctxhole \midd \la\var\ctx_0 \midd \ctx_0\tm \midd \ctx_0\esub\var\tm;\\
\ctx_{n+1}	& \grameq &	\la\var\ctx_{n+1}\midd \ctx_{n+1}\tm \midd \ctx_{n+1}\esub\var\tm \midd \tm\ctx_{n} \midd \tm\esub\var{\ctx_n}.
\end{array}
\]

\begin{figure*}[t]
\centering
${\footnotesize \begin{array}{l@{\hspace{.45cm}} 
l@{\hspace{.45cm}}l@{\hspace{.45cm}}lll@{\hspace{.45cm}} 
l@{\hspace{.45cm}}l@{\hspace{.45cm}}l}
	\mathsf{Sub}\mbox{-}\mathsf{term} & \mathsf{Context} & \mathsf{Log} & \mathsf{Tape}
	&&	\mathsf{Sub}\mbox{-}\mathsf{term} & \mathsf{Context} & \mathsf{Log} & \mathsf{Tape}
	\\
		\hhline{=========}
		\\[-8pt]
    \dstatetab{ \tm\esub\var\tmtwo }{ \ctx }{ \stme }{ \ste } &
    \iamdes &
    \dstatetab{ \tm }{ \ctxp{\ctxhole\esub\var\tmtwo} }{ \stme }{ \ste }
	\\[3pt]
	
	\dstatetab{ \var }{ \ctxp{\ctxtwo_n\esub\var\tmtwo} }{ \stme }{ 
	\expsn\cdot\ste } &
	\iamdvartwo &
	\dstatetab{ \tmtwo }{ \ctxp{\ctxtwo_n\ctxholep\var\esub\var\ctxhole} }{ \stme }{ (\var,\ctxtwo_n\esub\var\tmtwo,\expsn)\cdot\ste }
	\\[3pt]
		\cline{1-9}
		\\[-8pt]
	\ustatetab{ \tm }{ \ctxp{\ctxhole\esub\var\tmtwo} }{ \stme }{ \ste } &
	\iamues &
	\ustatetab{ \tm\esub\var\tmtwo }{ \ctx }{ \stme }{ \ste }
 \\[3pt]
 
	\ustatetab{ \tmtwo }{ \ctxp{\ctxtwo_n\ctxholep{\var}\esub\var\ctxhole} }{ 	\stme }{ (\var, \ctxtwo_n\esub\var\tmtwo,\ste_n)\cdot\ste } &
	\iamuestwo &
	\ustatetab{ \var }{ \ctxp{\ctxtwo_n\esub\var\tmtwo} }{ \stme }{ 	\ste_n\cdot\ste }
	\\
	\cline{1-9}
	\end{array}}$		
		\vspace{-8pt}
		
		\caption{Transitions for LSC-terms.}
		\label{tab:iames}
	\end{figure*}
	
\paragraph{Contexts and Plugging.} The LSC makes a crucial use of contexts to define 
its operational semantics. First of all, we need \emph{substitution 
contexts}, that simply packs together ES:
\[
	\begin{array}{r@{\hspace{.5cm}} rcl}
	\textsc{Substitution contexts}\quad\quad & \sctx 		& \grameq &	\ctxhole \midd \sctx\esub\var\tm.
	\end{array}
	\]
	When plugging is used for substitution contexts, we write it in a 
	post-fixed 
	manner, that is $\sctxp\tm$, to stress that the ES actually appears on the 
	right of $\tm$.
	
\paragraph{Linear Head Evaluation.} The LSC comes with a notion of reduction 
that 
resembles the decomposed, micro-step process of 
cut-elimination in linear logic proof-nets. Essentially, the meta-level 
substitution $\tm\isub\var\tmtwo$ is decomposed into a sequence of many 
replacements from $\tm\esub\var\tmtwo$ of one occurrence of $\var$ in $\tm$ 
with $\tmtwo$ at the time. Linear head evaluation, moreover, is the reduction 
that only replaces the head variable occurrence $\vartwo$, if it is bound by an 
ES $\esub\vartwo\tmthree$ and leaves the other occurrences of $\vartwo$, if 
any, bound by $\esub\vartwo\tmthree$. 

The rewriting rules are first defined at top level and then closed by head 
contexts, in Figure~\ref{fig:lhe}. A feature of the LSC is that contexts are 
also used to define the linear substitution rule at top level $\mapsto_\ls$.
In plugging $\tm$ in $\hctx$, rule $\tohlls$ may perform on-the-fly renaming of bound variables in $\hctx$, to avoid capture of free variables of $\tm$.
Often, the literature does not include rule $\togc$, responsible for erasing 
steps, in the definition of $\tohl$. The reason is that $\togc$ is strongly 
normalizing and it can be postponed.

\begin{LONG}Note that our definition of $\tohl$ allows more than one $\tohl$ redex at a time in a term. It is not a problem, as $\tohl$ has the diamond property---this is standard.

\paragraph{Relationship with Head Evaluation, and Normal Forms.} Linear head 
evaluation is studied at length in the literature, in particular its 
relationship with head evaluation is well known. On a given term $\tm$, linear 
head evaluation $\tohl$ terminates on the linear head normal form $\lhnf\tm$ if 
and only if head evaluation $\toh$ terminates on the head normal form 
$\hnf\tm$. Moreover, $\hnf\tm$ is obtained from $\lhnf\tm$ by simply 
\emph{unfolding} ES, that is turning them into meta-level substitutions. A 
linear head normal form has the same shape $\la{\var_1}\ldots \la{\var_k} 
(\vartwo \tm_1\ldots \tm_h)$ of a head normal form but for the fact that each 
spine sub-term may be surrounded by a substitution context $\sctx$, that is, 
they have the cumbersome shape (where $\sctx_i$ surrounds $\la{\var_i}\ldots 
\la{\var_k} (\vartwo \tm_1\ldots \tm_h)$ and $\sctxtwo_j$ surrounds $\vartwo 
\tm_1\ldots \tm_j$):
\begin{equation}
\label{eq:lhnf}
\ctxholep{\la{\var_1} \ctxholep{\la{\var_2}\ldots \ctxholep{\la{\var_k} 
(\ctxholep{\ctxholep{\ctxholep\vartwo\tm_1}\sctxtwo_1\ldots 
\tm_h}\sctxtwo_h)}\sctx_k\ldots}\sctx_2} \sctx_1
\end{equation}
where none of the ES in $\sctx_i$ and $\sctxtwo_j$ binds $\vartwo$ (otherwise there would be a $\tohlls$ redex). Unfolding 
the ES of a linear head normal form produces a head normal form having the same 
\emph{spine structure}, that is, with the same abstractions, the same 
\emph{head 
variable} and the same number of arguments---concretely, unfolding the term in 
\refeq{lhnf} one obtains the head normal form $\la{\var_1}\ldots \la{\var_k} 
(\vartwo \tmtwo_1\ldots \tmtwo_h)$ for some $\tmtwo_1, \ldots, \tmtwo_h$. Therefore, in the paper we shall refer to a \emph{$\tohl$-normal term up to substitution} $\la{\var_1}\ldots \la{\var_k} 
(\vartwo \tmtwo_1\ldots \tmtwo_h)$ meaning that we harmlessly ignore the substitution contexts around the spine sub-terms. Please note that we do 
not have any restriction on closed terms, and thus the number of 
$\lambda$-abstractions in the spine of $\lhnf\tm$ and $\hnf\tm$ could also be 0.

%
\end{LONG}

\begin{example}
	We provide here an example of LHR sequence.
	Consider the following 3 steps:
	\[
	\begin{array}{rcl}
	(\la\var\var\var)(\la\vartwo\vartwo)&\tohldb&
	(\var\var)\esub\var{\la\vartwo\vartwo}\tohlls((\la\vartwo\vartwo)\var)\esub\var{\la\vartwo\vartwo}\\
	&\tohldb&
	\vartwo\esub\vartwo\var\esub\var{\la\vartwo\vartwo}
	\end{array}{}
	\]
	that turn a $\beta$/multiplicative redex into a ES, substitute on the head variable occurrence, and continue with another multiplicative step. Two micro substitution steps on the head, followed by two steps of garbage collection complete the evaluation:
	\[
	\begin{array}{rcl}
	\vartwo\esub\vartwo\var\esub\var{\la\vartwo\vartwo}&\tohlls&
	\var\esub\vartwo\var\esub\var{\la\vartwo\vartwo}\\
	&\tohlls&
	(\la\vartwo\vartwo)\esub\vartwo\var\esub\var{\la\vartwo\vartwo}
		\tohlgc^2\la\varthree\varthree
		\end{array}
		\]
	\end{example}

\paragraph{Additional $\IAM$ transitions.} The $\IAM$ presented in the previous 
sections is easily adapted to the LSC, by simply considering (\tr) positions with respect to the extended syntax, and adding the 4
transitions for ES in Fig.~\ref{tab:iames}.

 Transitions $\iamdes$ and $\iamues$ simply skips ES during search---now search 
 is \emph{up to $\beta$-redexes and ES}. Transition $\iamdvartwo$ shortcuts the 
 search of the term $\tmtwo$ to substitute for $\var$, given that $\tmtwo$ is 
 already available in $\esub\var\tmtwo$. Therefore, the machine stays in the 
 $\downp$ phase and moves to evaluate $\tmtwo$. Note that the \trpos for $\var$ 
 is directly added to the log and not to the tape. This is because we have 
 avoided the search of the argument. We have reached it directly: 
 note that when a $\upp$-search ends with the $\iamuaplone$ transition, the 
 \trpos indeed goes from the tape to the log. Transition $\iamuestwo$ is dual 
 to 
 $\iamdvartwo$, and it is used to keep looking for arguments when the current 
 subterm $\tmtwo$ has none left.
 
 All results and 
 considerations of Section~\ref{sec:iam} and \ref{sec:prop} still hold in this more general 
 setting, \emph{mutatis mutandis}\begin{SHORT}, and can be found in 
 \refsect{app-properties-iam} of the Appendix\end{SHORT}.



\section{The Exhaustible State Invariant}\label{sec:exh}
The previous sections introduced all the ingredients for the formal study of the $\IAM$. From now on, we turn to development of the proofs of soundness and adequacy.  The first step, taken here, is to formalize the exhaustible state invariant mentioned in Sect. \ref{sec:prop}.

The intuition behind the invariant is that whenever a
\trpos $\exps$ occurs in a reachable state, it is there \emph{for a
  reason}, because no \trpos occur in initial states,
and transitions only add \trposs to which the machine is supposed to come back. 
In particular, one can somehow \emph{revert} the
process which is responsible for having placed $\exps$ in the state, and
\emph{exhaust} $\exps$.

\paragraph{Why It Is Needed.} The exhaustible state invariant is meant to 
show that some undesirable configurations never arise, to characterize the final states of the 
$\IAM$. On states such as 
$\dstate{ \la\var\ctxtwop\var }{ \ctx }{ \exps\cons\stme }{ \ste  }$ the $\IAM$ requires 
the \trpos $\exps$ to have the shape $(\var, \la\var\ctxtwo, \ste')$, 
that is, to be associated to a position isolating an occurrence of $\var$ in 
$\la\var\ctxtwop\var$, otherwise the machine is stuck. Similarly, on 
states such as 
$\ustate{\tm}{\ctxp{\ctxtwop{x}\esub\var\ctxhole}}{\stme}{\exps\cdot\ste}$ the 
position of $\exps$ is expected to isolate an occurrence of $\var$ in 
$\ctxtwop\var$, or the machine is stuck. Luckily, the machine is never 
stuck 
for these reasons, and exhaustible states are the technical tool to prove it. 
  
  One could redefine the transitions of the $\IAM$ asking---for these states---to jump to whatever variable position is in the \trpos $\exps$. Then the $\IAM$ would not get stuck, and the invariant would not be needed for characterizing final states, but we would then need it for soundness---there is no easy way out. 
  
  \paragraph{First Reading?} Then we suggest to skip this section, as the 
  invariant is involved. It is nonetheless a key technical ingredient and one 
  of the contributions of the paper. The key result used in the rest of the 
  paper is \refcoro{exhaust}.
  
\paragraph{Preliminaries.} Exhaustible states rest on some \emph{tests} for 
their \trposs. More specifically, each \trpos $\exps$ in a 
state $\state$ has an associated test state $s_\exps$, supposed to test the 
exhaustibility of $\exps$ in $\state$. Actually, there shall be \emph{two} 
classes of test states, one accounting for the \trposs in the tape of 
$s$, called
\emph{tape tests}, and one for the
 those in the log of $s$, called \emph{log tests}.

\paragraph{Tape Tests.} Tape tests are easy to define. They focus on one of 
the \trposs in the tape, discarding everything that follows it on the 
tape.
\begin{definition}[Tape tests]
	Let $\state = 
	\nopolstate{\tm}{\ctx_n}{\stmetwo\cons\exps\cons\stmethree}{\ste_n}{\pol}$ 
	be a state. Then the
	\emph{tape test of $\state$ of focus $\exps$} is the state
	$\state_p=\nopolstate{\tm}{\ctx_n}{\stmetwo 
	\cons\exps}{\ste_n}{\upp^{\sizee{\stmetwo \cons\exps}} }$. 
\end{definition}
Note that the direction of tape tests is reversed with respect to that stated 
by the balance invariant, and so, in general, they are not reachable states. 
Such a counter-intuitive fact is needed for the invariant to go through. The 
same shall be true for log tests, introduced next.

\paragraph{Log Tests.} The definition of log tests is more involved. 
The idea is analogous: they focus on a given \trpos in the log. Their 
definition however requires more than simply stripping down the log, as the new 
log and the position still have to form a \trpos---said differently, the 
\emph{position and the log} part of the balance invariant has to be preserved. 
Roughly, when focussing on the $m$-th \trpos $\exps_m$ in the log of a state $ 
\nopolstate{\tm}{\ctx_n}{\stme}{\exps_n\cdots \exps_2\cdot\exps_1}{\pol}$ we 
remove the prefix $\exps_n\cdots \exps_{m+1}$ (if any), and move the current 
position up by $n-m$ levels. Moreover, the tape is emptied and 
the direction is set to $\upp$. Let us define the position change.

Let $(\tmtwo,\lctx{n+1})$ be a position. Then, for 
every
decomposition of $n$ into two natural numbers $m,k$ with
$m+k=n$, we can find contexts $\lctx{m}$ and $\lctx{k}$, and a
  term $\tmthree$ satisfying exactly one of the two following conditions 
  (levels can be incremented in two ways).
\begin{varitemize}
\item
  Case $\tm = \lctxp{m}{\tmthree \lctxp{k}\tmtwo}$.
  Then, the \emph{$m+1$-outer context} of the position $(\tmtwo,\lctx{n+1})$
  is the context $\octx{m+1} \defeq\lctxp{m}{r\ctxhole}$ 
  of level $m+1$ and the \emph{$m+1$-outer position} is
  $(\lctxp{k}\tmtwo,\octx{m+1})$.
\item
  Case $\tm = \lctxp{m}{\tmthree[x\leftarrow\lctxp{k}\tmtwo]}$.
  Then, the \emph{$m+1$-outer context} of the position $(\tmtwo,\lctx{n+1})$
  is the context $\octx{m+1} \defeq\lctxp{m}{r[x\leftarrow\ctxhole]}$ 
  of level $m+1$ and the \emph{$m+1$-outer position} is
  $(\lctxp{k}\tmtwo,\octx{m+1})$.
\end{varitemize}
Note that the $m$-outer context and the $m$-outer position (of a given position) have level $m$.
It is easy to realize that any position having level $n$ has \emph{unique}
$m$-outer context and $m$-outer position, for every $1\leq m\leq n+1$, and that, moreover, outer positions are hereditary, in the following sense: the $i$-outer position of the $m$-outer position of $(\tmtwo,\lctx{n+1})$ is exactly the $i$-outer position of $(\tmtwo,\lctx{n+1})$.
\begin{definition}[Log tests]
  Let $\state =
  \nopolstate{\tm}{\ctx_n}{\stme}{\exps_n\cdots \exps_2\cdot\exps_1}{\pol}$ be a
   state with $1\leq m\leq n$, and $(\tmtwo,\octx{m})$ be the $m$-outer 
  position of
  $(\tm,\lctx{n})$. The \emph{$m$-log test of $\state$ of focus $\exps_m$} is the
  state $\outsi{m}{\state}\defeq\nopolstate{\tmtwo}{\octx{m}}{\stempty}{\exps_m\cdots \exps_2\cdot \exps_1}{\upp}$.
\end{definition}

\begin{LONG}
By definition, log tests for $\state$ do not depend on the direction of
$\state$, nor on the underlying tape, and they are stable by head
translations of the position $(\tm,\ctx_n)$ of $\state$, in the sense that if 
$\tm
= \hctxp\tmthree$ then $\state= \nopolstate{\tm}{\ctx_n}{\stme}{\ste}{\pol}$
and its head translation
$\nopolstate{\tmthree}{\ctx_n\ctxholep\hctx}{\stmetwo}{\ste}{\pol}$
induce the same log tests (because the two positions have the same
outer positions and the two states have the same logs).

\begin{lemma}[Invariance properties of log tests]
  \label{l:outer-inv} 
  Let $\state = \nopolstate{\tm}{\ctx_n}{\stme}{\expsn}{\pol}$ be a
   state. Then:
  \begin{varenumerate}
  \item \label{p:outer-inv-one} \emph{Direction}: the dual
    $\nopolstate{\tm}{\ctx_n}{\stme}{\expsn}{\pol^1}$ of $\state$
    induces the same log tests;
  \item \label{p:outer-inv-two} \emph{Tape}: the state
    $\nopolstate{\tm}{\ctx_n}{\stmetwo}{\expsn}{\pol}$ obtained from
    $\state$ replacing $\stme$ with an arbitrary tape $\stmetwo$
    induces the same log tests;
  \item \label{p:outer-inv-three} \emph{Head translation}: if $\tm =
    \hctxp\tmthree$ then the head translation
    $\nopolstate{\tmthree}{\ctx_n\ctxholep\hctx}{\stmetwo}{\expsn}{\pol}$
    of $\state$ induces the same log tests.
  \item \label{p:outer-inv-four} \emph{Inclusion}: if $\ctx_n = 
  \ctx_m\ctxholep{\ctx_i}$ and $\expsn = \ste_i\cdot \ste_m$
  then the log tests of 
  $\nopolstate{\ctx_i\ctxholep\tm}{\ctx_m}{\stmetwo}{\ste_m}{\pol}$
  are log tests of $\state$.
  \end{varenumerate}
\end{lemma}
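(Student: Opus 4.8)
The plan is to reduce all four points to a single structural observation about $m$-outer positions. By the very definition of $\outsi{m}{\state}$, the $m$-log test of $\state=\nopolstate{\tm}{\ctx_n}{\stme}{\expsn}{\pol}$ is determined by only two data: the $m$-outer position of the code position $(\tm,\ctx_n)$, and the length-$m$ suffix $\exps_m\cdots\exps_1$ of the log $\expsn$; it mentions neither the tape $\stme$ nor the direction $\pol$. Hence \emph{Direction} and \emph{Tape} are immediate — passing to the dual state, or replacing $\stme$ by an arbitrary $\stmetwo$, changes neither the code position nor the log, so the family of log tests is literally unchanged.

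For \emph{Head translation} and \emph{Inclusion} it therefore suffices to compare $m$-outer positions. I would first isolate an auxiliary fact: for a level-$n$ context and $1\le m\le n$, the $m$-outer position of $(\tm,\ctx_n)$ depends only on the $m$ outermost application/ES frames of $\ctx_n$, and its subterm is $\lctxp{n-m}{\tm}$ where $\lctx{n-m}$ collects everything $\ctx_n$ encloses strictly inside the $m$-th such frame. From this I derive two consequences, both by a routine induction over the context being moved across, splitting on the two cases (application vs. explicit substitution) of the definition of $m$-outer context and keeping the level indices aligned: (i) if $\hctx$ is a head context and $\hctxp\tmthree=\tm$, then $(\tm,\ctx_n)$ and $(\tmthree,\ctx_n\ctxholep\hctx)$ have the same $m$-outer position for every $1\le m\le n$; (ii) if $\ctx_n=\ctx_m\ctxholep{\ctx_i}$ (so $n=m+i$), then for every $1\le j\le m$ the $j$-outer positions of $(\tm,\ctx_n)$ and of $(\ctx_i\ctxholep\tm,\ctx_m)$ coincide. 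For (i) the key point is that a level-$0$ context is nested strictly below all $n$ frames of $\ctx_n$, so it does not change which frames are the $m$ outermost ones, while the enclosed material re-associates as $\lctxp{n-m}{\hctxp\tmthree}=\lctxp{n-m}{\tm}$. For (ii) the key point is that the decomposition $\ctx_m\ctxholep{\ctx_i}$ segregates the $m$ outermost frames (those of $\ctx_m$) from the $i$ innermost ones (those of $\ctx_i$), so that for $j\le m$ the $j$-outer subterm reads $\lctxp{m-j}{\ctx_i\ctxholep\tm}$ under both decompositions; this subsumes the hereditary property of outer positions recalled just above.

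The two remaining points then fall out. For \emph{Head translation}: the head-translated state $\nopolstate{\tmthree}{\ctx_n\ctxholep\hctx}{\stmetwo}{\expsn}{\pol}$ has the same log $\expsn$ and, by (i), the same $m$-outer position for each $1\le m\le n$, hence exactly the same log tests as $\state$. For \emph{Inclusion}: writing $\expsn=\ste_i\cdot\ste_m$ with $\ste_m=\exps_m\cdots\exps_1$ the length-$m$ suffix of the log, the state $\nopolstate{\ctx_i\ctxholep\tm}{\ctx_m}{\stmetwo}{\ste_m}{\pol}$ has, for each $1\le j\le m$, a $j$-log test formed from the $j$-outer position of $(\ctx_i\ctxholep\tm,\ctx_m)$ and the suffix $\exps_j\cdots\exps_1$; by (ii) this is precisely $\outsi{j}{\state}$, so every log test of that state occurs among the log tests of $\state$.

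The main obstacle is not conceptual but bookkeeping: running the induction of (i)--(ii) through the two-case definition of $m$-outer contexts without misaligning the level indices, and making fully explicit that $\ctx_n=\ctx_m\ctxholep{\ctx_i}$ forces the $m$ outermost boxes of $\ctx_n$ to be exactly the boxes of $\ctx_m$ — precisely the spot where the levelled-context apparatus is doing the work. It is the kind of calculation that is easy to get subtly wrong, which is why I would prove (i)--(ii) once, as self-standing statements about outer positions, rather than re-deriving them inside each point.
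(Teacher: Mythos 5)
Your proposal is correct and matches the paper's proof in substance: the paper likewise dismisses the first three points as immediate from the definition (justifying \emph{Head translation} exactly by the invariance of outer positions that you isolate as fact (i)), and establishes \emph{Inclusion} by an induction over the inner context. The only difference is packaging: where you factor the induction into a self-standing compatibility lemma (ii) about outer positions and then read off the log tests, the paper runs the induction directly on log tests, peeling one level of the inner context at a time and invoking Point 3 at each step to re-absorb the peeled application or explicit-substitution frame.
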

\begin{proof}
    The first three points are immediate consequences of the definition
    of log test. We prove the fourth point. Let $\statetwo =
    \nopolstate{\ctx_j\ctxholep\tm}{\ctx_i}{\stme}{\expsind
      i}{\pol}$. By induction on $j$. If $j = 0$ then $i = n$ and
    $\state = \statetwo$, therefore the statement is simply says that
    the log test of $\state$ is $\outs\state$, that is obviously
    true. Let $j>0$. By \ih, the log test $\statethree$ of
    $\nopolstate{\ctx_{j-1}\ctxholep\tm}{\ctx_{i+1}}{\stme}{\exps\cons
      \expsind i}{\pol}$ is $\outsi i \state$. Let us spell out
    $\statethree$. If $\ctx_{i+1} = \ctx_i\ctxholep{\tmtwo\ctx_0}$ then
    $\statethree = \nopolstate{\tmtwo}{\ctx_i\ctxholep{\ctxhole
        \ctx_{j-1}\ctxholep\tm}}{\exps}{\expsind i}{\upp}$. Note that
    $\ctx_j = \ctx_i\ctxholep{\tmtwo\ctxhole}$. Since $\statethree =
    \outsi i \state$, we have $ \outsi {i-1} \state =
    \outs\statethree$. Now, since log tests are stable by head
    translation (Point 3), we have that $\outsi {i-1} \state$ is also
    the log test of the translation of $\statethree$ with respect to
    $\ctx_{j-1}\ctxholep\tm$, that is, of the state $\nopolstate{\tmtwo
      \ctx_{j-1}\ctxholep\tm}{\ctx_i}{\exps}{\expsind i}{\upp} =
    \nopolstate{\ctx_j\ctxholep\tm}{\ctx_i}{\exps}{\expsind i}{\upp}$.
\end{proof}\end{LONG}

Exhausting a log position $\lpos$ means backtracking to it. We decorate the 
backtracking transition $\tomachbtone$ and $\tomachbttwo$ as $\tomachbtonedec$ 
and $\tomachbttwodec$ to specify the 
involved logged position $\lpos$. Finally, we need a notion of state extending 
the context of a \trpos.
\begin{definition}[State surrounding a position]
 Let $\exps=(\tm,\ctxtwo,\stetwo)$ be a \trpos. A  state $\state$ surrounds 
 $\exps$ if $\state = 
 \ustate{\tm}{\ctx_n\ctxholep{\ctxtwo}}{\stempty}{\stetwo\cdot\ste_n}$ for some 
 $\ctx_n$ and $\ste_n$.
\end{definition}

\paragraph{The Exhaustibility Invariant.} After having introduced all the 
necessary preliminaries, we can now  state the the property that we are next 
showing to be invariant.
\begin{definition}[Exhaustible States]
   $\exstates$ is the smallest
   set of states $\state$ such that if $\state_\lpos$ is a tape or a log 
   test of $\state$ of focus $\lpos$, then 
   $\state_\lpos\toiam^*\tomachbttwodec\statethree\in\exstates$, where
   $\statethree$ surrounds $\lpos$. States in $\exstates$ are called 
   \emph{exhaustible}.
\end{definition}
Informally, exhaustible states are those for which every \trpos can be 
successfully tested,  that is, the $\IAM$ can backtrack to (an exhaustible 
state surrounding) it, if properly initialized. Roughly, a state is exhaustible 
if the backtracking information encoded in its \trposs is coherent. The set
$\exstates$ being the \emph{smallest} set of such states implies
that checking that a state is exhaustible can be finitely certified, \ie there must be a finitary proof.
\begin{proposition}[Exhaustible invariant]
\label{prop:good-invariant}
   Let $\state$ be a $\IAM$ reachable state. Then $\state$ is exhaustible.
\end{proposition}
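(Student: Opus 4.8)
The plan is to prove, by induction on the length of a run $\state_0 \toiam^* \state$ from an initial state, that every state it passes through belongs to $\exstates$. It helps to read $\exstates$ as the least fixed point $\mu F$ of the monotone operator $F$ that sends a set $X$ of states to the set of those states all of whose tape and log tests reduce, via $\toiam^*\tomachbttwodec$, to an $X$-state surrounding the focused \trpos. The base case is then immediate: an initial state $\dstate{\tm}{\ctxhole}{\resm^k}{\epsilon}$ contains no \trpos at all --- its tape is a string of $\resm$'s and its log is empty --- so it has no tests and therefore lies in $F(\emptyset)\subseteq\mu F=\exstates$. Throughout, the Balance Invariant (Lemma~\ref{lemma:invarianttwo}) is used tacitly to guarantee that logs have the length demanded by the position, so that outer positions, tests, and surrounding states are all well defined.

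A useful observation that isolates the routine part of the argument is that $\exstates$ is closed under several operations on states: reversing the direction, replacing the tape by a prefix of it (in particular emptying it), head-translating the position, and restricting the position to an inner context together with the matching prefix of the log. Indeed, by the invariance properties of log tests (Lemma~\ref{l:outer-inv}) each such operation either leaves the log tests unchanged or turns them into a subset, while shortening the tape only removes tape tests and the remaining ones are carried along unchanged (or head-translated); hence the tests of the transformed state are, up to these invariances, among those of $\state$, and so the transformed state lies in $F(\exstates)=\exstates$ whenever $\state$ does. More generally, since $\toiam^*\tomachbttwodec$ composes, it is enough that every test of the transformed state \emph{reduces} to a test of $\state$.

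With this in hand, the inductive step $\state \toiam \statetwo$ --- where $\state\in\exstates$ by the induction hypothesis --- is handled by cases on the transition. The transitions that search up to $\beta$-redexes and ES ($\iamdap$, $\iamdlamone$, $\iamuapltwo$, $\iamulam$, $\iamdes$, $\iamues$) neither touch the log nor the pool of \trposs, and change the position only by a head translation; using reversibility (bi-determinism) to pin down the unique predecessor, every test of $\statetwo$ reduces in one step to the corresponding test of $\state$, and one concludes by the closure observation. The transitions that move a \trpos between tape and log ($\iamuaplone$, $\iamuapr$) or consume one ($\iamdlamtwo$, $\iamuestwo$) only rearrange or shrink the pool of \trposs: again each test of $\statetwo$ is, or reduces in a single $\IAM$ step to, a test of $\state$ (for instance, for $\iamuaplone$ the log test focused on the just-moved \trpos performs one $\iamuapr$ step and becomes the old tape test focused on that same \trpos), so the inclusion clause of Lemma~\ref{l:outer-inv} together with the closure observation settles these.

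The crux, and the step I expect to be the main obstacle, is the pair of transitions that create a fresh \trpos, $\iamdvar$ and $\iamdvartwo$. Consider $\iamdvar$: from $\dstate{\var}{\ctxp{\la\var\ctxtwo_n}}{\stme}{\ste_n\cdot\ste}$ the machine moves to $\statetwo = \ustate{\la\var\ctxtwo_n\ctxholep\var}{\ctx}{(\var,\la\var\ctxtwo_n,\ste_n)\cdot\stme}{\ste}$, pushing the new \trpos $\exps\defeq(\var,\la\var\ctxtwo_n,\ste_n)$ onto the tape. For a test of $\statetwo$ focused on a \trpos already present in $\stme$ or $\ste$, a single $\iamdlamtwo$ step pops $\exps$ and, by reversibility, lands on the corresponding test of the predecessor $\state$, from which, by the induction hypothesis, an exhaustible surrounding state is reachable. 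The genuinely new obligation is the tape test of $\statetwo$ of focus $\exps$, namely $\dstate{\la\var\ctxtwo_n\ctxholep\var}{\ctx}{\exps}{\ste}$: since $\exps$ has by construction exactly the shape $(\var,\la\var\ctxtwo_n,-)$ required by $\iamdlamtwo$, one $\tomachbttwodec$ step reaches $\ustate{\var}{\ctxp{\la\var\ctxtwo_n}}{\epsilon}{\ste_n\cdot\ste}$, which indeed surrounds $\exps$; and this state has no tape tests, while its log tests coincide, by Lemma~\ref{l:outer-inv}, with those of the predecessor $\state$, so it lies in $\exstates$. The case of $\iamdvartwo$ is symmetric, with the fresh \trpos landing in the log and $\iamuestwo$ playing the role of $\iamdlamtwo$. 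Collecting the cases yields $\statetwo\in\exstates$, completing the induction. The delicate points throughout are the bookkeeping of test directions, which flip with the parity of the preceding tape per the Balance Invariant, and checking via reversibility that the claimed one-step reductions between tests of $\statetwo$ and tests of $\state$ actually land where asserted --- a volume of small verifications rather than any conceptual difficulty.
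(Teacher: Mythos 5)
Your proof follows essentially the same route as the paper's: induction on the length of the run, a case analysis on the transitions that uses the invariance properties of log tests and determinism to transfer successful tests between a state and its predecessor, and the key observation that the tape test focused on the freshly created \trpos{} at a $\iamdvar$ (resp.\ $\iamdvartwo$) step immediately fires $\iamdlamtwo$ (resp.\ $\iamuestwo$) and lands on a state surrounding that \trpos, whose exhaustibility is inherited from the predecessor. The only caveat is that for some parities it is the test of the predecessor that reduces in one step to the test of the successor rather than the converse, so one concludes by determinism of runs rather than by composition --- a point your appeal to bi-determinism already covers.
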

The proof of \refprop{good-invariant} is long, but logically
quite simple, being structured around a simple induction on the length of the 
run from the initial state to $\state$, and can be found in the Appendix.
\begin{LONG}
\begin{proof}
  Let $\state = \dstate{\tm}{\ctxhole}{\resm^h}{\stempty} \toiam^k \statetwo$. By
  induction on $k$. For $k=0$ there is nothing to prove because the state has no tape nor log tests. Then suppose
  $\state\toiam^{k-1}\statethree\toiam\statetwo$. By \ih, $\statethree
  = \nopolstate{\tmtwo}{\ctx}{\stme}{\ste}{\pol}$ is exhaustible, and with
  this hypothesis we need to conclude that $\statetwo$ is exhaustible, too.
  There are many cases to take into account, depending on the transition used to
  move from $\statethree$ to $\statetwo$. We recall that we use $\sizee{\stme}$ for the number of \trpos in $\stme$, called \emph{position length of $\stme$} in the proof.
  
  First,
  suppose that $\pol = \downp$. Cases of $\statethree\toiam\statetwo$:
  \begin{enumerate}
  \item
    \emph{Application}, \ie\ $\tmtwo = \tmthree\tmfour$ and
	\begin{center}$
    	   \dstate{ \tmthree\tmfour }{ \ctx }{ \stme }{ \ste } 
	   \iamdap 
	   \dstate{ \tmthree }{ \ctxp{\ctxhole\tmfour} }{ \resm\cons \stme }{ \ste } = \statetwo.
	 $\end{center}
	 We have to show that the obtained state $\statetwo$ is
         exhaustible. For log tests, it follows from
         \reflemmap{outer-inv}{three} and the \ih:
         $\statethree$ is a head translation of $\statetwo$, and the
         lemma states that they have the same log tests, which are
         exhaustible because $\statethree$ is exhaustible by \ih
		
	 For tape tests, consider a decomposition $\stme =
         \stmetwo\cons\exps\cons\stmethree$. Two cases, depending on
         the parity of $\sizee{\stmetwo}$:
	 \begin{enumerate}
	 \item
           \emph{$\sizee{\stmetwo}$ is odd}. Then the position
           length of the tape $\resm\cons \stmetwo\cons\exps$ is
           even (occurrences of $\resm$ are ignored) and so the
           direction of the corresponding tape test $\statetwo_\exps$ is
           $\upp$. Note that $\statetwo_\exps$ reduces to a tape test
           $\statethree_\exps$ for $\statethree$ having the same focus $\exps$ of $\statetwo_\exps$:
	   	\begin{center}$\statetwo_\exps = \ustate{\tmthree}{
             \ctxp{\ctxhole\tmfour} }{ \resm\cons \stmetwo\cons\exps}{
             \ste } \iamuapltwo \ustate{\tmthree \tmfour}{ \ctx }{
             \stmetwo\cons\exps}{ \ste } = \statethree_\exps$\end{center}
              By \ih,
           $\statethree$ is exhaustible, and so $\statethree_\exps$
           evolves to an exhaustible state surrounding $\exps$, call
           it $\statefour_\exps$.
           Then 
           $\statetwo_\exps$ evolves to $\statefour_\exps$ and the test is successful.
	 \item
           \emph{$\sizee{\stmetwo}$ is even}. Then $\sizee{\resm\cons
             \stmetwo\cons\exps}$ is odd, and the direction of the corresponding
           tape test $\statetwo_\exps$ os $\statetwo$ is $\downp$. Note that the
           corresponding tape test
           $\statethree_\exps$ of $\statethree$ reduces to
           $\statetwo_\exps$:
	   	\begin{center}$
           \statethree_\exps = \dstate{ \tmthree\tmfour }{ \ctx }{ 
           \stmetwo\cons\exps }{ \ste } \iamuapltwo \dstate{\tmthree}{ 
           \ctxp{\ctxhole\tmfour} }{ \resm\cons \stmetwo\cons\exps}{ \ste } = 
           \statetwo_\exps
           $\end{center}
           By \ih, $\statethree$ is exhaustible, then
           $\statethree_\exps$ evolves to an exhaustible state
           surrounding  $\exps$, call it $\statefour_\exps$.
           The IAM is deterministic,
           so $\statetwo_\exps$ itself
           reduces to $\statefour_\exps$.
	 \end{enumerate}
	 \item \emph{Abstraction 1}, \ie\ $\tmtwo=\la\var\tmthree$ and 
	 $\stme=\resm\cdot\stmetwo$. Identical to the previous one.
    \item \emph{Variable bound by an abstraction}, \ie\ $\tmtwo = \var$ and
      	\begin{center}$\begin{array}{lll}
      \statethree = & \dstate{ \var }{ \ctxp{\l\var.\ctxtwo_n} }{ \stme }{ \expsn\cdot\ste } 
      \\
      \iamdvar &
      \ustate{ \l\var.\ctxtwo_n\ctxholep\var}{ \ctx }{ (\var,\l\var.\ctxtwo_n,\expsn)\cdot\stme }{ \ste } = \statetwo
      	\end{array}$\end{center}
      The proof that $\statetwo$ is exhaustible is divided in two parts:
      \begin{enumerate}
      \item
        \emph{Log testing}. By \reflemmap{outer-inv}{four}, all log tests of $\statetwo$
        are also log tests of $\statethree$. Since the latter is
        exhausible by \ih, then all the log tests of
        $\statetwo$ are successful.
      \item \emph{Tape testing}.
        We need to consider various cases, corresponding to the
        various decompositions of the tape $\expstwo \cons \stme$
        where $\expstwo =
        (\var,\l\var.\ctxtwo_n,\expsn)$:
	\begin{enumerate}
	\item
          \emph{The \trpos to test is $\exps=\expstwo$}, \ie\ the
          first one. We are then considering a prefix of odd length of
          $\expstwo\cons\stme$, so the direction of the corresponding
          tape test $\statetwo_\exps$ is $\downp$. Observe, however, that
          by definition
      	\begin{center}$\begin{array}{lll}
          \statetwo_\exps = & \dstate{ \l\var.\ctxtwo_n\ctxholep\var}{ \ctx }{ (\var,\l\var.\ctxtwo_n,\expsn)}{ \ste }
          \\
          \iamdlamtwo & \ustate{ \var}{ \ctxp{\l\var.\ctxtwo_n} }{ \stempty}{ \expsn\cons\ste} = \statethree^\bot
      	\end{array}$\end{center}
          where $\statethree^\bot$ is trivially surrounding $\exps$.
          Moreover, by \ih, $\statethree$ is exhaustible, a property
          which is easily transferred to $\statethree^\bot$: the log tests are the same by \reflemmap{outer-inv}{one}, while $\statethree^\bot$ satisfies tape testing
          trivially, because the tape is empty.
	\item
          \emph{The prefix $\stmetwo\cons\exps$ of the tape has
            even length} and the direction of the corresponding tape test
          $\statetwo_\exps$ is $\upp$. Let $\stmetwo
          =(\var,\l\var.\ctxtwo_n,\expsn)\cons\stmethree$. Note
          that the corresponding tape test $\statethree_\exps$ of $\statethree$
          reduces to $\statetwo_\exps$:
      	\begin{center}$\begin{array}{lll}
          \statethree_\exps = & \dstate{ \var }{ \ctxp{\l\var.\ctxtwo_n}}
          {\stmethree\cons\exps}{ \expsn\cdot\ste } 
          \\
          \iamdvar &
          \ustate{ \l\var.\ctxtwo_n\ctxholep\var}{ \ctx }{
          (\var,\l\var.\ctxtwo_n,\expsn)\cons\stmethree\cons\exps}{
           \ste } = \statetwo_\exps
      	\end{array}$\end{center}
          By \ih, $\statethree$ is exhaustible, then
          $\statethree_\exps$ evolves to an exhaustible state
          surrounding $\exps$, call it $\statefour_\exps$.
          The IAM is deterministic,
          so $\statetwo_\exps$ itself
          reduces to $\statefour_\exps$, and the test is successful.
	\item
          \emph{The prefix $\stmetwo\cons\exps$ of the tape has
            odd strictly positive length} and the direction of the corresponding
          log test
          $\statetwo_\exps$ is $\downp$. Let $\stmetwo
          =(\var,\l\var.\ctxtwo_n,\expsn)\cons\stmethree$. Note
          that $\statetwo_\exps$ reduces to the corresponding log test
          $\statethree_\exps$ of $\statethree$:
	        	\begin{center}$\begin{array}{lll}
		\statetwo_\exps = &
		\dstate{ \l\var.\ctxtwo_n\ctxholep\var}{ \ctx }{ (\var,\l\var.\ctxtwo_n,\expsn)\cons\stmethree\cons\exps }{ \ste } 
	  \\
	  \iamdlamtwo &
	  \ustate{ \var }{ \ctxp{\l\var.\ctxtwo_n} }{\stmethree\cons\exps}{ \expsn\cdot\ste }  = \statethree_\exps
	        	\end{array}$\end{center}
          We can then proceed as usual using the \ih
	\end{enumerate}
	\end{enumerate}
	
    \item
      \emph{Abstraction 2}, \ie\ $\tmtwo = \la\var\tmthree$ and
      	\begin{center}$\begin{array}{lll}
      \statethree = &
      \dstate{ \la\var\tmthree }{ \ctx }{
        (\var,\la\var\ctxtwo_n,\stetwo)\cons\stme }{ \ste }
      \\
      \iamdlamtwo & \ustate{ \var}{ \ctxp{\la\var\ctxtwo_n} }{ \stme }{
        \stetwo \cons\ste } = \statetwo
            	\end{array}$\end{center}
      \begin{enumerate}
      \item
        \emph{Log testing}.  Let $\exps =
        (\var,\ctxp{\la\var\ctxtwo_n},\stetwo)$ and note that the tape tests of $\statethree$ of focus $\exps$ does the following transition:
      	\begin{center}$\begin{array}{lll}
        \statethree_\exps = &
        \dstate{ \la\var\tmthree }{ \ctx }{ (\var,\la\var\ctxtwo_n,\stetwo) }{ \ste } 
        \\
	\iamdlamtwo &
	\ustate{ \var}{ \ctxp{\la\var\ctxtwo_n} }{ \stempty}{ \stetwo \cons\ste }  = \statetwo_\stempty
              	\end{array}$\end{center}
        Now, $\statetwo_\stempty$ surrounds $\exps$ and thus, by \ih (tape testing of $\statethree$), $\statetwo_\stempty$ is exhaustible. By \reflemmap{outer-inv}{two}, $\statetwo_\stempty$ and $\statetwo$ have the same log tests, so log testing for $\statetwo$ holds because it does for $\statetwo_\stempty$.
      \item
        \emph{Tape testing}. As usual, we have to consider various cases,
        corresponding to the possible decompositions $\stme = \stmetwo\cons\exps\cons\stmethree$
        of the tape.
	\begin{enumerate}
	\item
          \emph{$\sizee{\stmetwo}$ is odd}, so that the prefix
          $\stmetwo\cons\exps$ of the tape has even length and the
          direction of the tape test  $\statetwo_\exps$ corresponding
          to $\exps$ is $\upp$. Note
          that the tape test $\statethree_\exps$ of $\statethree$
          reduces to the corresponding tape test $\statetwo_\exps$ of $\statetwo$:
      	\begin{center}$\begin{array}{lll}
          \statethree_\exps = & \dstate{ \l\var.\ctxtwo_n\ctxholep\var}{ \ctx }{ (\var,\l\var.\ctxtwo_n,\stetwo)\cons\stmetwo\cons\exps }{ \ste } 
          \\
	  \iamdlamtwo &
	  \ustate{ \var }{ \ctxp{\l\var.\ctxtwo_n} }{\stmetwo\cons\exps}{ \stetwo\cdot\ste } 
	  = \statetwo_\exps
              	\end{array}$\end{center}
          We can then proceed as usual, exploiting the determinism of
          the $\IAM$ and the \ih
        \item
          \emph{$\sizee{\stmetwo}\neq 0$ is even}, so that the prefix
          $\stmetwo\cons\exps$ of the tape has odd length and the
          direction of the tape test $\statetwo_\exps$ corresponding
          to $\exps$ is
          $\downp$. Note that $\statetwo_\exps$ reduces to the corresponding
          tape test $\statethree_\exps$ of $\statethree$:
      	\begin{center}$\begin{array}{lll}
          \statetwo_\exps = &\dstate{ \var }{
            \ctxp{\l\var.\ctxtwo_n} }{\stmetwo\cons\exps}{
            \stetwo\cdot\ste } 
            \\
            \iamdvar & \ustate{
            \l\var.\ctxtwo_n\ctxholep\var}{ \ctx }{
            (\var,\l\var.\ctxtwo_n,\stetwo)\cons\stmetwo\cons\exps}{
            \ste } = \statethree_\exps
              	\end{array}$\end{center}
          Again, we can then proceed as usual using the \ih
	\end{enumerate}
      \end{enumerate}
      
      \item \emph{Explicit Substitution}, \ie $\tmtwo = \tmthree\esub\var\tmfour$ and
      	\begin{center}$
	    \statethree = \dstate{ \tmthree\esub\var\tmfour }{ \ctx }{ \stme }{ \ste } 
    \iamdes 
    \dstate{ \tmthree }{ \ctxp{\ctxhole\esub\var\tmfour} }{ \stme }{ \ste } = \statetwo
    $\end{center}
 For log testing, it follows from
         \reflemmap{outer-inv}{three} and the \ih:
         $\statethree$ is a head translation of $\statetwo$, and the
         lemma states that they have the same log tests, which are
         exhaustible because $\statethree$ is exhaustible by \ih
		
	 For tape testing it goes exactly as the application case. We spell it out anyway. Consider a decomposition $\stme =
         \stmetwo\cons\exps\cons\stmethree$. Two cases, depending on
         the parity of $\sizee{\stmetwo}$:
	 \begin{enumerate}
	 \item
           \emph{$\sizee{\stmetwo}$ is odd}. Then the position
           length of the tape $\stmetwo\cons\exps$ is
           even and so the
           direction of the corresponding tape test $\statetwo_\exps$ is
           $\upp$. Note that $\statetwo_\exps$ reduces to a tape test
           $\statethree_\exps$ for $\statethree$:
	         	\begin{center}$\begin{array}{lll}
		\statetwo_\exps = & \ustate{\tmthree}{
             \ctxp{\ctxhole\esub\var\tmfour} }{  \stmetwo\cons\exps}{
             \ste } 
             \\
             \iamues & \ustate{\tmthree \esub\var\tmfour}{ \ctx }{
             \stmetwo\cons\exps}{ \ste } = \statethree_\exps
              	\end{array}$\end{center}
                       Again, we then proceed as usual using the \ih
                       
	 \item
           \emph{$\sizee{\stmetwo}$ is even}. Then $\sizee{
             \stmetwo\cons\exps}$ is odd, and the direction of the corresponding
           tape test $\statetwo_\exps$ os $\statetwo$ is $\downp$. Note that the
           corresponding tape test
           $\statethree_\exps$ of $\statethree$ reduces to
           $\statetwo_\exps$:
      	\begin{center}$\begin{array}{lll}
           \statethree_\exps = & \dstate{ \tmthree\esub\var\tmfour }{ \ctx }{ \stmetwo\cons\exps }{ \ste } 
           \\
           \iamdes & \dstate{\tmthree}{ \ctxp{\ctxhole\esub\var\tmfour} }{ \stmetwo\cons\exps}{ \ste } = \statetwo_\exps
              	\end{array}$\end{center}
           Again, we then proceed as usual, exploiting the determinism of
          the $\IAM$ and the \ih
	 \end{enumerate}

	\item \emph{Variable bound by an explicit substitution}, \ie $\tmtwo = \var$ and
	\begin{center}$\begin{array}{lll}
	\statethree = &\dstate{ \var }{ \ctxp{\ctxtwo_n\esub\var\tmthree} }{ \stme }{ \expsn\cdot\ste } 
	\\
	 \iamdvartwo&
	\dstate{ \tmthree }{ \ctxp{\ctxtwo_n\ctxholep\var\esub\var\ctxhole} }{ \stme }{ (\var,\ctxtwo_n\esub\var\tmthree,\expsn)\cdot\ste } = \statetwo
	\end{array}$\end{center}
		\begin{enumerate}
			\item \emph{Log testing}: let $\exps \defeq (\var, \ctxtwo_n\esub\var\tmthree,\expsn)$ and $m = \size{\exps\cdot\ste}$. The $m$-log test of $\statetwo$ is 
				\begin{center}
				$\outsi{m}{\state} = \ustate{ \tmthree }{ \ctxp{\ctxtwo_n\ctxholep\var\esub\var\ctxhole} }{ \epsilon }{ (\var,\ctxtwo_n\esub\var\tmthree,\expsn)\cdot\ste }$
				\end{center}
			which makes a transition
				\begin{center}
				$\iamuestwo \ustate{ \var }{ \ctxp{\ctxtwo_n\esub\var\tmthree} }{ \epsilon }{ \expsn\cdot\ste } = (\statethree_\epsilon)^\bot$
				\end{center}
			that is a state surrounding $\exps$, as required by log testing. We have to prove that $(\statethree_\epsilon)^\bot$ is exhaustible. Tape testing is trivial, because the tape is empty. Log testing follows from the \ih\ and the fact that $(\statethree_\epsilon)^\bot$ is $\statethree$ with reversed direction and without the tape, and so by \reflemmap{outer-inv}{one} and \reflemmap{outer-inv}{two} they have the same log tests.
			
			Note that the $i$-log tests of $\statetwo$ for $i<m$ are the $i$-log tests of $\statetwo$ (\reflemmap{outer-inv}{four}), and so they satisfy the log testing clause by the \ih
			
			\item \emph{Tape testing}:  it goes exactly as in the previous ordinary cases. We spell it out anyway. Consider a decomposition $\stme =
         \stmetwo\cons\exps\cons\stmethree$. Two cases, depending on
         the parity of $\sizee{\stmetwo}$:
	 \begin{enumerate}
	 \item
           \emph{$\sizee{\stmetwo}$ is odd}. Then the position
           length of the tape $\stmetwo\cons\exps$ is
           even and so the
           direction of the corresponding tape test $\statetwo_\exps$ is
           $\upp$. Note that $\statetwo_\exps$ reduces to a tape test
           $\statethree_\exps$ for $\statethree$:
	\begin{center}$\begin{array}{lll}
	\statetwo_\exps = &
	   \ustate{ \tmthree }{ \ctxp{\ctxtwo_n\ctxholep\var\esub\var\ctxhole} }{ \stmetwo\cons\exps }{ (\var,\ctxtwo_n\esub\var\tmthree,\expsn)\cdot\ste }
\\	    \iamuestwo  &
	    \ustate{ \var }{ \ctxp{\ctxtwo_n\esub\var\tmthree} }{ \stmetwo\cons\exps }{ \expsn\cdot\ste } = \statethree_\exps\end{array}$\end{center}
                       Again, we then proceed as usual using the \ih
                       
	 \item
           \emph{$\sizee{\stmetwo}$ is even}. Then $\sizee{
             \stmetwo\cons\exps}$ is odd, and the direction of the corresponding
           tape test $\statetwo_\exps$ os $\statetwo$ is $\downp$. Note that the
           corresponding tape test
           $\statethree_\exps$ of $\statethree$ reduces to
           $\statetwo_\exps$:
	   	\begin{center}$\begin{array}{lll}
		\statethree_\exps = &
           \dstate{ \var }{ \ctxp{\ctxtwo_n\esub\var\tmthree} }{ \stmetwo\cons\exps }{ \expsn\cdot\ste } 
           \\
	\iamdvartwo &
	\dstate{ \tmthree }{ \ctxp{\ctxtwo_n\ctxholep\var\esub\var\ctxhole} }{ \stmetwo\cons\exps }{ (\var,\ctxtwo_n\esub\var\tmthree,\expsn)\cdot\ste }   =\statetwo_\exps
	              	\end{array}$\end{center}
           Again, we then proceed as usual, exploiting the determinism of
          the $\IAM$ and the \ih
	 \end{enumerate} 

      	\end{enumerate}
      \end{enumerate}
        Now, suppose that $\pol = \upp$. Cases of
        $\statethree\toiam\statetwo$:
	\begin{enumerate}
		\item \emph{Coming from the left of an application}, \ie\ $\ctx = \ctxtwop{\ctxhole \tmthree}$ and 
		  	\begin{center}$
                  \statethree = \ustate{ \tmtwo }{ \ctxtwop{\ctxhole\tmthree} }{ \exps\cdot\stme }{ \ste } 
                  \iamuaplone 
                  \dstate{ \tmthree }{ \ctxtwop{\tmtwo\ctxhole} }{ \stme }{ \exps\cdot\ste } = \statetwo.
                  $\end{center}
	          The proof that $\statetwo$ is exhaustible is divided in two parts:
	          \begin{enumerate}
		  \item
                    \emph{Log testing}. The log tests of
                    $\statetwo$ are those of $\statethree$ plus\\
                    $\ustate{ \tmthree }{ \ctxtwop{\tmtwo\ctxhole} }{ \epsilon}{ \exps\cdot\ste }$.
                    The former are fine because of the \ih, while
                    about the latter, observe that
                    $\ustate{ \tmthree }{ \ctxtwop{\tmtwo\ctxhole} }{ \epsilon}{ \exps\cdot\ste }$
                    evolves to $\dstate{ \tmtwo }{ \ctxtwop{\ctxhole\tmthree} }{ \exps }{ \ste }$
                    which is a tape test of $\statethree$. The thesis
                    easily follows by \ih
		
		\item \emph{Tape testing}.
                    Let $\stmetwo$ be a prefix of $\stme$ such that $\stmetwo = \stmethree\cons\expstwo$. Two cases:
		    \begin{enumerate}
		    \item
                      \emph{$\sizee\stmetwo$ is odd}, and the direction
                      is $\downp$. Note that the tape test
                      $\statethree_\exps$ of $\statethree$ corresponding
                      to $\exps$ reduces to
                      a tape test $\statetwo_\exps$ of $\statetwo$:
				   	\begin{center}$\begin{array}{lll}
                      \statethree_\exps = &\ustate{ \tmthree }{ \ctxtwop{\tmtwo\ctxhole} }{ \exps\cons\stmetwo }{ \ste }
                      \\
                      \iamuaplone &
		      \dstate{ \tmthree }{ \ctxtwop{\tmtwo\ctxhole} }{ \stmetwo }{ \exps\cdot\ste } = \statetwo_\exps
	              	\end{array}$\end{center}
                      We can then proceed as usual, using the \ih~ and determinism of the IAM.
		    \item
                      \emph{$\sizee\stmetwo$ is even}, and the direction is $\upp$. Note that $\statetwo_\exps$ reduces to the
                      corresponding tape test $\statethree_\exps$ of $\statethree$: 
				   	\begin{center}$\begin{array}{lll}
                      \statetwo_\exps = & \ustate{ \tmthree }{ \ctxtwop{\tmtwo\ctxhole} }{ \stmetwo }{ \exps\cdot\ste }
                      \\
		      \iamuapr &
		      \dstate{ \tmthree }{ \ctxtwop{\tmtwo\ctxhole} }{ \exps\cons\stmetwo }{ \ste }
		      =\statethree_{\exps}
	              	\end{array}$\end{center}
                      Again, we can proceed as usual, using the \ih
		    \end{enumerate}
	          \end{enumerate}
	          
		\item \emph{Coming from the right of an application}, \ie\ $\ctx = \ctxtwop{\tmthree \ctxhole}$ and 
						   	\begin{center}$    \statethree = \ustate{ \tmtwo }{ \ctxtwop{\tmthree\ctxhole} }{ \stme }{ \exps\cdot\ste }
                  \iamuapr 
                  \dstate{ \tmthree }{ \ctxtwop{\ctxhole\tmtwo} }{ \exps\cdot\stme }{ \ste } = \statetwo.
                  $\end{center}
	          The proof that $\statetwo$ is exhaustible is divided in two parts:
	          \begin{enumerate}
		  \item
                    \emph{Log testing}:
                    the log tests of $\statetwo$
                    are among the log tests of $\statethree$,
                    so log testing follows from \ih
		  \item \emph{Tape testing}. Let $\stmetwo$ be a
                    prefix of $\stme$. Two cases:
		    \begin{enumerate}
		    \item \emph{$\stmetwo = \stme$ is empty}.
                      So that the tape contains only $\exps$, its
                      length is odd, and the direction is $\downp$. The state to be proven exhaustible is                      
                      $$\statetwo_\exps = \dstate{ \tmthree }{
                        \ctxtwop{\ctxhole\tmtwo} }{ \exps }{ \ste }$$
                      Now, note that the log test $\outsi{\size{\exps\cdot\ste}}\statethree$ of $\statethree$ reduces in one step to $\statetwo_\exps$:
				   	\begin{center}$\begin{array}{rll}
                      \outsi{\size{\exps\cdot\ste}}\statethree = & \ustate{ \tmtwo }{ \ctxtwop{\tmthree\ctxhole} }{ \epsilon}{ \exps\cdot\ste } 
                      \\
                      \iamuapr &
                      \dstate{ \tmthree }{
                        \ctxtwop{\ctxhole\tmtwo} }{ \exps }{ \ste }
	              	\end{array}$\end{center}
                      By log testing for $\statethree$, there is a state $q_\exps$ surrounding $\exps$ such that $\outsi{\size{\exps\cdot\ste}}\statethree \toiam^* q_\exps$. By determinism of the IAM, $\statetwo_\exps \toiam^* q_\exps$.
		    \item
                      \emph{$\stmetwo \neq \stme$ is non-empty}. Then
                      $\stmetwo = \stmethree\cons\expstwo$ Two cases:
		      \begin{enumerate}
		      \item
                        \emph{$\sizee{\stmethree\cons\expstwo}$ is
                          even}, so that the tape
                        $\exps\cons\stmethree\cons\expstwo$ has odd
                        length and the direction is $\downp$.  Note
                        that the tape test $\statethree_{\expstwo}$
                        corresponding to $\expstwo$
                        of $\statethree$ reduces to the tape test
                        $\statetwo_{\expstwo}$ corresponding to
                        $\expstwo$ of $\statetwo$:
				   	\begin{center}$\begin{array}{rll}
                        \statethree_{\expstwo} = &
			\ustate{ \tmthree }{ \ctxtwop{\ctxhole \tmtwo} }{ \stmethree\cons\expstwo }{ \exps\cons\ste }
			\\
			\iamuapr &
			 \dstate{ \tmthree }{ \ctxtwop{\ctxhole\tmtwo} }{ \exps\cons\stmethree\cons\expstwo }{ \ste }
			 = \statetwo_{\expstwo}
	              	\end{array}$\end{center}
                         In this case, as usual, we can conclude
                         by determinism of the $\IAM$.
		       \item
                         \emph{$\sizee\stmetwo$ is odd}, so that the
                         tape $\exps\cons\stmethree\cons\expstwo$
                         has even length and the direction is
                         $\upp$. Note that $\statetwo_{\expstwo}$
                         reduces to the corresponding tape test
                         $\statethree_{\expstwo}$ of $\statethree$:
				   	\begin{center}$\begin{array}{rll}
                         \statetwo_{\expstwo} = &
			\ustate{ \tmthree }{ \ctxtwop{\ctxhole\tmtwo} }{ \exps\cons\stmethree\cons\expstwo }{ \ste }
			\\
			\iamuaplone &
			\dstate{ \tmthree }{ \ctxtwop{\ctxhole \tmtwo} }{ \stmethree\cons\expstwo }{ \exps\cons\ste }
			=\statethree_{\expstwo}
	              	\end{array}$\end{center}
                        Again, the usual scheme allows us to conclude
                        that tape testing holds.                        
		\end{enumerate}
		\end{enumerate}
		
	\end{enumerate}

\item \emph{Explicit Substitution}
\begin{center}$\statethree = \ustate{ \tmtwo }{ \ctxp{\ctxhole\esub\var\tmthree} }{ \stme }{ \ste } 
	\iamues 
	\ustate{ \tmtwo\esub\var\tmthree }{ \ctx }{ \stme }{ \ste } = \statetwo
$\end{center}
\begin{enumerate}
		\item \emph{Log testing}: by \reflemmap{outer-inv}{three} (head translation), the log tests of $\statetwo$ are log tests of $\statethree$, which satisfy log testing by the \ih
	
	\item \emph{Tape testing}: it goes exactly as for the other ordinary cases (\ih, plus determinism in one of the two sub-cases).
	\end{enumerate}
	
\item \emph{Coming from inside an explicit substitution}:
\begin{center}$\begin{array}{rll}
\statethree = & \ustate{ \tmtwo }{ \ctxp{\ctxtwop\var\esub\var\ctxhole} }{ \stme }{ (\var, \ctxtwo[\var\leftarrow\tmtwo],\stetwo)\cdot\ste } 
\\
	\iamuestwo &
	\ustate{ \var }{ \ctxp{\ctxtwo[\var\leftarrow\tmtwo]} }{ \stme }{ \stetwo\cdot\ste } = \statetwo
\end{array}$\end{center}
	\begin{enumerate}
		\item \emph{Log testing}: by \ih, $\statethree$ is exhaustible, and its $\size\ste +1$-log test evolves to 
		\begin{center}$\begin{array}{rll}
		\outsi{\size\ste +1}\statethree = & \ustate{ \tmtwo }{ \ctxp{\ctxtwop\var\esub\var\ctxhole} }{ \epsilon }{ (\var, \ctxtwo[\var\leftarrow\tmtwo],\stetwo)\cdot\ste } 
		\\
	\iamuestwo &
	\ustate{ \var }{ \ctxp{\ctxtwo[\var\leftarrow\tmtwo]} }{ \epsilon }{ \stetwo\cdot\ste } = \statetwo_\epsilon
	   	\end{array}$\end{center}
		which is exhaustible. By \reflemmap{outer-inv}{two}, $\statetwo_\epsilon$ and $\statetwo$ have the same log tests, which are then successful.
	
	\item \emph{Tape testing}: since the tape is unaffected by the transition, this case goes exactly as the other ordinary ones.
	\end{enumerate}		
	\end{enumerate}
\end{proof}
\end{LONG}

\paragraph{Exhaustible and Final States.} We are now ready to prove that the 
$\IAM$ never gets stuck for a mismatch of \trposs.
\begin{corollary}[\TrPoss Never Block the $\IAM$]
\label{coro:exhaust} 
Let $\state$ be a reachable state.
  \begin{varenumerate}
   \item \label{cor:lambda} 
   If $\state = \dstate{ \la\var\ctxtwop\var }{ \ctx }{ \exps\cons\stme }{ 
   \ste
  }$ then $\state$ is not final.
  \item \label{cor:es} 
  If $\state=\ustate{\tmtwo}{\ctx\ctxholep{\ctxtwo\ctxholep{x}[x\leftarrow\ctxhole]}}{\stme}{\exps\cdot\ste}$
  then $\state$ is not final.
  \end{varenumerate}
\end{corollary}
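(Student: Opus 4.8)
\emph{Proof plan.} The plan is to deduce both items directly from the exhaustible state invariant (\refprop{good-invariant}): since $\state$ is reachable it is exhaustible, so \emph{every} \trpos occurring in it can be successfully tested, and in each case the shape of $\exps$ is forced already by the \emph{first} transition of the test that focuses on $\exps$. The key observation is that when $\exps$ is the top of the tape (resp.\ of the log), the corresponding test of $\state$ differs from $\state$ only by truncating (resp.\ emptying) the tape, so the analysis transfers back to $\state$.

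For Item~\ref{cor:lambda}, $\exps$ is on top of the tape, so I take the tape test of $\state$ of focus $\exps$ with empty prefix, namely $\state_\exps = \nopolstate{\la\var\ctxtwop\var}{\ctx}{\exps}{\ste}{\upp^{\sizee\exps}}$. Since $\sizee\exps = 1$ and $\upp^1 = \downp$, this is the $\downp$-state $\dstate{\la\var\ctxtwop\var}{\ctx}{\exps}{\ste}$. By exhaustibility there is a run out of $\state_\exps$ ending with a backtracking transition, so $\state_\exps$ is not stuck. Its code term is an abstraction and the top of its tape is the \trpos $\exps$, not a $\resm$; inspecting Figures~\ref{tab:iam} and~\ref{tab:iames}, the only transition whose left-hand side can possibly match is then $\iamdlamtwo$, and this transition is enabled only when $\exps$ has the form $(\var,\la\var\ctxtwo,\stetwo)$. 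With $\exps$ of this shape and still on top of the tape of $\state = \dstate{\la\var\ctxtwop\var}{\ctx}{\exps\cdot\stme}{\ste}$ --- and since $\iamdlamtwo$ inspects only the code term and the tape top --- the transition $\iamdlamtwo$ also applies to $\state$, so $\state$ is not final.

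For Item~\ref{cor:es} the argument is symmetric, using a \emph{log} test. By the balance invariant (Lemma~\ref{lemma:invarianttwo}), the code context $\ctx\ctxholep{\ctxtwo\ctxholep{x}[x\leftarrow\ctxhole]}$ has some level $n$ equal to the length of the log $\exps\cdot\ste$, so $\exps$ is its topmost, i.e.\ $n$-th, entry. Since the code term lies immediately inside the argument of the explicit substitution --- there is no further context between that argument position and the code term --- the $n$-outer position of the current position coincides with the position itself, so the $n$-log test $\outsi{n}{\state}$ is just $\state$ with its tape emptied, namely $\ustate{\tmtwo}{\ctx\ctxholep{\ctxtwo\ctxholep{x}[x\leftarrow\ctxhole]}}{\stempty}{\exps\cdot\ste}$. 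As before, exhaustibility prevents this state from being stuck; its code context has $[x\leftarrow\ctxhole]$ immediately around the hole, so the only transition whose left-hand side can match is $\iamuestwo$, and this is enabled only when $\exps$ has the form $(x,\ctxtwo[x\leftarrow\tmtwo],\stetwo)$. With $\exps$ of this shape on top of the log of $\state$, $\iamuestwo$ applies to $\state$, so $\state$ is not final.

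The proof is conceptually straightforward once the definitions are unfolded; the main obstacle is the bookkeeping needed to check --- from the definitions of tape test, log test, and $m$-outer position --- that the test focusing on the \emph{top} \trpos really is $\state$ with a truncated or emptied tape. After that, verifying that $\iamdlamtwo$ (resp.\ $\iamuestwo$) is the unique candidate transition out of the test state is a routine inspection of the transition tables, using only the shape of the code term/context and the fact that the top entry is a \trpos rather than a $\resm$.
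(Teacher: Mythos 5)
Your proof is correct and follows essentially the same route as the paper's: both deduce the corollary from the exhaustible state invariant (\refprop{good-invariant}) by observing that the relevant tape test (resp.\ log test) of focus $\exps$ is just $\state$ with a truncated tape (resp.\ emptied tape), that exhaustibility forces this test to make a transition, and that the only candidate transition ($\iamdlamtwo$, resp.\ $\iamuestwo$) both forces the shape of $\exps$ and then applies to $\state$ itself. You merely spell out the bookkeeping (directions of tests, outer positions, uniqueness of the candidate transition) that the paper's two-sentence proof leaves implicit.
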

\begin{proof}
For the point 1, by the exhaustible invariant (\refprop{good-invariant}), $\state$ is exhaustible. By successful testing, its tape test
  $\dstate{\la\var\ctxtwop\var}{\ctx}{\exps}{\ste}$ evolves to a state $\statetwo\neq\state$ surrounding $\exps$. Point 2 is analogous, just consider the log test
  $\statetwo=\ustate{\tmtwo}{\ctx\ctxholep{\ctxtwo\ctxholep{x}[x\leftarrow\ctxhole]}}{\stempty}{\exps\cdot\ste}$.
\end{proof}


\section{Improvements, Abstractly}\label{sec:proof-sound}
We now introduce \emph{improvements}, a refinement of the classical notion of bisimulation inspired by Sands~\cite{DBLP:journals/toplas/Sands96}. They are the main tool for the proofs of soundness and adequacy of the $\IAM$. 

An improvement is a
weak bisimulation between two transition systems preserving termination and guaranteeing 
that, whenever $s$ and $q$ are related and terminating, then $q$ terminates in no 
more steps than $s$---the \emph{no-more-steps} part implies that the definition 
is asymmetric in the way it treats the two transition systems.

\paragraph{Preliminaries for Bisimulations.} A deterministic transition system (DTS) 
is a pair $\mathcal{S} =(S,\mathcal{T})$, where $S$ is a set of \emph{states} and 
$\mathcal{T}:S\rightharpoonup S$ a partial function.
If $\mathcal{T}(s)=s'$, then we write $s\rightarrow s'$, and if $s$ rewrites in 
$s'$ in $n$ steps then we write $s\rightarrow^n s'$. We note with $\mathcal{F}_S$ the set of final states, \ie 
the subset of $\mathcal{S}$ containing all $s\in\mathcal{S}$ such that 
$\mathcal{T}(s)$ is undefined. A state $s$ is terminating if there exists $n\geq 0$ and 
$s'\in\mathcal{F}_S$ such that $s\rightarrow^n s'$. We call $S_\downarrow$ the 
set of terminating states of $S$ and by $S_\uparrow$ we denote $S\setminus 
S_\downarrow$. The \emph{evaluation length map} $|\cdot|:S\rightarrow\mathbb{N}\cup\{\infty\}$ is defined as $\size s\defeq n$ if $s\rightarrow^n s'$ and $s'\in\mathcal{F}_\mathcal{S}$, 
	and $\size s \defeq \infty$ if $s\in\mathcal{S}_\uparrow$.

\begin{definition}[Improvement]
	Given two DTS $\mathcal{S}$ and $\mathcal{Q}$, a relation $\mathcal{R}\subseteq S\times Q$ is 
	an \emph{improving bisimulation}, or simply an \emph{improvement}, if 
	$(s,q)\in\mathcal{R}$ implies the followings, schematized in 
	Fig.~\ref{figure:TP}.
	\begin{itemize}
		\item \emph{Final state left}: if $s\in\mathcal{F}_\mathcal{S}$, then $q\in\mathcal{F}_\mathcal{Q}$.
		\item \emph{Final state right}: if $q\in\mathcal{F}_\mathcal{Q}$, then $s\rightarrow^n s'$, for some 
		$s'\in\mathcal{F}_\mathcal{S}$ and $n\geq 0$.
		\item \emph{Transition left}: if $s\rightarrow s'$, then there exists $s'',q',n,m$ such 
		that $s'\rightarrow^m s''$, $q\rightarrow^n q'$, 
		$s''\mathcal{R}q'$ and $n\leq m+1$.
		\item \emph{Transition right}: if $q\rightarrow q'$, then there exists $s',q'',n,m$ such that 
		$s\rightarrow^m s'$, $q'\rightarrow^n q''$, $s'\mathcal{R}q''$ and 
		$m\geq n+1$.
	\end{itemize}
\end{definition}

\begin{figure}
\begin{center}
				\begin{tabular}{cc|ccc}
				\begin{tikzpicture}[node distance=30mm, auto, transform 
				shape,scale=1]
				\node (p) at (0,0) {$s$};
				\node (q) at (3,0) {$q$};
				\node (w) at (0,-.8) {$s'$};
				\node (t) at (0,-1.6) {$s''$};
				\node (r) at (3,-1.6) {$q'$};
				\node at (1.5,0) {$\mathcal{R}$};
				\node at (1.5,-1.6) {$\mathcal{R}$};
				\draw (p) edge[->] node {} (w);
				\draw (w) edge[->,dashed] node {$^m$} (t);
				\draw (q) edge[->,left, dashed] node {$^{n\leq m+1}$} (r);
				\end{tikzpicture}
				&&&
				\begin{tikzpicture}[node distance=30mm, auto, transform 
				shape,scale=1]
				\node (p) at (0,0) {$s$};
				\node (q) at (3,0) {$q$};
				\node (w) at (0,-1.6) {$s'$};
				\node (t) at (3,-.8) {$q'$};
				\node (r) at (3,-1.6) {$q''$};
				\node at (1.5,0) {$\mathcal{R}$};
				\node at (1.5,-1.6) {$\mathcal{R}$};
				\draw (p) edge[->,dashed] node {$^{m\geq n+1}$} (w);
				\draw (q) edge[->] node {} (t);
				\draw (t) edge[->,left,dashed] node {$^n$} (r);
				\end{tikzpicture}
				\end{tabular}
		\end{center}
		\vspace{-8pt}
		\caption{Diagrammatic definition of improvements.}
		\label{figure:TP}
	\end{figure}
	
What improves along an improvement is the number of transitions required to reach a final state, if any.

\begin{proposition}\label{prop:imp}
	Let $\mathcal{R}$ be an improvement on two DTS $\mathcal{S}$ and $\mathcal{Q}$, and $s\mathcal{R}q$. 
	\begin{enumerate}
	 \item \label{prop:termination}
	 \emph{Termination equivalence}: $s\in\mathcal{S}_\downarrow$ if and only if $q\in\mathcal{Q}_\downarrow$.
	 
	 \item \label{lemma:improv}
	 \emph{Improvement}: $|s|\geq|q|$.
	\end{enumerate}

\end{proposition}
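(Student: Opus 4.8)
The plan is to reduce both items to a single \emph{key lemma}, proved by induction on the length of a terminating run, and then read off termination equivalence and the inequality $\size s \geq \size q$ as corollaries.

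\textbf{Key lemma.} If $s\mathrel{\mathcal{R}}q$ and $s\rightarrow^k s'$ with $s'\in\mathcal{F}_\mathcal{S}$, then there is $q'\in\mathcal{F}_\mathcal{Q}$ with $q\rightarrow^j q'$ and $j\leq k$. I would prove this by induction on $k$, for all pairs in $\mathcal{R}$ simultaneously. If $k=0$ then $s=s'$ is final, and \emph{final state left} gives $q\in\mathcal{F}_\mathcal{Q}$, so $j=0$. If $k>0$, write $s\rightarrow s_1\rightarrow^{k-1}s'$ and apply \emph{transition left} to $s\rightarrow s_1$: there are $s'',q',n,m$ with $s_1\rightarrow^m s''$, $q\rightarrow^n q'$, $s''\mathrel{\mathcal{R}}q'$ and $n\leq m+1$. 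Since $\mathcal{S}$ is deterministic, the run out of $s_1$ is unique and ends (after $k-1$ steps) in the final state $s'$, hence it cannot be prolonged; therefore $m\leq k-1$ and $s''$ lies on that run, i.e.\ $s''\rightarrow^{k-1-m}s'$ with $s'$ final. As $k-1-m<k$, the induction hypothesis applied to $s''\mathrel{\mathcal{R}}q'$ yields $q''\in\mathcal{F}_\mathcal{Q}$ with $q'\rightarrow^{j'}q''$ and $j'\leq k-1-m$. Then $q\rightarrow^{n+j'}q''$ and $n+j'\leq(m+1)+(k-1-m)=k$, which closes the induction.

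\textbf{From the lemma to the statements.} Point \ref{lemma:improv} is immediate: if $s\in\mathcal{S}_\uparrow$ then $\size s=\infty\geq\size q$; if $s\in\mathcal{S}_\downarrow$ with $\size s=k$, the lemma gives $q\in\mathcal{Q}_\downarrow$ with $\size q\leq k=\size s$. This also gives the implication $s\in\mathcal{S}_\downarrow\Rightarrow q\in\mathcal{Q}_\downarrow$ of point \ref{prop:termination}. For the converse implication I would prove the \emph{dual lemma} (``if $s\mathrel{\mathcal{R}}q$ and $q\rightarrow^k q'$ with $q'\in\mathcal{F}_\mathcal{Q}$, then $s$ is terminating''), again by induction on $k$: the base case uses \emph{final state right}, and the inductive case uses \emph{transition right} together with determinism of $\mathcal{Q}$ — from $q\rightarrow q_1\rightarrow^{k-1}q'$ one gets $s\rightarrow^m s'$, $q_1\rightarrow^n q''$, $s'\mathrel{\mathcal{R}}q''$; determinism forces $n\leq k-1$ and $q''\rightarrow^{k-1-n}q'$, so the induction hypothesis makes $s'$ (hence $s$, since $s\rightarrow^m s'$) terminating. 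Note that here only ``$s$ reaches the related state'' matters, the quantitative bound $m\geq n+1$ being irrelevant for mere termination. Combining the two lemmas gives the equivalence $s\in\mathcal{S}_\downarrow\iff q\in\mathcal{Q}_\downarrow$.

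\textbf{Main obstacle.} The only genuinely delicate point is the determinism bookkeeping in the two inductive steps: one must be sure that the intermediate state produced by the bisimulation clause ($s''$, resp.\ $q''$) actually lies on the \emph{unique} terminating run, so that the measure $k-1-m$ (resp.\ $k-1-n$) is well-defined and strictly smaller than $k$. This rests entirely on the elementary fact that a final state admits no outgoing transition, which forces $m\leq k-1$ (resp.\ $n\leq k-1$); once this is granted, everything reduces to routine arithmetic on run lengths.
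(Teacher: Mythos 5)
Your proof is correct and follows essentially the same route as the paper's: induction on the length of the terminating run, using the transition clauses together with determinism (a final state has no outgoing transition) to place the intermediate related state on the unique run and make the induction measure decrease. The only difference is cosmetic — you bundle the forward termination implication and the length bound $\size s \geq \size q$ into one lemma, whereas the paper runs the same induction twice, once for each item.
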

\begin{LONG}\begin{proof}
		\hfill
		\begin{enumerate}
			\item 	$\Rightarrow$. Let us suppose $s\in\mathcal{S}_\downarrow$ and let $n$ be the number of steps that $s$ needs to terminate. We proceed by induction on $n$. If $n=0$, $s\in\mathcal{F}_\mathcal{S}$ and since $s\mathcal{R}q$, $q\in\mathcal{F}_\mathcal{Q}$ and thus $q\in\mathcal{Q}_\downarrow$. If $n=h>0$, then $s\rightarrow s'$, and thus there exists $s'',q',k,j$ such that $q\rightarrow^k q'$, $s'\rightarrow^j s''$, $s''\mathcal{R}q'$ and $k\leq j+1$. Since $s''$ terminates in less than $h-1$ steps, by induction hypothesis $q'\in\mathcal{Q}_\downarrow$ and thus also $q\in\mathcal{Q}_\downarrow$.
			
			$\Leftarrow$. Let us suppose $q\in\mathcal{Q}_\downarrow$ and let $n$ be the number of steps that $q$ needs to terminate. We proceed by induction on $n$. If $n=0$, $q\in\mathcal{F}_\mathcal{Q}$ and since $s\mathcal{R}q$,  $s\in\mathcal{S}_\downarrow$. If $n=h>0$, then $q\rightarrow q'$, and thus there exists $s',q'',k,j$ such that $s\rightarrow^k s'$, $q'\rightarrow^j q''$, $s'\mathcal{R}q''$ and $k\geq j+1$. Since $q''$ terminates in less than $h$ steps, by induction hypothesis $s'\in\mathcal{S}_\downarrow$ and thus also $s\in\mathcal{S}_\downarrow$.
			\end{enumerate}

			\begin{enumerate}
			\item[2.] If $s\in\mathcal{S}_\uparrow$ and $q\in\mathcal{Q}_\uparrow$, then $|s|=|q|=\infty$. Let us consider the other case, \ie when $s\in\mathcal{S}_\downarrow$ and $q\in\mathcal{Q}_\downarrow$. We proceed by induction on $|s|$. If $|s|=0$, then $q\in\mathcal{F}_\mathcal{Q}$ and thus also $|q|=0$. If $|s|=n>0$, then $s\rightarrow s'$ and there exists $s'',q',m,l$ such that $q\rightarrow^m q'$, $s'\rightarrow^l s''$, $s''\mathcal{R}q'$ and $m\leq l+1$. By \ih, $|s''|\geq|q'|$. Thus, since $m\leq l+1$, then $|s|=|s''|+l+1\geq|q'|+m=|q|$.
		\end{enumerate}\end{proof}\end{LONG}

\section{Improvements, Concretely}
\label{sect:improv-concr}
In this section we define an improvement $\relf$ relation for the $\IAM$, to be used in the sequel to prove soundness and adequacy.

Given a $\tohl$-step $\tm \tohl \tmtwo$, the improvement $\relf$ has to relate states of code $\tm$ with states of code $\tmtwo$. Since $\tohl$ is the union of the three rewriting rules $\tohldb$, $\tohlls$ and $\tohlgc$, we are going to define $\relf$ as the union of three improvements $\relm$, $\rele$, and $\relgc$. The most interesting and subtle case is $\rele$. To explain it, we start by discussing some of the aspects of $\relm$, which is simpler.

Lifting a step $\tm \tohl 
\tmtwo$ to a relation between a $\IAM$ state $s$ of code $\tm$ and a state $q$ 
of code $\tmtwo$ requires changing all positions relative to $\tm$ in $s$ to 
positions relative to $\tmtwo$ in $q$. A first point to note is that we also 
have to change all the positions in the token, so that $\rel$ has 
to relate positions, \trposs, tape, log, and states.


\paragraph{Explaining the Need of Context Rewriting Using $\tohldb$}
A second more technical aspect is that one needs to extend linear head 
evaluation to contexts. Consider a step $\tm \tohldb \tmtwo$ where---for 
simplicity---the redex is at top level and the associated state 
$\dstate{\sctxp{\la x\tmthree}\tmfour}{\ctxhole}{\epsilon}{\epsilon}$ has an 
empty token. This should be $\relm$-related to a state 
$\dstate{\sctxp{\tmthree\esub\var\tmfour}}{\ctxhole}{\epsilon}{\epsilon}$. 
Let's have a look at how the two states evolve:
\[\begin{tikzpicture}[node distance=30mm, auto, transform 
shape,scale=1.2]
\node (p) at (0,0) {$\dstate{\sctxp{\la x\tmthree}\tmfour}{\ctxhole}{\epsilon}{\epsilon}$};
\node (q) at (4,0) {$\dstate{\sctxp{\tmthree\esub\var\tmfour}}{\ctxhole}{\epsilon}{\epsilon}$};
\node (y) at (4,-1) {$\dstate{\tmthree\esub\var\tmfour}{\sctx}{\epsilon}{\epsilon}$};
\node (w) at (0,-.66) {$\dstate{\sctxp{\la\var\tmthree}}{\ctxhole\tmfour}{\resm}{\epsilon}$};
\node (t) at (0,-1.33) {$\dstate{\la\var\tmthree}{\sctx\tmfour}{\resm}{\epsilon}$};
\node (e) at (0,-2) {$\dstate{\tmthree}{\sctxp{\la\var\ctxhole}\tmfour}{\epsilon}{\epsilon}$};
\node (r) at (4,-2) {$\dstate{\tmthree}{\sctxp{\ctxhole\esub\var\tmfour}}{\epsilon}{\epsilon}$};
\node at (2,0) {$\relm$};
\draw (p) edge[->] node {} (w);
\draw (w) edge[->] node {$^{|{\sctx}|}$} (t);
\draw (t) edge[->] node {} (e);
\draw (q) edge[->] node {$^{|{\sctx}|}$} (y);
\draw (y) edge[->] node {} (r);
\end{tikzpicture}
\]
To close the diagram, we need $\relm$ to relate the two bottom states. Note that their relation can be seen as a $\tohldb$ step involving the contexts of the two positions. Therefore we extend the definition of $\tohldb$ to contexts adding the following top level clause (then included in $\tohldb$ via a closure by head contexts): $\sctxp{\la\var\ctx}\tm \mapsto_\db \sctxp{\ctx\esub\var\tm}$. The new clause, in turn, requires a further extension of $\tohldb$ (again closed by head contexts): 
$\sctxp{\la\var\tm}\ctx \mapsto_\db \sctxp{\tm\esub\var\ctx}$.

Note that in the shown local bisimulation diagram the right side is shorter. 
This is typical of when the machine travels through the redex. Outside of the 
redex, however, the two sides have the same length, as the next example 
shows---example that also motivates a further extension of $\tohldb$ to 
contexts. Consider the case where $\tm\tohldb\tmtwo$ and the diagram is:
\begin{center}
	\begin{tikzpicture}[node distance=30mm, auto, transform 
	shape,scale=1.2]
	\node (p) at (0,0) 
	{$\ustate{\tm}{\blue{\ctxhole\tmthree}}{\exps}{\epsilon}$};
	\node (q) at (3,0) 
	{$\ustate{\tmtwo}{\blue{\ctxhole\tmthree}}{\exps}{\epsilon}$};
	\node (w) at (0,-.6) 
	{$\dstate\tmthree{\tm\ctxhole}\epsilon\exps$};
	\node (r) at (3,-.6) 
	{$\dstate\tmthree{\tmtwo\ctxhole}\epsilon\exps$};
	\node at (1.5,0) {$\relm$};
	\draw (p) edge[->] node {} (w);
	\draw (q) edge[->] node {} (r);
	\end{tikzpicture}
\end{center}
We then need to extend $\tohldb$ so that $\tm\ctxhole \tohldb \tmtwo \ctxhole$. A similar situation happens also when entering an ES with transition $\iamdvartwo$. To close these diagrams, we add two further cases of reduction on contexts. 
Of course, the same situation arises with $\ls$ and $\gcsym$ steps.
\begin{center}
$\begin{array}{ccc}
\infer{\tm\ctx \lolli_{\tt{a}} \tmtwo \ctx}{\tm \lolli_{\tt{a}} \tmtwo}
&
\infer{\tm\esub\var\ctx \lolli_{\tt{a}} \tmtwo\esub\var \ctx}{\tm \lolli_{\tt{a}} \tmtwo}
&
\tt{a} \in \set{\db,\ls,\gcsym}.
\end{array}$
\end{center}
\begin{SHORT}The full definition of $\relm$  is in the Appendix 
(\refsect{app-improvements}). Essentially, it is given by 
the following two base cases plus the expected extension to states (given by 
the analogous for $\relm$ of rules $\mathsf{tok1}_\ls$, $\mathsf{tok2}_\ls$, 
$\mathsf{tok3}_\ls$, $\mathsf{pos}_\ls$, $\mathsf{state}_\ls$ in 
\refdef{ls-improvement} below).
	\[\begin{array}{c@{\hspace{.8cm}}c}
	\infer[\mathsf{rdx}_\db]{(\tm,\hctx)\relm(\tmtwo,\hctx)}{\tm\tohldb\tmtwo}
	&
	\infer[\mathsf{ctx}_\db]{(\tm,\ctx)\relm(\tm,\ctxtwo)}{\ctx\tohldb\ctxtwo}
	\\[.17cm]
	\end{array}\]\end{SHORT}
\begin{LONG}
	\begin{definition}
		The (overloaded) binary relation $\relm$ between positions, stacks, and 
		states is defined by the following 
		rules\footnote{$\Gamma$ is a meta-variable that stands either for a log 
		$\ste$ or for a tape $T$.}.
		\[\begin{array}{c@{\hspace{.8cm}}c}
		\infer[\mathsf{rdx}_\db]{(\tm,\hctx)\relm(\tmtwo,\hctx)}{\tm\tohldb\tmtwo}
		&
		\infer[\mathsf{ctx}_\db]{(\tm,\ctx)\relm(\tm,\ctxtwo)}{\ctx\tohldb\ctxtwo}
		\\[.17cm]
		%
		\infer[\mathsf{tok1}_\db]{\epsilon\relm \epsilon}{}
		&
		\infer[\mathsf{tok2}_\db]{\resm\cdot\stme\relm 
			\resm\cdot\stme'}{\stme\relm\stme'}	
		\\[.17cm]
		\infer[\mathsf{pos}_\db]{(\var,\ctx,\ste)\relm
			(\var,\ctxtwo,\ste')}{(\var,\ctx)\relm(\var,\ctxtwo)\qquad\ste\relm\ste'}
		&
		\infer[\mathsf{tok3}_\db]{\exps\cdot\Gamma\relm
			\exps'\cdot\Gamma'}{\exps\relm\exps'\qquad\Gamma\relm\Gamma'}
		\\[.17cm]
		\multicolumn{2}{c}{
			\infer[\mathsf{state}_\db]{\nopolstate{\tm}{\ctx}{\stme}{\ste}{\pol}\relm
				\nopolstate{\tmtwo}{\ctxtwo}{\stme'}{\ste'}{\pol'}}{(\tm,\ctx)\relm
				(\tmtwo,\ctxtwo)\qquad\stme\relm\stme'\qquad\ste\relm\ste'\qquad
				\pol=\pol'}
		}
		\end{array}\]
		
	\end{definition}
	Note that	$\relm$ contains all pairs 
	$(\dstate{\tm}{\ctxhole}{\resm^k}{\epsilon},\dstate{\tmtwo}{\ctxhole}{\resm^k}{\epsilon})$,
	 where $\tm\tohldb\tmtwo$, \ie all the initial states containing a 
	$\db$-redex and its reduct.
\end{LONG}

\paragraph{Improvement for $\tohlls$.} As for $\tohldb$, the improvement for 
$\tohlls$ requires 
extending the rewriting relation to contexts. 
There are however some new subtleties. Given $\tm\tohldb\tmtwo$ and a position $(\tmthree,\ctx)$ for $\tm$, for $\relm$ the redex in $\tm$ falls always entirely either in $\tm$ or $\ctx$. If $\tm\tohlls\tmtwo$, instead, the redex can be split between the two. Consider the following diagram (where to simplify we assume the step to be at top level and the token to be empty).
\begin{center}
	\begin{tikzpicture}[node distance=30mm, auto, transform 
	shape,scale=1.2]
	\node (p) at (0,0) 
	{$\dstate{\hctxp{x}\esub\var\tmthree}\ctxhole\epsilon\epsilon$};
	\node (q) at (3,0) 
	{$\dstate{\hctxp{\tmthree}\esub\var\tmthree}\ctxhole\epsilon\epsilon$};
	\node (w) at (0,-.6) 
	{$\dstate{\hctxp{x}}{\ctxhole\esub\var\tmthree}\epsilon\epsilon$};
	\node (r) at (3,-.6) 
	{$\dstate{\hctxp{\tmthree}}{\ctxhole\esub\var\tmthree}\epsilon\epsilon$};
	\node at (1.5,0) {$\rele$};
	\draw (p) edge[->] node {} (w);
	\draw (q) edge[->] node {} (r);
	\end{tikzpicture}
\end{center}
To close it, we have to $\rele$-relate the two bottom states, where the pattern 
of the redex/reduct is split between the two parts of the position. This 
motivates clause $\mathsf{rdx2}$ in the definition of $\rele$ below.

The new rule comes with consequences. Consider the following diagram involving the new clause for $\rele$:
\begin{equation}
\label{eq:rele-diagram}
\begin{tikzpicture}[node distance=30mm, auto, transform 
shape,scale=1.2]
\node (p) at (0,0) {$\dstate x{\hctx\esub\var\tm}\epsilon\epsilon$};
\node (q) at (3,0) {$\dstate\tm{\hctx\esub\var\tm}\epsilon\epsilon$};
\node (w) at (0,-.6) {$\dstate\tm{\hctxp{x}\esub 
\var\ctxhole}\epsilon{(x,\hctx\esub\var\tm,\epsilon)}$};
\node at (1.5,0) {$\rele$};
\draw (p) edge[->] node {} (w);
\end{tikzpicture}
\end{equation}
To close the diagram, as usual, we have to $\rele$-relate them. There are, however, two delicate points. First, we cannot see the context $\hctxp{x}\esub \var\ctxhole$ as making a $\tohlls$ step towards $\hctx\esub\var\tm$, because $\tm$ does not occur in $\hctxp{x}\esub \var\ctxhole$. For that, we have to introduce a variant of $\tohlls$ on contexts that is parametric in $\tm$ (and more general than the one to deal with the showed simplified diagram):
\[
\begin{array}{rll}
\hctxp \var\esub\var\ctx\quad &\mapsto_{\ls,\tm}& \hctxp\ctx\esub\var{\ctxp\tm}.
\end{array}
\]
The second delicate point of diagram \refeq{rele-diagram} is that the extension 
of $\rele$ has to also $\rele$-relate logs of different length, namely 
$\epsilon$ and $(x,\hctx\esub\var\tm,\epsilon)$. This happens because positions 
of the two states do isolate the same term, but at different depths, as one is 
in the ES. Then the definition of $\rele$ has two clauses, one for logs 
($\mathsf{pos2}_\ls$) and one for states ($\mathsf{state2}_\ls$), to handle 
such a case. The mismatch in logs lengths is at most 1.

\begin{definition}
\label{def:ls-improvement}
	Binary relation $\rele$ is defined by\footnote{$\Gamma$ is a meta-variable that stands either for a log $\ste$ or for a tape $T$.}:
	\[\small\begin{array}{c@{\hspace{.4cm}}c @{\hspace{.4cm}}c}
	\infer[\mathsf{rdx}_\ls]{(\tm,\hctx)\rele(\tmtwo,\hctx)}{\tm\tohlls\tmtwo}
	&
	\infer[\mathsf{ctx}_\ls]{(\tm,\ctx)\rele(\tm,\ctxtwo)}{\ctx\tohlls\ctxtwo}
	\\[.17cm]
	
	\infer[\mathsf{rdx2}]{(\hctxp{\var},\hctxtwo)\rele(\hctxp{\tm},\hctxtwo)}
	{\hctxtwo=\hctxtwo'\ctxholep{\hctxthree[\var\leftarrow\tm]}}
	& 
		\infer[\mathsf{tok1}_\ls]{\epsilon\rele \epsilon}{}
	\\[.17cm]
	
	\infer[\mathsf{tok2}_\ls]{\resm\cdot\stme\rele 
		\resm\cdot\stme'}{\stme\rele\stme'}	
	&
	\infer[\mathsf{tok3}_\ls]{\exps\cdot\Gamma\rele
		\exps'\cdot\Gamma'}{\exps\rele\exps'\qquad\Gamma\rele\Gamma'}
	\\[.17cm]
	
		\infer[\mathsf{pos}_\ls]{(\var,\ctx,\ste\rele
			(\var,\ctxtwo,\ste')}{(\var,\ctx)\rele(\var,\ctxtwo)\qquad\ste\rele\ste'}
		&
		\infer[\mathsf{pos2}_\ls]{(\var,\ctx,\ste\cdot \exps)\rele 
			(\var,\ctxtwo,\ste')}{\ctx\tohllst{\var}\ctxtwo
			\qquad\ste\rele\ste'}		
	
	\\\\
	
	\multicolumn{3}{c}{
		\infer[\mathsf{state}_\ls]{\nopolstate{\tm}{\ctx}{\stme}{\ste}{\pol}\rele
			\nopolstate{\tmtwo}{\ctxtwo}{\stme'}{\ste'}{\pol'}}{(\tm,\ctx)\rele
			(\tmtwo,\ctxtwo)\qquad\stme\rele\stme'\qquad\ste\rele\ste'\qquad
			\pol=\pol'}
	}	 
	\\[.17cm]
	\multicolumn{3}{c}{
		\infer[\mathsf{state2}_\ls]{\nopolstate{\tm}{\ctx}{\stme}{\ste\cdot 
				\exps}{\pol}\rele\nopolstate{\tm}{\ctxtwo}{\stme'}{\ste'}{\pol'}}
		{\ctx\tohllst{\tm}\ctxtwo\qquad\stme\rele\stme'\qquad\ste\rele\ste'\qquad \pol=\pol'}
	}
	\end{array}\]
\end{definition}

Note that	$\rele$ contains all pairs 
$(\dstate{\tm}{\ctxhole}{\resm^k}{\epsilon},\dstate{\tmtwo}{\ctxhole}{\resm^k}{\epsilon})$,
 where $\tm\tohlls\tmtwo$, \ie all the initial states containing a $\ls$-redex 
and its reduct.\begin{SHORT} There also is a simpler relation $\relgc$ defined 
in the Appendix (\refsect{app-improvements}) for lack of space.\end{SHORT}

\begin{LONG}
	\paragraph{Improvement for $\tohlgc$} The candidate improvement 
	$\relgc$ induced by $\tohlgc$ requires an extension of $\tohlgc$ with a 
	rule on 
	contexts which is similar to the parametric one for $\tohlls$. Let 
	$\tm\esub\var\tmtwo \tohlgc \tm$ and consider:
	
	\begin{center}
		\begin{tikzpicture}[node distance=30mm, auto, transform 
		shape,scale=1.2]
		\node (p) at (0,0) 
		{$\dstate{\tm\esub\var\tmtwo}\ctxhole\epsilon\epsilon$};
		\node (q) at (3,0) {$\dstate{\tm}\ctxhole\epsilon\epsilon$};
		\node (w) at (0,-.6) 
		{$(\dstate{\tm}{\ctxhole\esub\var\tmtwo}\epsilon\epsilon$};
		\node at (1.5,0) {$\relgc$};
		\draw (p) edge[->] node {} (w);
		\end{tikzpicture}
	\end{center}
	To close the diagram, we extend the definition of $\togc$ to context with 
	the following parametric rule (closed by head contexts): 
	\[
	\begin{array}{r@{\hspace{.2cm}}l@{\hspace{.2cm}}l@{\hspace{.4cm}}l}
	\ctx\esub\var\tmtwo&\mapsto_{\gcsym,\tm}& \ctx & \mbox{if 
	$\var\notin\fv\tm$}.\\
	\end{array}
	\]
	We also need, as for $\relm$ and $\rele$, the rules (closed by head 
	contexts):
	\[
	\infer{\tm\ctx \tohlgc \tmtwo \ctx}{\tm\tohlgc\tmtwo}\qquad\qquad
	\infer{\tm\esub\var\ctx \tohlgc \tmtwo\esub\var \ctx}{\tm\tohlgc\tmtwo}
	\]

	\begin{definition}
		Binary relation $\relgc$ is defined by the following 
		rules.
		\[\begin{array}{c@{\hspace{.4cm}}c @{\hspace{.4cm}}c}
\infer[\mathsf{rdx}_\gcsym]{(\tm,\hctx)\relgc(\tmtwo,\hctx)}{\tm\tohlgc\tmtwo}
		&
\infer[\mathsf{ctx}_\gcsym]{(\tm,\ctx)\relgc(\tm,\ctxtwo)}{\ctx\tohlgc\ctxtwo}
		\\[.17cm]
		\infer[\mathsf{tok1}_\gcsym]{\epsilon\relgc \epsilon}{}
		&
\infer[\mathsf{ctx2}_\gcsym]{(\tm,\ctx)\relgc(\tm,\ctxtwo)}{\ctx\tohlgcv{\tm}\ctxtwo}
		\\[.17cm]
		%
		\infer[\mathsf{tok2}_\gcsym]{\resm\cdot\stme\relgc 
			\resm\cdot\stme'}{\stme\relgc\stme'}	
		&
		\infer[\mathsf{tok3}_\gcsym]{\exps\cdot\Gamma\relgc
			\exps'\cdot\Gamma'}{\exps\relgc\exps'\qquad\Gamma\relgc\Gamma'}
		\\[.17cm]
		\multicolumn{2}{c}{
			\begin{array}{c@{\hspace{.4cm}}c}	 
			\infer[\mathsf{pos}_\gcsym]{(\var,\ctx,\ste)\relgc		
(\var,\ctxtwo,\ste')}{(\var,\ctx)\relgc(\var,\ctxtwo)\qquad\ste\relgc\ste'}	
			\end{array}
		}
		\\\\
		\multicolumn{2}{c}{
\infer[\mathsf{state}_\gcsym]{\nopolstate{\tm}{\ctx}{\stme}{\ste}{\pol}\relgc	
\nopolstate{\tmtwo}{\ctxtwo}{\stme}{\ste}{\pol'}}{(\tm,\ctx)\relgc	
(\tmtwo,\ctxtwo)\qquad\stme\relgc\stme'\qquad\ste\relgc\ste'\qquad
				\pol=\pol'}
		}	 
		\end{array}\]
	\end{definition}
\end{LONG}

The proof of the next theorem is a tedious easy check of diagrams.
\begin{theorem}
	\label{thm:rele-improv}
	$\rele$, $\relm$ and $\relgc$ are improvements.
\end{theorem}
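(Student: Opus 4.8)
The plan is to verify, for each of $\relm$, $\rele$ and $\relgc$, the four defining clauses of an improvement: \emph{final-state-left}, \emph{final-state-right}, \emph{transition-left} and \emph{transition-right}. The three relations share the same skeleton---a base clause contracting the relevant redex pattern inside the \emph{position} (subterm or context), an auxiliary clause contracting it inside the \emph{context alone} (which is why $\tohldb$, $\tohlls$ and $\tohlgc$ were extended to contexts, and, for $\ls$ and $\gcsym$, to the parametric context rewritings of Section~\ref{sect:improv-concr}), plus a compositional lifting to tokens, logs and states. I would therefore carry out the three statements in parallel, spelling out $\rele$ in full---it being the most delicate, because of the split-pattern clause $\mathsf{rdx2}$ and the two clauses $\mathsf{pos2}_\ls$, $\mathsf{state2}_\ls$ that relate logs of different length---and then indicating the simplifications that yield $\relm$ and $\relgc$.

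Two preliminary observations organise the argument. First, each relation is compatible with the code: if two states are related, their codes $\ctxp\tm$ and $\ctxtwop\tmtwo$ are either equal or related by the corresponding $\tohl$-rule. Combined with the characterization of final states in Section~\ref{sec:prop}---which rests on the exhaustibility invariant (\refprop{good-invariant}) and \refcoro{exhaust}, so that reachable states are never stuck for a \trpos mismatch---this makes both final-state clauses checkable by inspection: a related pair has codes differing by one $\tohl$-step, so one state is in a final configuration exactly when the other reaches one after traversing the contracted redex, which takes finitely many transitions. Second, since both the machine on $\tm$ and the one on $\tmtwo$ are deterministic---indeed bideterministic, by the reversibility property of Section~\ref{sec:prop}---the \emph{transition-right} clause is entirely analogous to \emph{transition-left} (the local commuting diagrams one builds serve for both directions), so the real work is \emph{transition-left}.

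For \emph{transition-left} I would argue by a double case analysis: on which inductive clause witnesses that $s$ and $q$ are related, and on which IAM transition fires from $s$. Two regimes appear. (i) When the active position of the machine is disjoint from the redex contracted by $\tm\tohl\tmtwo$, both sides perform the ``same'' transition ($m=0$, $n=1$), and the compositional clauses (the context clauses $\mathsf{ctx}$ and the token clauses) immediately re-establish the relation on the resulting pair. (ii) When the machine crosses the redex, the reduct side is strictly shorter---this is the $\tohldb$ diagram of Section~\ref{sect:improv-concr} with its $|\sctx|$ extra steps, and the analogous diagrams for $\ls$ (diagram~\refeq{rele-diagram}) and $\gcsym$---so that $n<m+1$, and one checks that the two runs re-synchronise into a related pair via precisely the clauses designed for it: $\mathsf{rdx2}$ when the $\ls$-pattern is split between subterm and context, and the parametric clauses $\mathsf{pos2}_\ls$/$\mathsf{state2}_\ls$ (resp.\ $\mathsf{ctx2}_\gcsym$) when entering or leaving an ES changes the depth, hence the log length. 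A small invariant to carry along the witness is that the log-length gap permitted by $\mathsf{state2}_\ls$ is exactly $1$, and that it is created, preserved or consumed coherently by the four ES-sensitive transitions $\iamdvartwo$, $\iamuestwo$, $\iamdvar$ and $\iamuaplone$.

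The main obstacle is the combinatorics of case (ii) for $\rele$: each IAM transition must be matched against every applicable witness clause, and in the split-pattern situation a transition acting near the hole may move the machine from one half of the pattern to the other---from $\hctxp\var$ into the enclosing explicit substitution via $\iamdvartwo$, or back via $\iamuestwo$ or $\iamuapr$---forcing one to re-decide which clause applies on both sides. Keeping the quantitative inequality tight is the delicate point: one must check that the backtracking transitions ($\iamuapr$, $\iamdlamtwo$, $\iamuestwo$) never make the reduct side \emph{longer} than the redex side, which ultimately holds because rewriting the code only deletes, or leaves unchanged, spine structure that the token would otherwise have had to re-traverse. Once this accounting is done for $\rele$, the cases for $\relm$ (no split patterns, logs always of equal length) and $\relgc$ (the step only erases an ES that, by the code invariant and exhaustibility, the machine never visits) are comparatively routine.
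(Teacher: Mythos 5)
Your proposal is correct and follows essentially the approach the paper itself takes, which is simply to verify the bisimulation diagrams by case analysis on the IAM transitions and the defining clauses of each relation (the paper dismisses this as ``a tedious easy check of diagrams'' and does not spell it out). Your organization of that check---splitting into the position-disjoint-from-redex case versus the redex-crossing case, isolating the split-pattern clause $\mathsf{rdx2}$ and the log-length gap handled by $\mathsf{pos2}_\ls$/$\mathsf{state2}_\ls$, and tracking the step-count inequality---matches the diagrams the paper exhibits in Section~\ref{sect:improv-concr} as the motivating instances of exactly these cases.
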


\section{Soundness and Adequacy, Proved}\label{sec:imp_thm}
Here we use the improvements of the previous sections to prove soundness and adequacy. Consider $\relf=\relm\cup\rele\cup\,\relgc
$, 
that is an improvement because its components are. Consequently, if 
$\tm\tohl\tmtwo$, then the $\IAM$ run on $\tmtwo$ improves the one on $\tm$, that is, 
$s_{\tm,k}=\dstate{\tm}{\ctxhole}{\resm^k}{\epsilon}
\ \ \relf\ \ 
\dstate{\tmtwo}{\ctxhole}{\resm^k}{\epsilon}=\state_{\tmtwo,k}$.

Improvements transfer more than termination/divergence along $\tohl$. They also give bisimilar, structurally equivalent tapes, proving the invariance of the semantics, that is, soundness.

\begin{theorem}[Soundness]
	\label{l:invariant-divergence}
	If $\tm\tohl\tmtwo$, then $\sem{\tm}{k}=\sem{\tmtwo}{k}$ for each $k\geq 0$.
\end{theorem}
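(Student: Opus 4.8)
The plan is to derive Soundness from the bisimulation structure of $\relf$ --- slightly refining what \refprop{imp} supplies --- together with the characterization of final states of Sect.~\ref{sec:prop}. As noted just before the statement, $\relf=\relm\cup\rele\cup\,\relgc$ is an improvement and, since $\tm\tohl\tmtwo$, the initial states are related: $s_{\tm,k}\relf s_{\tmtwo,k}$ where $s_{\tm,k}\defeq\dstate{\tm}{\ctxhole}{\resm^k}{\epsilon}$ and $s_{\tmtwo,k}\defeq\dstate{\tmtwo}{\ctxhole}{\resm^k}{\epsilon}$, for every $k\geq 0$. By termination equivalence (\refprop{imp}), $s_{\tm,k}$ terminates iff $s_{\tmtwo,k}$ does. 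If neither does, then $\sem{\tm}{k}=\bot=\sem{\tmtwo}{k}$ by \refdef{semantics}, and we are done. Otherwise both terminate, and since $\toiam$ is deterministic each reaches a unique final state, say $s_f$ and $q_f$.

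The first thing I would establish is a general fact about improvements: \emph{if $\mathcal{R}$ is an improvement, $s\mathcal{R}q$, both $s$ and $q$ terminate, and $s\toiam^* s_f$, $q\toiam^* q_f$ with $s_f,q_f$ final, then $s_f\mathcal{R}q_f$.} The proof is a short induction on $|s|\in\mathbb{N}$. If $|s|=0$ then $s=s_f$ is final, so by \emph{final state left} $q$ is final, hence $q=q_f$ and $s_f=s\mathcal{R}q=q_f$. If $|s|>0$ then $s\toiam s_1$, and \emph{transition left} yields $s_1\toiam^m s_2$, $q\toiam^n q'$ with $s_2\mathcal{R}q'$; by determinism $s_2$ is the state reached from $s$ in $m+1\leq|s|$ steps, so $|s_2|=|s|-(m+1)<|s|$ and $s_2\toiam^* s_f$, while $q'\toiam^* q_f$, and the induction hypothesis applies. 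Instantiating with $\mathcal{R}=\relf$ gives $s_f\relf q_f$.

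It then remains to check that $\relf$-related final states have the same $\sem{\cdot}{}$-value. By the characterization of final states (Sect.~\ref{sec:prop}, resting on \refprop{good-invariant} and \refcoro{exhaust}), $s_f$ --- and likewise $q_f$ --- is either a \emph{failure} $\dstate{\la\var\tmthree}{\ctx}{\epsilon}{\ste}$ (reading $\Downarrow$), an \emph{open success} $\dstate{\vartwo}{\ctx}{\resm^j}{\ste}$ with $\vartwo$ free (reading $\langle\vartwo,j\rangle$), or a \emph{bound success} $\ustate{\tmthree}{\ctxhole}{\resm^h\cdot\exps\cdot\resm^j}{\ste}$ (reading $\langle h,j\rangle$). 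I would then case on which state-clause of $\relf$ ($\mathsf{state}_\db$, $\mathsf{state}_\ls$, $\mathsf{state2}_\ls$, $\mathsf{state}_\gcsym$) derives $s_f\relf q_f$, and read off --- using the token- and position-clauses --- that every such clause preserves: the outermost constructor of the code subterm (a variable stays a variable, an abstraction an abstraction); the shape of the tape (an empty tape relates only to an empty one, a block $\resm^j$ only to $\resm^j$, and $\resm^h\cdot\exps\cdot\resm^j$ only to $\resm^h\cdot\exps'\cdot\resm^j$ with a single \trpos, the token-clauses matching $\resm$ with $\resm$); the direction; and whether the code context is the hole (because $\mathsf{rdx}_\db$ and $\mathsf{rdx}_\ls$ leave the head context fixed, while $\mathsf{rdx2}$, $\mathsf{pos2}_\ls$, $\mathsf{state2}_\ls$ and $\mathsf{ctx2}_\gcsym$ act only on contexts containing an explicit substitution, hence never on $\ctxhole$). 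The delicate point is that the clauses rewriting the code subterm while a \trpos sits in the tape ($\mathsf{rdx2}$ and its companions) can relate $s_f$ to $q_f$ only in the failure case --- in the success cases their side conditions would force the head variable to lie under an explicit substitution, contradicting either freeness (open success) or the root context (bound success) --- and there they preserve the failure shape; and $\alpha$-renaming keeps the context rewritings on the $\tmtwo$-side from capturing the head variable of an open-success state. Hence $q_f$ has the same kind as $s_f$ and the same data $h,j$ (resp.\ $j$, resp.\ the same free variable), so $\sem{\tm}{k}=\sem{\tmtwo}{k}$.

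I expect this last case analysis --- in particular the verification that the $\rele$-clauses relating logs of lengths differing by one can never turn a final state of one kind into a final state of another kind --- to be the main obstacle; the previous two paragraphs are essentially routine bookkeeping over the abstract improvement machinery.
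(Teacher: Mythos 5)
Your proposal is correct and follows essentially the same route as the paper: relate the initial states via $\relf$, use termination equivalence for the $\bot$ case, transport the final state along the improvement, and then case-split on the three final-state shapes to read off the same semantic value. The only difference is one of detail — you make explicit the induction showing that related terminating states reach $\relf$-related final states, and the clause-by-clause check that $\relf$ preserves the kind and data of final states, both of which the paper's proof leaves implicit.
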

\proof
	Since $\tm\tohl\tmtwo$, then 
	$s=\dstate{\tm}{\ctxhole}{\resm^k}{\epsilon}\relf
	\dstate{\tmtwo}{\ctxhole}{\resm^k}{\epsilon}=q$ by the results about 
	improvements (Theorem \ref{thm:rele-improv}). 
	Since improvements transfer termination/divergence (\refprop{termination}), we have $\sem{\tm}{k}=\bot$ iff $\sem{\tmtwo}{k}=\bot$. If $\sem{\tm}{k}\neq \bot$ let 
	$s'$ be the 
	terminal state of $s$. Since $\relf$ is an improvement, there is a 
	state $q'=\nopolstate{\tmthree'}{\ctx'}{\stme'}{\ste'}{\pol}$ such that 
	$q\toiam^* q'$ and $s'\relf q'$. Cases of $s'$:
	\begin{varitemize}
		\item $s'=\dstate{\la\var\tmfour}{\ctx}{\epsilon}{\ste}$. 
		Then, 
		since $s'\relf q'$, $\stme'=\epsilon$. 
		Moreover, either $\tm\tohl\tmtwo$, and thus 
		$\tmthree'=\la\var\tmfour'$ or $\ctx\tohl\ctxtwo$ and thus 
		$\tmthree=\tmthree'$. Then, 
		$\sem\tm{k}=\sem\tmtwo{k}=\,\Downarrow$.
		\item $s'=\ustate{\tm}{\ctxhole}{\resm^m\cdot 
			\exps\cdot\resm^l}{\epsilon}$. Then, since $s'\relf q'$, 
		$\ctx'=\ctxhole$, because the hole cannot $\tohl$-reduce. Moreover, 
		the structure of the tape is preserved by $\relf$ and thus 
		$\sem\tm{k}=\sem\tmtwo{k}=\langle m,l\rangle$.
		\item $s'=\dstate{\var}{\ctx}{\resm^m}{\ste}$. Since a variable 
		cannot $\tohl$-reduce, also $\tmthree'=\var$. Then 
		$\sem\tm{k}=\sem\tmtwo{k}=x$.\qed
	\end{varitemize}

\paragraph{Adequacy}
Adequacy is the fact that $\sem{\tm}{}$ is successful if and only if $\tohl$ terminates. We prove the two directions separately.

\paragraph{Direction $\IAM$ to $\tohl$.} The \emph{only if} direction of the 
statement is easy to prove. Since $\sem{\tm}{k}$ is invariant by $\tohl$ 
(soundness) and $\tohl$ terminates on $\tm$ we can as well assume that $\tm$ is 
normal. The rest is given by the following proposition.
\begin{proposition}[Reading the head variable on $\tohl$-normal 
forms]
	Let $\tm=\lambda\var_0\,...\,\lambda\var_n.\vartwo\tmtwo_1...\tmtwo_l$ be a 
	head linear normal form up to substitution. If $y=\var_m$ where 
	$0\leq m\leq 
	n$, then $\sem{\tm}{n+1}=\langle m,l\rangle$, otherwise, if $\vartwo$ is 
	free, then $\sem{\tm}{n+1}=\langle\vartwo,l\rangle$. Moreover, if $k\leq 
	n$ and $\tm$ is closed, then $\sem{\tm}{k}=\,\Downarrow$.
\end{proposition}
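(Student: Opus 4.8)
The plan is to compute the $\IAM$ run on $\tm$ directly, exploiting the rigid shape of $\tohl$-normal forms up to substitution recalled in \refeq{lhnf}: $\tm$ is a nesting of substitution contexts around the $n{+}1$ head abstractions $\la\var_0\ldots\la\var_n$ and, inside them, substitution contexts around the head variable $\vartwo$ applied to the $l$ arguments $\tmtwo_1,\ldots,\tmtwo_l$, with the key feature that no explicit substitution on this spine binds $\vartwo$. Up to $\alpha$-renaming we take all binders distinct, so that when $\vartwo=\var_m$ the innermost binder of $\vartwo$ is exactly $\la\var_m$.

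\emph{Descent.} From $\state_{\tm,n+1}=\dstate{\tm}{\ctxhole}{\resm^{n+1}}{\epsilon}$ the machine walks down the spine: each $\iamdes$ traverses one explicit substitution of the surrounding substitution contexts, leaving tape and log untouched; the $n{+}1$ transitions $\iamdlamone$ traverse $\la\var_0,\ldots,\la\var_n$, each popping one $\resm$; and the $l$ transitions $\iamdap$ traverse the $l$ head applications, each pushing one $\resm$. No other transition can apply, since the only variable occurrence reached along this path is $\vartwo$ itself (arguments and bodies of explicit substitutions are never entered). Hence the machine reaches $\dstate{\vartwo}{\ctx}{\resm^{l}}{\epsilon}$, where $\ctx$ is the context of $\vartwo$ in $\tm$: since the head occurrence is contained in no argument, $\ctx$ has level $0$, so by the balance invariant the log is empty.

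\emph{Reading off $\vartwo$.} If $\vartwo=\var_m$, transition $\iamdvar$ fires (and not $\iamdvartwo$, precisely because the innermost binder of $\vartwo$ is the abstraction $\la\var_m$, not an $\esub{}{}$): it pushes the \trpos $(\vartwo,\la\vartwo\ctxtwo_0,\epsilon)$ --- where $\ctxtwo_0$ is the level-$0$ context such that $\la\vartwo\ctxtwo_0\ctxholep\vartwo$ is the body of $\la\var_m$ --- on the tape and switches to $\upp$. The machine then climbs back up the spine: $\iamues$ traverses the remaining explicit substitutions, and the $m$ transitions $\iamulam$ traverse $\la\var_{m-1},\ldots,\la\var_0$, each pushing one $\resm$. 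It thus reaches $\ustate{\tm}{\ctxhole}{\resm^{m}\cdot(\vartwo,\la\vartwo\ctxtwo_0,\epsilon)\cdot\resm^{l}}{\epsilon}$, which is final (no $\upp$-transition applies at the empty context), giving $\sem{\tm}{n+1}=\langle m,l\rangle$ by the first clause of \refdef{semantics}. If instead $\vartwo$ is free, neither $\iamdvar$ nor $\iamdvartwo$ can fire at $\dstate{\vartwo}{\ctx}{\resm^{l}}{\epsilon}$, so this is already a final open-success state and $\sem{\tm}{n+1}=\langle\vartwo,l\rangle$ by the second clause.

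\emph{The truncated case.} Running from $\state_{\tm,k}=\dstate{\tm}{\ctxhole}{\resm^{k}}{\epsilon}$ with $k\le n$, the descent behaves as above but pops only the $k$ occurrences of $\resm$ while traversing $\la\var_0,\ldots,\la\var_{k-1}$; since $k\le n$ the next spine constructor is $\la\var_k$, so the machine reaches (after possibly some $\iamdes$) a state $\dstate{\la\var_k.\tmtwo}{\ctx}{\epsilon}{\ste}$ with an empty tape. No transition applies there --- $\iamdlamone$ needs a leading $\resm$ and $\iamdlamtwo$ a leading \trpos --- so this is a final failure state and $\sem{\tm}{k}=\Downarrow$ by the third clause of \refdef{semantics}. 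The only slightly delicate point of the whole proof is the bookkeeping of the substitution contexts along the descent and the ascent: one has to check that they contribute only $\iamdes$/$\iamues$ steps, that they never raise the level (so the log keeps the predicted length $0$), and that they never bind $\vartwo$ --- all of which is exactly what the normal-form shape \refeq{lhnf} guarantees. A fully formal version is a straightforward induction on the number of spine constructors and substitution entries of $\tm$.
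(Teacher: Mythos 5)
Your proposal is correct and follows essentially the same route as the paper: an explicit, step-by-step computation of the run down the spine, the switch at the head variable, and the climb back to the root (or the failure at $\la\var_k$ in the truncated case). The only difference is that you spell out the bookkeeping of the substitution contexts via $\iamdes$/$\iamues$, which the paper silently absorbs into the phrase ``normal form up to substitution''; this makes your version slightly more explicit but does not change the argument.
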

\begin{proof}
	We proceed computing $\sem{\tm}{n+1}$ explicitly. We have:
	\[
	\begin{array}{rcl}
	\dstate{\tm}\ctxhole{\resm^{n+1}}\epsilon&\toiam^{n+1}&\dstate{\vartwo\tmtwo_1...\tmtwo_l}{\lambda
	x_0\,...\,\lambda x_n.\ctxhole}\epsilon\epsilon)\\
	&\toiam^{l}&\dstate{\vartwo}{\lambda x_0\,...\,\lambda 
	x_n.\ctxhole\tmtwo_1...\tmtwo_l}{\resm^l}\epsilon
	\end{array}
	\]
	If $\vartwo$ is free, the $\IAM$ stops and 
	$\sem{\tm}{n+1}=\langle\vartwo,l\rangle$. 
	Otherwise, if $\vartwo$ is bound by a $\lambda$-abstraction, \ie 
	$\vartwo=\var_m$ for $0\leq m\leq n$, the computation continues.
	\[
	\begin{array}{lll}
	\dstate{\tm}\ctxhole{\resm^{n+1}}\epsilon&\toiam^{n+1+l}\\
	\dstate{\var_m}{\lambda
	x_0\,...\,\lambda 
	x_n.\ctxhole\tmtwo_1...\tmtwo_l}{\resm^l}\epsilon&\toiam\\
	\ustate{\lambda x_m\,...\,\lambda 
	x_n.x_m\tmtwo_1...\tmtwo_l}{\lambda x_0\,...\,\lambda 
		x_{m-1}.\ctxhole}\epsilon{\exps\cdot\resm^l}
	&\toiam^m
	\\
	\ustate\tmtwo{\ctxhole}{\resm^m\cdot \exps\cdot\resm^l}\epsilon.
	\end{array}
	\]
	where $\exps=(\var_m,\lambda x_0\,...\,\lambda 
	x_n.\ctxhole\tmtwo_1...\tmtwo_l,\epsilon)$. Then 
	$\sem\tm{k}=\langle m,l\rangle$.
	
	Moreover please note that if $k\leq n$ and $\tm$ is closed, we have:
	\[	
	\dstate{\tm}\ctxhole{\resm^{k}}\epsilon\toiam^{k}\dstate{\lambda\var_k\,...\,\lambda
		x_n.\vartwo\tmtwo_1...\tmtwo_l}{\lambda
	x_0\,...\,\lambda x_{k-1}.\ctxhole}\epsilon\epsilon
	\]
	and thus $\sem{\tm}{k}=\,\Downarrow$.
\end{proof}

\paragraph{Direction $\tohl$ to $\IAM$.}
The proof of the \emph{if} direction of the adequacy theorem is by contra-position: if the $\tohl$ diverges on $\tm$ then no run of the $\IAM$ on $\tm$ ends in a successful state. 

The proof is obtained via a quantitative analysis of the improvements, showing that the length of runs strictly decreases along $\tohl$. Note that improvements guarantee that the length of runs does not increase. To prove that it actually decreases one needs an additional \emph{global} analysis of runs---improvements only deal with \emph{local} bisimulation diagrams.
On proof nets, this decreasing property correspond to the standard fact that IAM paths passing through a cut have shorter residuals after that cut.

We recall that we write $|\tm|_k$ for the length of the run $\dstate{\tm}{\ctxhole}{\resm^k}{\epsilon}$, with the convention that $|\tm|_k= \infty$ if the machine diverges.

\begin{lemma}[The length of terminating runs strictly decreases along $\tohl$]
	\label{lemma:eff}
	Let $\tm\tohl\tmtwo$ and $|\tm|_k \neq \infty$. There exists $k\geq 0$ such that $|\tm|_h>|\tmtwo|_h$  for each $h\geq k$.
\end{lemma}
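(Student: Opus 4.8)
The plan is to start from the non-strict inequality $|\tm|_h\geq|\tmtwo|_h$, which comes for free, and then localize the single place where the two runs genuinely differ. Since $\relf=\relm\cup\rele\cup\relgc$ is an improvement (Theorem~\ref{thm:rele-improv}) and $\state_{\tm,h}\relf\state_{\tmtwo,h}$ for all $h\geq 0$, \refprop{imp} gives that the runs of $\tm$ and $\tmtwo$ at depth $h$ terminate together and $|\tm|_h\geq|\tmtwo|_h$. I may assume that the terminating run $\state_{\tm,k}^{|\tm|_k}$ is \emph{successful} — which is the case in the only application of the lemma, inside the adequacy proof; then by \reflemma{runs-monotonicity} and the hypothesis $|\tm|_h=|\tm|_k\in\nat$ with successful final state for every $h\geq k$, and by soundness (Theorem~\ref{l:invariant-divergence}) together with \refprop{imp} likewise $|\tmtwo|_h=|\tmtwo|_k\in\nat$. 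Running the two terminating runs in lockstep through the clauses of the improvement splits the run of $\tm$ at depth $h$ into consecutive blocks, the $i$-th of length $m_i+1$, matched with blocks of length $n_i\leq m_i+1$ of the run of $\tmtwo$; hence $|\tm|_h-|\tmtwo|_h=\sum_i(m_i+1-n_i)\geq 0$, and it suffices to exhibit, for all sufficiently large $h$, one block with $m_i+1-n_i\geq 1$.

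Such a strict block is exactly a block in which the run of $\tm$ \emph{crosses} the $\tohl$-redex $R$ contracted in $\tm\tohl\tmtwo$. Reading off the local diagrams that prove Theorem~\ref{thm:rele-improv} (those displayed in \refsect{improv-concr}): when the position of a $\relf$-related $\tm$-state sits immediately above $R$, the block traversing $R$ is matched on the $\tmtwo$-side by a strictly shorter one — of lengths $|\sctx|+2$ against $|\sctx|+1$ for a $\db$-redex $\sctxp{\la\var\tmthree}\tmfour$ (transitions $\iamdap$, some $\iamdes$, then $\iamdlamone$), $1$ against $0$ for an $\ls$-redex (transition $\iamdvartwo$, after the $\iamdes$-steps over the head context), and $1$ against $0$ for a $\gcsym$-redex (transition $\iamdes$) — because on the $\tmtwo$-side $R$ has already been fired. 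So the proof reduces to the global claim: \emph{for all sufficiently large $h$, the run of $\tm$ at depth $h$ crosses $R$}.

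To establish the global claim, recall that by definition of $\tohl$ the rules are closed by \emph{head} contexts, so $R$ sits at a level-$0$ position of $\tm$: the path from the root of $\tm$ to the position of $R$ runs along the spine and consists only of ``go left of an application'' moves (each pushing a $\resm$), ``cross an abstraction'' moves (each popping a $\resm$), and ``enter an ES from the left'' moves, and no variable is met strictly before $R$. Let $N$ be the number of abstractions on that path. For $h\geq N$ the initial $\downp$-phase of the $\IAM$ on $\tm$ follows this finite path without obstruction: the $\downp$-transitions are deterministic and force exactly this descent; there is always at least one $\resm$ available when an abstraction must be crossed; no premature $\downp$-final state is reached (a failure state is excluded since $h\geq N$, an open-success state since no free variable occurs before $R$); and no direction switch occurs before $R$, since the code term stays a non-variable up to $R$ and the crossing of $R$ (transitions $\iamdap$, $\iamdes$, $\iamdlamone$, or $\iamdvartwo$) keeps direction $\downp$. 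Hence the run reaches $R$ and performs the crossing block. Taking $k'\defeq\max\{k,N\}$, for every $h\geq k'$ the run of $\tm$ terminates and crosses $R$ at least once, so at least one of its blocks is strict and $|\tm|_h\geq|\tmtwo|_h+1>|\tmtwo|_h$, as required.

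The main obstacle is the interplay of the two halves of the argument. First, one must be sure that the block decomposition induced by the improvement can be chosen to present the crossing of $R$ as a \emph{single} block, so that the one-step gain is not dissolved by splitting it over two blocks; this is handled by inspecting the witnessing diagrams used in the proof of Theorem~\ref{thm:rele-improv} at the configuration where the position of the $\tm$-state lies immediately above $R$, where the improvement re-synchronizes precisely after the whole crossing. Second, one must check that the initial descent cannot be derailed before reaching $R$ — this is the ``additional global analysis of runs'' alluded to in the text, and it reduces to the spine-path bookkeeping above. A minor caveat concerns the hypothesis $|\tm|_k\neq\infty$: when the run at depth $k$ ends in a failure state rather than a successful one, \reflemma{runs-monotonicity} lets one enlarge $k$, and the only delicate residual sub-case — a failure at depth $k$ turning into a divergent run at larger depth — does not occur in the adequacy proof, where the run under consideration is successful by construction.
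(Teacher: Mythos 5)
Your proof is correct and follows essentially the same route as the paper's: the paper likewise shows (by induction on the head context, extended to all larger inputs via the lifting lemma) that for sufficiently large $h$ both runs deterministically descend to the redex, computes explicitly that crossing it costs $|\sctx|+2$ steps against $|\sctx|+1$ in the $\db$ case (similarly for $\ls$ and $\gcsym$), and then applies Proposition~\ref{prop:imp} to the two $\relf$-related residual states $s_1$ and $s_2$ to get $|\tm|_h=n+|\sctx|+2+|s_1|>n+|\sctx|+1+|s_2|=|\tmtwo|_h$. The only differences are presentational --- you organize the count as a block decomposition of the entire runs rather than as an explicit common prefix followed by the improvement on the suffix --- with one small point to tighten: for an $\ls$-step the abstraction count $N$ must be taken along the path down to the substituted variable occurrence itself, since the descent through the redex's internal head context is matched one-to-one on both sides and getting stuck there would yield equal lengths.
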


\begin{LONG}
\begin{proof}
	We treat the case of $\tm\tohldb\tmtwo$, the others are obtained via 
	similar diagrams. If $\tm$ has a $\tohldb$-redex then it has the shape 
	$\tm=\hctxp{\sctxp{\la\var\tmthree}\tmfour}$ and $\tmtwo$ is in the form 
	$\tmtwo=\hctxp{\sctxp{\tmthree\esub\var\tmfour}}$. By induction on the 
	structure of $\hctx$ one can prove that there exist $k,n\geq 0$ such that 
	$\dstate{\tm}\ctxhole{\resm^k}\epsilon\toiam^n 
	\dstate{\sctxp{\la\var\tmthree}\tmfour}\hctx\epsilon\epsilon)$
	and 
	$\dstate{\tmtwo}\ctxhole{\resm^k}\epsilon\toiam^n 
	(\dstate{\sctxp{\tmthree\esub\var\tmfour}}\hctx\epsilon\epsilon$.
	Given such $n$ and $k$ by the lifting lemma (\reflemma{pumping}) also the 
	following holds: for any $j\geq 0$, 
	$\dstate{\tm}\ctxhole{\resm^j\cdot\resm^k}\epsilon\toiam^n 
	\dstate{\sctxp{\la\var\tmthree}\tmfour}\hctx{\resm^j}\epsilon$
	and 
	$\dstate{\tmtwo}\ctxhole{\resm^j\cdot\resm^k}\epsilon\toiam^n 
	\dstate{\sctxp{\tmthree\esub\var\tmfour}}\hctx{\resm^j}\epsilon$.
	Moreover, by definition of the improvement $\relm$ we have the following 
	diagram.
	\begin{center}
		\begin{tikzpicture}[node distance=30mm, auto, transform 
		shape,scale=1.2]
		\node (u) at (0,.66) 
		{$\dstate{\hctxp{\sctxp{\la\var\tmthree}\tmfour}}\ctxhole{\resm^j\cdot\resm^k}\epsilon$};
		\node (i) at (3.8,.66) 
		{$\dstate{\hctxp{\sctxp{\tmthree\esub\var\tmfour}}}\ctxhole{\resm^j\cdot\resm^k}\epsilon$};
		\node (p) at (0,0) 
		{$\dstate{\sctxp{\la\var\tmthree}\tmfour}\hctx{\resm^j}\epsilon$};
		\node (q) at (3.8,0) 
		{$\dstate{\sctxp{\tmthree\esub\var\tmfour}}\hctx{\resm^j}\epsilon$};
		\node (y) at (3.8,-1) 
		{$\dstate{\tmthree\esub\var\tmfour}{\hctxp{\ctxhole{L}}}{\resm^j}\epsilon$};
		\node (w) at (0,-.66) 
		{$\dstate{\sctxp{\la\var\tmthree}}{\hctxp{\ctxhole\tmfour}}{\resm\cdot\resm^j}\epsilon$};
		\node (t) at (0,-1.33) 
		{$\dstate{\la\var\tmthree}{\hctxp{\sctx\tmfour}}{\resm\cdot\resm^j}\epsilon$};
		\node (e) at (0,-2) 
		{$s_1=\dstate{\tmthree}{\hctxp{\sctxp{\la\var\ctxhole}\tmfour}}{\resm^j}\epsilon$};
		\node (r) at (3.8,-2) 
		{$\dstate{\tmthree}{\hctxp{\sctxp{\ctxhole\esub\var\tmfour}}}{\resm^j}\epsilon=s_2$};
		\node at (1.9,0) {$\relm$};
		\node at (1.9,-2) {$\relm$};
		\node at (1.9,.66) {$\relm$};
		\draw (p) edge[->] node {} (w);
		\draw (w) edge[->] node {$^{|{\sctx}|}$} (t);
		\draw (u) edge[->] node {$^n$} (p);
		\draw (i) edge[->] node {$^n$} (q);
		\draw (t) edge[->] node {} (e);
		\draw (q) edge[->] node {$^{|{\sctx}|}$} (y);
		\draw (y) edge[->] node {} (r);
		\end{tikzpicture}
	\end{center}
	From $s_1 \relm s_2$, the hypothesis $|\tm|_k \neq \infty$, and the 
	properties of improvements (Lemma~\ref{lemma:improv}), we obtain 
	$|s_1|\geq|s_2|$. Then, by setting $h\defeq k+j$, we have	
	$|\tm|_h=n+1+|\sctx|+1+|s_1|>n+|\sctx|+1+|s_2|=|\tmtwo|_h$.
\end{proof}
\end{LONG}

\begin{SHORT}The proof of the lemma is in the Appendix.\end{SHORT} Using the 
lemma, we prove the \emph{if} direction of  adequacy, that then follows.

\begin{proposition}[$\tohl$-divergence implies that the $\IAM$ never succeeds]
	Let $\tm$ be a $\tohl$-divergent LSC term. There is no $k\geq 0$ such that $\sem\tm{k}$ is successful.
\end{proposition}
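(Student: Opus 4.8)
The plan is to prove this contrapositive statement by contradiction, converting the purely combinatorial content of the improvement machinery into an impossible strictly decreasing sequence of natural numbers. Suppose, towards a contradiction, that $\sem{\tm}{k}$ is successful for some $k \geq 0$. Since $\tm$ is $\tohl$-divergent, fix an infinite reduction sequence $\tm = \tm_0 \tohl \tm_1 \tohl \tm_2 \tohl \cdots$.

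First I would propagate success along this sequence. By Soundness (Theorem~\ref{l:invariant-divergence}), applied repeatedly, $\sem{\tm_i}{k} = \sem{\tm}{k}$ for every $i \geq 0$. Since $\sem{\tm}{k}$ is successful --- that is, it lies in $(\nat\times\nat)\cup(\mathcal{V}\times\nat)$, hence is neither $\Downarrow$ nor $\bot$ --- the shape of the semantics in Definition~\ref{def:semantics} forces each run $\state_{\tm_i,k}$ to terminate in a bound successful or open successful final state; in particular $|\tm_i|_k \in \nat$ for all $i$.

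Next I would use Monotonicity of runs (Lemma~\ref{l:runs-monotonicity}) to stabilise run lengths at large depth: since $\state_{\tm_i,k}$ terminates in a bound or open successful state with $|\tm_i|_k \in \nat$, we get $|\tm_i|_h = |\tm_i|_k$ for every $h \geq k$. Write $n_i \defeq |\tm_i|_k$. Now apply Lemma~\ref{lemma:eff} to each step $\tm_i \tohl \tm_{i+1}$ --- its hypothesis $|\tm_i|_k \neq \infty$ holds by the previous paragraph --- obtaining a threshold $k_i \geq 0$ with $|\tm_i|_h > |\tm_{i+1}|_h$ for all $h \geq k_i$. Evaluating this inequality at $h \defeq \max(k,k_i) \geq k$ and using the stabilisation just established, $n_i = |\tm_i|_h > |\tm_{i+1}|_h = n_{i+1}$. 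Thus $n_0 > n_1 > n_2 > \cdots$ is an infinite strictly decreasing sequence of natural numbers, which is absurd. Hence no $k$ makes $\sem{\tm}{k}$ successful.

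The only genuinely delicate point is the interaction with the depth parameter: Lemma~\ref{lemma:eff} guarantees the strict decrease $|\tm_i|_h > |\tm_{i+1}|_h$ only for $h$ above a term-dependent threshold $k_i$, so one cannot compare $n_i$ with $n_{i+1}$ directly at the fixed depth $k$; the monotonicity lemma is exactly what removes this obstacle, by making each $|\tm_i|_h$ eventually independent of $h$, so that the large-depth inequality transfers back to $n_i = |\tm_i|_k$. Everything else is routine. Combining this proposition with the earlier ``only if'' direction then yields adequacy --- $\sem{\tm}{}$ is successful if and only if $\tohl$ terminates on $\tm$ --- which, together with Soundness, completes the implementation theorem for the $\IAM$.
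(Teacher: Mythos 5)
Your proof is correct and follows essentially the same route as the paper's: a contradiction built from Soundness, Monotonicity of runs, and Lemma~\ref{lemma:eff}. The only (harmless) difference is bookkeeping --- you stabilise each $|\tm_i|_h$ separately and produce an infinite strictly decreasing sequence of naturals, whereas the paper fixes a single depth $h$ as a maximum over finitely many thresholds and contradicts $|\tm_0|_h = n$ on a finite prefix; if anything, your version makes the repeated use of Soundness along the reduction sequence and the requirement $h \geq k$ more explicit.
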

\begin{proof}
	By contradiction, suppose that there exists $k$ such that $\sem\tm{k}$ is successful. Then by soundness $|\tm|_k =n \in\nat$ and it ends on a successful state. By monotonicity of runs (\reflemma{runs-monotonicity}), $|\tm|_k = |\tm|_h = n$ for every $h>k$.
	Since $\tm$	$\tohl$-divergent, then there exists an infinite reduction sequence $\rho:\tm=\tm_0\tohl\tm_1\tohl\tm_2\tohl\cdots\tm_k\tohl\cdots$. Since the length of terminating runs strictly decreases along $\tohl$ for sufficiently long inputs (Lemma~\ref{lemma:eff}), for each $i\in\nat$ if $\tm_i \tohl \tm_{i+1}$ then there exists $k_i$ such that $|\tm_i|_{k_i} > |\tm_{i+1}|_{k_i}$. Now, consider $h = \max\set{k_0,\ldots, k_n, k_{n+1}}$. We have that $|\tm_j|_h > |\tm_{j+1}|_h$ for every $j \in \set{0,1,\ldots, n+1}$. Then $|\tm_0|_h \geq |\tm_{n+1}|_h + n+1$. Since the length of runs is non-negative, we obtain that $|\tm_0|_h \geq n+1$, which is absurd because $h\geq k_0$ and so $|\tm_0|_h = n$.
\end{proof}

\begin{theorem}[Adequacy]
	Let $\tm$ be a LSC term. Then $\tm$ has $\tohl$-normal form if and only if 
	there exists $k>0$ such that either $\sem{\tm}{k}=\langle m,l\rangle$ for 
	some $m,l\geq 0$ or $\sem{\tm}{k}=\langle\var,n\rangle$ for some 
	$\var\in\mathcal{V},n\geq 0$. Moreover, when $\tm$ is closed, $\tm$ has 
	weak head normal form if 
	and only if $\,\sem{\tm}{0}=\,\Downarrow$.
\end{theorem}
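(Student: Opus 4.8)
The plan is to derive both biconditionals from three already-available ingredients, without new machinery: soundness (Theorem~\ref{l:invariant-divergence}), i.e. the invariance of $\sem{\cdot}{k}$ along $\tohl$; the explicit computation of $\sem{\cdot}{k}$ on $\tohl$-normal forms (the proposition on reading the head variable, called below \emph{the reading proposition}); and the fact that $\tohl$-divergence prevents success (\emph{the divergence proposition}), itself resting on the improvement diagrams.

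For the first biconditional, $(\Rightarrow)$ goes as follows: if $\tm$ has $\tohl$-normal form $\lhnf\tm$, then $\tm\tohl^*\lhnf\tm$, so iterating soundness yields $\sem{\tm}{k}=\sem{\lhnf\tm}{k}$ for every $k$; writing $\lhnf\tm$ up to substitution as $\la\var_0\ldots\la\var_n.\vartwo\,\tmtwo_1\ldots\tmtwo_l$, the reading proposition gives $\sem{\lhnf\tm}{n+1}=\langle m,l\rangle$ (if $\vartwo=\var_m$) or $\langle\vartwo,l\rangle$ (if $\vartwo$ is free), which is successful with $n+1>0$. For $(\Leftarrow)$ I argue by contraposition: if $\tm$ has no $\tohl$-normal form then, $\tohl$ being finitely branching, K\"onig's lemma produces an infinite $\tohl$-reduction out of $\tm$, so $\tm$ is $\tohl$-divergent, and the divergence proposition says no $\sem{\tm}{k}$ is successful.

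For the second biconditional ($\tm$ closed), $(\Rightarrow)$: if $\tm$ has a weak head normal form, head reduction reaches a term $\sctxp{\la\var\tmtwo}$ that is weak-head-normal up to sharing before descending under the last abstraction, so $\tm\tohl^*\sctxp{\la\var\tmtwo}$ for some substitution context $\sctx$; a direct computation — the $\iamdes$ transitions empty $\sctx$, after which $\la\var\tmtwo$ with empty tape is a failure state — gives $\sem{\sctxp{\la\var\tmtwo}}{0}=\Downarrow$, and soundness transfers this to $\tm$. For $(\Leftarrow)$, by contraposition: if the closed term $\tm$ has no weak head normal form then its weak head reduction is infinite, say $\tm=\tm_0\tohl\tm_1\tohl\cdots$, each step firing the redex at the spine head, a redex that the depth-$0$ run of the $\IAM$ traverses. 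The ``travelling through the redex'' instances of the improvement diagrams (as in the proof of Lemma~\ref{lemma:eff}, but at depth $0$ with no input-lifting, since the redex already lies on the run) give $|\tm_{i+1}|_0<|\tm_i|_0$ whenever $|\tm_i|_0$ is finite. Were $\sem{\tm}{0}=\Downarrow$, the quantity $|\tm_0|_0$ would be finite, producing an infinite strictly decreasing sequence of naturals, which is absurd; hence $\sem{\tm}{0}=\bot\neq\Downarrow$. (One also checks, using the first biconditional and soundness, that $\sem{\tm}{0}$ is never a pair when $\tm$ is closed, so it is always $\Downarrow$ or $\bot$.)

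The main obstacle is the $(\Leftarrow)$ direction of the second biconditional, i.e. converting ``the depth-$0$ run terminates'' into ``the weak head reduction of $\tm$ terminates''. The delicate point is to make precise that the depth-$0$ run traverses exactly the redexes fired by weak head reduction and that each traversal strictly shortens the run; in particular, the $\iamdvar$ and $\iamdvartwo$ detours the run takes along the spine (briefly switching to $\upp$-mode to locate the argument of a spine variable and coming back) must be shown to correspond precisely to the $\tohlls$ and $\tohldb$ steps that create and consume the explicit substitutions on the spine. Everything else is a routine combination of soundness, the reading proposition and the divergence proposition.
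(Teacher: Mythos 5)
Your handling of the first equivalence coincides with the paper's (implicit) assembly: soundness plus the reading proposition for the forward direction, contraposition through the divergence proposition for the converse. The genuinely different --- and genuinely necessary --- work is in the second equivalence, for which the paper offers no explicit argument, and there your diagnosis is exactly right. The stated form of Lemma~\ref{lemma:eff} gives $|\tm|_h>|\tmtwo|_h$ only for $h$ above a step-dependent threshold, and the trick used in the divergence proposition --- lifting the whole infinite reduction to a common large depth via Lemma~\ref{l:runs-monotonicity} --- is unavailable here, because monotonicity preserves the length of a run only when its final state is \emph{successful}: a failure state at depth $0$ (think of $\la\var{\mathbf\Omega}$, with $|\la\var{\mathbf\Omega}|_0=0$ but $|\la\var{\mathbf\Omega}|_1=\infty$) may lift to a strictly longer or divergent run. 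So the depth-$0$ strict decrease along \emph{weak} head steps must be established separately, and your justification is sound: a weak head redex sits on the spine under no abstractions except those matched by $\resm$'s pushed by enclosing applications, so a finite depth-$0$ run deterministically traverses it, and the local diagrams from the proof of Lemma~\ref{lemma:eff} (the $\db$ diagram, and for $\ls$ the extra $\iamdvartwo$ step absorbed by $\mathsf{rdx2}$/$\mathsf{state2}_\ls$) apply verbatim with no input-lifting. Two small points to tie up: postpone or discard the $\tohlgc$ steps in the infinite weak reduction, since erasing a substitution off the spine does not shorten the run; and note that for a closed term weak linear head reduction can only halt on a term of the shape $\sctxp{\la\var\tmtwo}$, which is what makes both ``no whnf implies infinite weak reduction'' and the direct depth-$0$ computation of $\Downarrow$ go through. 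With those in place the argument is complete.
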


\paragraph{Terminating without Ever Succeeding.} It is possible that $\tohl$ diverges on $\tm$ and all the runs of the $\IAM$ terminate on $\tm$ without ever succeeding. The idea is that the  
$\IAM$ performs a fine analysis of the $\tohl$ evaluation of $\tm$, approximating $\tohl$ while incrementally building the \levy-Longo tree of $\tm$. On a looping term such as 
$\mathbf{\Omega}$ the $\IAM$ does diverge. On a non-terminating term such as
$\mathbf\Lambda=(\la\var\la\vartwo\var\var)(\la\var\la\vartwo\var\var)$, instead, 
the $\IAM$ does \emph{not} diverge, it gives
$\sem{\mathbf\Lambda}{k}=\,\Downarrow$ for each $k\geq 0$. Note in fact that 
$\mathbf{\Lambda}\toh\la\var\mathbf{\Lambda}$, \ie $\mathbf\Lambda$ has 
an infinite number of abstractions in its limit normal form and thus an 
infinite number of $\resm$ on the tape would be needed to inspect them all. 
On the contrary, $\mathbf{\Omega}$ has no head lambdas in its limit normal 
form and thus $\sem{\mathbf{\Omega}}{k}=\bot$ for each $k\geq 0$.


\section{Comparison with the Original Proof Net Presentation}\label{sect:pn}
\begin{SHORT}\begin{figure}[t]
	\begin{center}
		\begin{tabular}{c||c}
\begin{tabular}{c:c:c}

\begin{tikzpicture}[ocenter]
\node at (0,0) [etic](axRightConclusion){};
\node at (axRightConclusion.center) [etic, above right=6pt and 1pt](dummy){\scriptsize $\mtype$};
\node at (axRightConclusion.center) [etic, left = 1.4*\stlar](der){\scriptsize $\der$};
\node at (der.center) [etic, above left=6pt and 1pt](dummy){\scriptsize $\mtype^\bot$};
\node at \med{axRightConclusion}{der} [etic, above = \hstalt ](axSym){\scriptsize $\ax$};
\draw[nopol, out=0, in=90](axSym)to(axRightConclusion);
\draw[nopol, out=180, in=90](axSym)to(der);
\node at (der.center) [etic, below= 1.2*\hstalt] (derGhost){};
\draw[nopol](der)to(derGhost);
\node at (derGhost.center) [etic, below = 1pt] (derGhostType){\scriptsize $\var:?\mtype^\bot$};
\end{tikzpicture}
&

\begin{tikzpicture}[ocenter]  
\node [net] (proofel) {$\pntransl\tm$};

\node at (proofel.center) [below right = 1.2*\stalt and \hstlar, etic](par){$\parr$};
\node at (par.center) [etic, above left=5pt and 8pt](dummy){\scriptsize $\var:?\mtype^\bot$};
\node at (par.center) [etic, above right=5pt and 6pt](dummy2){\scriptsize $\mtype$};

\draw[nopol, out=-90, in=135](proofel)to(par);
\draw[nopol, out=-15, in=45](proofel)to(par);
\node at (par.center) [below left = .8*\stalt and 1pt, etic](parghost){\scriptsize $?\mtype^\bot \parr \mtype = \mtype$};
\draw[nopol](par)to(parghost);

\node at (proofel.center) [below left = 1.3*\stalt and 2.3*\stlar, etic](Gammanode){\scriptsize $\fv\tm\setminus\set\var$}; 
\draw[nopolgen, in=90, out=180] (proofel)to(Gammanode);
\end{tikzpicture}

&

\begin{tikzpicture}[ocenter]  
\node [net] (leftterm) {$\pntransl\tm$};
\node at (leftterm.center) [below left = \stalt and \stlar, etic](Gammanode){\scriptsize $\fv\tm \setminus\fv\tmtwo$}; 
\draw[nopolgen, in=90, out=210] (leftterm)to(Gammanode);

\node at (leftterm.center)[below right = .8*\stalt and 1.5*\stlar,etic](cut){\scriptsize $\cut$};
\draw[nopol, out=-45, in=180] (leftterm)to(cut);
\node at (cut.center) [above left = .3*\stalt and .4*\stlar, nospace] (dummy3) {\scriptsize$\mtype$};
\node at (cut.center) [above right = .3*\stalt and \stlar, nospace] (dummy3) {\scriptsize$!\mtype\tens\mtype^\bot = \mtype^\bot$};

\node at (cut.center)[above right = \stalt and \stlar,etic](tens){\scriptsize $\tens$};
\node at (tens.center) [above right = .4*\stalt and .6*\stlar, nospace] (dummy) {\scriptsize$\mtype^\bot$};
\draw[nopol, out=-90, in=0] (tens)to(cut);

\node at (tens.center)[above right = \stalt and 1.5*\stlar,etic](ax){\scriptsize $\ax$};
\node at (ax.center)[below right = \stalt and \stlar,etic](axghostconclusion){};
\node at (axghostconclusion.center) [above right = .4*\stalt and .1*\stlar, nospace] (dummy) {\scriptsize$\mtype$};
\draw[nopol, out=180, in=45] (ax)to(tens);
\draw[nopol, out=0, in=90] (ax)to(axghostconclusion);

\node at (tens.center)[above left = \stalt and \stlar,etic](bang){\scriptsize $!$};
\draw[nopol, out=-90, in=135] (bang)to(tens);

\node at (bang.center) [above left = \stalt and .7*\stlar, net] (rightterm) {$\pntransl\tmtwo$};
\draw[nopol, out=-45, in=90] (rightterm)to(bang);
\node at (bang.center) [above right = .4*\stalt and .1*\stlar, nospace] (dummy2) {\scriptsize$\mtype$};
\node at (bang.center) [below right = .2*\stalt and .2*\stlar, nospace] (dummy2) {\scriptsize$!\mtype$};
\abox{bang}{exbox}{32pt}{8pt}{30pt}
\node at (bang.center)[etic] (bangsym){$!$};

\node at (rightterm.center) [below left = 1.4*\stalt and \stlar, etic](Deltanode){\scriptsize $\fv\tmtwo \setminus\fv\tm$}; 
\draw[nopolgen, in=90, out=210] (rightterm)to(Deltanode);

\node at (Gammanode.center) [left = 2.1*\stlar, etic](contr1){\scriptsize $\contr$}; 
\node at (contr1.center) [below = \hstalt, etic](contr1ghost){}; 
\draw[nopol] (contr1)to(contr1ghost);
\draw[nopol, out=195, in = 135] (rightterm)to(contr1);
\draw[nopol, out=195, in = 45] (leftterm)to(contr1);

\node at (contr1.center) [left = \stlar, etic](contr2){\scriptsize $\contr$}; 
\node at (contr2.center) [below = \hstalt, etic](contr2ghost){}; 
\draw[nopol] (contr2)to(contr2ghost);
\draw[nopol, out=190, in = 135, overlay] (rightterm)to(contr2);
\draw[nopol, out=190, in = 45] (leftterm)to(contr2);

\gdotsname{contr1}{contr2}{below = 1pt}{puntini}
\node at \med{contr1ghost}{contr2ghost}[below=\sepbox, rotate=90, nospace, anchor = center](bracket){\LARGE \{};
 \node at (bracket.center)[ below=1.3*\sepbox, nospace](bracket2){\scriptsize $\fv\tm{\cap}\fv\tmtwo$};
\end{tikzpicture}
\\
Variable $\pntransl{\var}$
&
Abstraction $\pntransl{(\la\var\tm)}$
&
Application $\pntransl{(\tm\tmtwo)}$

\end{tabular}
			&
\begin{tabular}{c@{\hspace{.1cm}}c@{\hspace{.1cm}}c:c@{\hspace{.1cm}}c@{\hspace{.1cm}}c}
\begin{tikzpicture}[ocenter]
\node at (0,0) [etic] (derPal){};
\node at (derPal.center) [etic, above= \stalt] (derPax){$\triangledown$};

\lder{derPal}{derPax}{der};
\end{tikzpicture}
 &
 $\to$
 &
 \begin{tikzpicture}[ocenter]
\node at (0,0) [etic] (derPal){$\triangledown$};
\node at (derPal.center) [etic, above= \stalt] (derPax){};

\lder{derPal}{derPax}{der};
\end{tikzpicture}
&
\begin{tikzpicture}[ocenter]
\node at (0,0)[etic] (contrPal){};
\node at (contrPal.center) [etic,above left = \stalt and \hstlar](contrLPax){$\triangledown$};
\node at (contrPal.center) [etic,above right = \stalt and \hstlar](contrRPax){};

\lcontr{contrPal}{contrLPax}{contrRPax}{contr};
\end{tikzpicture}
&
$\to$
&
\begin{tikzpicture}[ocenter]
\node at (0,0)[etic] (contrPal){$\triangledown$};
\node at (contrPal.center) [etic,above left = \stalt and \hstlar](contrLPax){};
\node at (contrPal.center) [etic,above right = \stalt and \hstlar](contrRPax){};

\lcontr{contrPal}{contrLPax}{contrRPax}{contr};
\end{tikzpicture}

\\

	\scriptsize $(B, S)$
&
\scriptsize $\to$
&
\scriptsize $(B,\Box\cdot S)$
&
\scriptsize
 $(B,\sigma\cdot S)$
 &
\scriptsize $\to$
 &
\scriptsize $(B,\langle\mathsf{l},\sigma\rangle\cdot	S)$
\\\\\\
\begin{tikzpicture}[ocenter]
\node at (0,0) [etic] (derPal){};
\node at (derPal.center) [etic, above= \stalt] (derPax){$\triangledown$};
\node at (derPal.center) [etic, above right= \hstalt and \hstlar] (ghostR){};
\node at (derPal.center) [etic, above left= \hstalt and \hstlar] (ghostL){};
\draw[nopol] (derPax) to (derPal);
\draw[exboxline](ghostL) to (ghostR);
\end{tikzpicture}
 &
 $\to$
 &
\begin{tikzpicture}[ocenter]
\node at (0,0) [etic] (derPal){$\triangledown$};
\node at (derPal.center) [etic, above= \stalt] (derPax){};
\node at (derPal.center) [etic, above right= \hstalt and \hstlar] (ghostR){};
\node at (derPal.center) [etic, above left= \hstalt and \hstlar] (ghostL){};
\draw[nopol] (derPax) to (derPal);
\draw[exboxline](ghostL) to (ghostR);
\end{tikzpicture}
&
\begin{tikzpicture}[ocenter]
\node at (0,0)[etic] (contrPal){};
\node at (contrPal.center) [etic,above left = \stalt and \hstlar](contrLPax){};
\node at (contrPal.center) [etic,above right = \stalt and \hstlar](contrRPax){$\triangledown$};

\lcontr{contrPal}{contrLPax}{contrRPax}{contr};
\end{tikzpicture}
&
$\to$
&
\begin{tikzpicture}[ocenter]
\node at (0,0)[etic] (contrPal){$\triangledown$};
\node at (contrPal.center) [etic,above left = \stalt and \hstlar](contrLPax){};
\node at (contrPal.center) [etic,above right = \stalt and \hstlar](contrRPax){};

\lcontr{contrPal}{contrLPax}{contrRPax}{contr};
\end{tikzpicture}
\\
\scriptsize $(\sigma'\cdot B,\sigma\cdot S)$
&
\scriptsize $\to$
&
\scriptsize $(B,\langle\sigma',\sigma\rangle\cdot S)$
&
\scriptsize $(B,\sigma\cdot S)$
 &
\scriptsize $\to$
 &
\scriptsize $(B,\langle\mathsf{r},\sigma\rangle\cdot	S)$

\end{tabular}
		\end{tabular}
		\vspace{-8pt}
		\caption{On the left, the call-by-name translation $(\cdot)^\dagger$ of 
		the $\lambda$-calculus into linear 
			logic proof nets.\\ On the right, transition rules of the proof nets presentation of the IAM related to exponential 
			signatures.}
		\label{fig:proofnets}
	\end{center}
	
\end{figure}\end{SHORT}
\begin{LONG}\begin{figure}[t]
		\begin{center}
			\begin{tabular}{c}
\begin{tabular}{c:c:c}

\begin{tikzpicture}[ocenter]
\node at (0,0) [etic](axRightConclusion){};
\node at (axRightConclusion.center) [etic, above right=6pt and 1pt](dummy){\scriptsize $\mtype$};
\node at (axRightConclusion.center) [etic, left = 1.4*\stlar](der){\scriptsize $\der$};
\node at (der.center) [etic, above left=6pt and 1pt](dummy){\scriptsize $\mtype^\bot$};
\node at \med{axRightConclusion}{der} [etic, above = \hstalt ](axSym){\scriptsize $\ax$};
\draw[nopol, out=0, in=90](axSym)to(axRightConclusion);
\draw[nopol, out=180, in=90](axSym)to(der);
\node at (der.center) [etic, below= 1.2*\hstalt] (derGhost){};
\draw[nopol](der)to(derGhost);
\node at (derGhost.center) [etic, below = 1pt] (derGhostType){\scriptsize $\var:?\mtype^\bot$};
\end{tikzpicture}
&

\begin{tikzpicture}[ocenter]  
\node [net] (proofel) {$\pntransl\tm$};

\node at (proofel.center) [below right = 1.2*\stalt and \hstlar, etic](par){$\parr$};
\node at (par.center) [etic, above left=5pt and 8pt](dummy){\scriptsize $\var:?\mtype^\bot$};
\node at (par.center) [etic, above right=5pt and 6pt](dummy2){\scriptsize $\mtype$};

\draw[nopol, out=-90, in=135](proofel)to(par);
\draw[nopol, out=-15, in=45](proofel)to(par);
\node at (par.center) [below left = .8*\stalt and 1pt, etic](parghost){\scriptsize $?\mtype^\bot \parr \mtype = \mtype$};
\draw[nopol](par)to(parghost);

\node at (proofel.center) [below left = 1.3*\stalt and 2.3*\stlar, etic](Gammanode){\scriptsize $\fv\tm\setminus\set\var$}; 
\draw[nopolgen, in=90, out=180] (proofel)to(Gammanode);
\end{tikzpicture}

&

\begin{tikzpicture}[ocenter]  
\node [net] (leftterm) {$\pntransl\tm$};
\node at (leftterm.center) [below left = \stalt and \stlar, etic](Gammanode){\scriptsize $\fv\tm \setminus\fv\tmtwo$}; 
\draw[nopolgen, in=90, out=210] (leftterm)to(Gammanode);

\node at (leftterm.center)[below right = .8*\stalt and 1.5*\stlar,etic](cut){\scriptsize $\cut$};
\draw[nopol, out=-45, in=180] (leftterm)to(cut);
\node at (cut.center) [above left = .3*\stalt and .4*\stlar, nospace] (dummy3) {\scriptsize$\mtype$};
\node at (cut.center) [above right = .3*\stalt and \stlar, nospace] (dummy3) {\scriptsize$!\mtype\tens\mtype^\bot = \mtype^\bot$};

\node at (cut.center)[above right = \stalt and \stlar,etic](tens){\scriptsize $\tens$};
\node at (tens.center) [above right = .4*\stalt and .6*\stlar, nospace] (dummy) {\scriptsize$\mtype^\bot$};
\draw[nopol, out=-90, in=0] (tens)to(cut);

\node at (tens.center)[above right = \stalt and 1.5*\stlar,etic](ax){\scriptsize $\ax$};
\node at (ax.center)[below right = \stalt and \stlar,etic](axghostconclusion){};
\node at (axghostconclusion.center) [above right = .4*\stalt and .1*\stlar, nospace] (dummy) {\scriptsize$\mtype$};
\draw[nopol, out=180, in=45] (ax)to(tens);
\draw[nopol, out=0, in=90] (ax)to(axghostconclusion);

\node at (tens.center)[above left = \stalt and \stlar,etic](bang){\scriptsize $!$};
\draw[nopol, out=-90, in=135] (bang)to(tens);

\node at (bang.center) [above left = \stalt and .7*\stlar, net] (rightterm) {$\pntransl\tmtwo$};
\draw[nopol, out=-45, in=90] (rightterm)to(bang);
\node at (bang.center) [above right = .4*\stalt and .1*\stlar, nospace] (dummy2) {\scriptsize$\mtype$};
\node at (bang.center) [below right = .2*\stalt and .2*\stlar, nospace] (dummy2) {\scriptsize$!\mtype$};
\abox{bang}{exbox}{32pt}{8pt}{30pt}
\node at (bang.center)[etic] (bangsym){$!$};

\node at (rightterm.center) [below left = 1.4*\stalt and \stlar, etic](Deltanode){\scriptsize $\fv\tmtwo \setminus\fv\tm$}; 
\draw[nopolgen, in=90, out=210] (rightterm)to(Deltanode);

\node at (Gammanode.center) [left = 2.1*\stlar, etic](contr1){\scriptsize $\contr$}; 
\node at (contr1.center) [below = \hstalt, etic](contr1ghost){}; 
\draw[nopol] (contr1)to(contr1ghost);
\draw[nopol, out=195, in = 135] (rightterm)to(contr1);
\draw[nopol, out=195, in = 45] (leftterm)to(contr1);

\node at (contr1.center) [left = \stlar, etic](contr2){\scriptsize $\contr$}; 
\node at (contr2.center) [below = \hstalt, etic](contr2ghost){}; 
\draw[nopol] (contr2)to(contr2ghost);
\draw[nopol, out=190, in = 135, overlay] (rightterm)to(contr2);
\draw[nopol, out=190, in = 45] (leftterm)to(contr2);

\gdotsname{contr1}{contr2}{below = 1pt}{puntini}
\node at \med{contr1ghost}{contr2ghost}[below=\sepbox, rotate=90, nospace, anchor = center](bracket){\LARGE \{};
 \node at (bracket.center)[ below=1.3*\sepbox, nospace](bracket2){\scriptsize $\fv\tm{\cap}\fv\tmtwo$};
\end{tikzpicture}
\\
Variable $\pntransl{\var}$
&
Abstraction $\pntransl{(\la\var\tm)}$
&
Application $\pntransl{(\tm\tmtwo)}$

\end{tabular}\\[15pt]
				\hhline{=}\\
\begin{tabular}{c@{\hspace{.1cm}}c@{\hspace{.1cm}}c:c@{\hspace{.1cm}}c@{\hspace{.1cm}}c}
\begin{tikzpicture}[ocenter]
\node at (0,0) [etic] (derPal){};
\node at (derPal.center) [etic, above= \stalt] (derPax){$\triangledown$};

\lder{derPal}{derPax}{der};
\end{tikzpicture}
 &
 $\to$
 &
 \begin{tikzpicture}[ocenter]
\node at (0,0) [etic] (derPal){$\triangledown$};
\node at (derPal.center) [etic, above= \stalt] (derPax){};

\lder{derPal}{derPax}{der};
\end{tikzpicture}
&
\begin{tikzpicture}[ocenter]
\node at (0,0)[etic] (contrPal){};
\node at (contrPal.center) [etic,above left = \stalt and \hstlar](contrLPax){$\triangledown$};
\node at (contrPal.center) [etic,above right = \stalt and \hstlar](contrRPax){};

\lcontr{contrPal}{contrLPax}{contrRPax}{contr};
\end{tikzpicture}
&
$\to$
&
\begin{tikzpicture}[ocenter]
\node at (0,0)[etic] (contrPal){$\triangledown$};
\node at (contrPal.center) [etic,above left = \stalt and \hstlar](contrLPax){};
\node at (contrPal.center) [etic,above right = \stalt and \hstlar](contrRPax){};

\lcontr{contrPal}{contrLPax}{contrRPax}{contr};
\end{tikzpicture}

\\

	\scriptsize $(B, S)$
&
\scriptsize $\to$
&
\scriptsize $(B,\Box\cdot S)$
&
\scriptsize
 $(B,\sigma\cdot S)$
 &
\scriptsize $\to$
 &
\scriptsize $(B,\langle\mathsf{l},\sigma\rangle\cdot	S)$
\\\\\\
\begin{tikzpicture}[ocenter]
\node at (0,0) [etic] (derPal){};
\node at (derPal.center) [etic, above= \stalt] (derPax){$\triangledown$};
\node at (derPal.center) [etic, above right= \hstalt and \hstlar] (ghostR){};
\node at (derPal.center) [etic, above left= \hstalt and \hstlar] (ghostL){};
\draw[nopol] (derPax) to (derPal);
\draw[exboxline](ghostL) to (ghostR);
\end{tikzpicture}
 &
 $\to$
 &
\begin{tikzpicture}[ocenter]
\node at (0,0) [etic] (derPal){$\triangledown$};
\node at (derPal.center) [etic, above= \stalt] (derPax){};
\node at (derPal.center) [etic, above right= \hstalt and \hstlar] (ghostR){};
\node at (derPal.center) [etic, above left= \hstalt and \hstlar] (ghostL){};
\draw[nopol] (derPax) to (derPal);
\draw[exboxline](ghostL) to (ghostR);
\end{tikzpicture}
&
\begin{tikzpicture}[ocenter]
\node at (0,0)[etic] (contrPal){};
\node at (contrPal.center) [etic,above left = \stalt and \hstlar](contrLPax){};
\node at (contrPal.center) [etic,above right = \stalt and \hstlar](contrRPax){$\triangledown$};

\lcontr{contrPal}{contrLPax}{contrRPax}{contr};
\end{tikzpicture}
&
$\to$
&
\begin{tikzpicture}[ocenter]
\node at (0,0)[etic] (contrPal){$\triangledown$};
\node at (contrPal.center) [etic,above left = \stalt and \hstlar](contrLPax){};
\node at (contrPal.center) [etic,above right = \stalt and \hstlar](contrRPax){};

\lcontr{contrPal}{contrLPax}{contrRPax}{contr};
\end{tikzpicture}
\\
\scriptsize $(\sigma'\cdot B,\sigma\cdot S)$
&
\scriptsize $\to$
&
\scriptsize $(B,\langle\sigma',\sigma\rangle\cdot S)$
&
\scriptsize $(B,\sigma\cdot S)$
 &
\scriptsize $\to$
 &
\scriptsize $(B,\langle\mathsf{r},\sigma\rangle\cdot	S)$

\end{tabular}
			\end{tabular}
			\vspace{-8pt}
			\caption{Above, the call-by-name translation $(\cdot)^\dagger$ of 
				the $\lambda$-calculus into linear 
				logic proof nets.\\ Below, transition rules of the proof nets 
				presentation of the IAM related to exponential 
				signatures.}
			\label{fig:proofnets}
		\end{center}
		
	\end{figure}\end{LONG}
Here we sketch how the $\IAM$ relates to the original presentation 
based on linear 
logic proof nets, due to Mackie and Danos \& Regnier~\cite{mackie_geometry_1995,DBLP:conf/lics/DanosHR96,danos_reversible_1999}, the IAM. For lack of space, we avoid 
defining proof nets and related concepts, and focus only on the key points.

Essentially, the $\IAM$ corresponds to the IAM on proof nets representing $\l$-terms according 
to the call-by-name translation $\tm^\dagger$ in 
Fig.~\ref{fig:proofnets}\footnote{The translation uses a recursive type $\mtype 
= ?\mtype^\bot \parr \mtype$ in order to be able to represent untyped terms of 
the $\l$-calculus---this is standard. Every net has a unique conclusion labeled 
with $\mtype$, which is the \emph{output}, and all the other conclusions have 
type $?\mtype^\bot$ and are labelled with a free variable of the term. In the 
abstraction case $\la\var\tm$, if $\var\notin\fv\tm$ then a weakening is added 
to represent that variable.}, and 
considering only paths from the distinguished conclusion of the obtained net, as in 
\cite{DBLP:conf/lics/DanosHR96} (while 
\cite{mackie_geometry_1995,danos_reversible_1999} use the call-by-value translation, and \cite{danos_reversible_1999} considers paths starting on whatever conclusions).

There is a bisimulation between the $\IAM$ and such a restricted IAM, which is not strong 
because two $\IAM$ transitions rather are \emph{macros}, packing together whole 
sequences of transitions in their presentation. Namely, transition $\iamdvar$ 
short-circuits the path between a variable $\var$ and its abstraction 
$\la\var\ctx_n\ctxholep{\var}$. In proof nets, this path 
traverses a dereliction, exactly $n$ auxiliary doors, possibly a contraction 
tree, and ends on the $\parr$ representing the abstraction. The dual 
transition $\iamdlamtwo$ does the reverse job, corresponding to the reversed 
path. Aside the different notations and the \emph{macrification}, our 
transitions
correspond exactly to the actions attached to proof nets edges presented 
in~\cite{danos_reversible_1999}\footnote{We refer to 
\cite{danos_reversible_1999} rather than \cite{DBLP:conf/lics/DanosHR96} 
because in \cite{DBLP:conf/lics/DanosHR96} the definition is only sketched, 
while \cite{danos_reversible_1999} is more accurate.}, as we explain next.

In the proof nets presentation the token is given by two stacks, called \emph{boxes stack 
B} and \emph{balancing stack S}, corresponding exactly to our log $\ste$ 
and tape $\stme$, respectively. They are formed by sequences of multiplicative constants $\psym$ (corresponding to our $\resm$) and by \emph{exponential signatures} $\sigma$. They are defined by the following 
grammar\footnote{With respect to \cite{danos_reversible_1999}: for clarity, we 
use symbols $\lsym$ and $\rsym$ instead of $\psym'$ and $\qsym'$, and we omit 
$\qsym$, dual of $\psym$, as the use of the 
call-by-name translation allows to get rid of it.}.
\[
\begin{array}{rrcl}
\textsc{Balancing stacks} & S&\grameq&\epsilon\midd\psym\cdot S\midd\sigma\cdot S\\
\textsc{Boxes stacks} & B&\grameq&\epsilon\midd\sigma\cdot B\\
\textsc{Exp. signatures} & 
\sigma,\sigma'&\grameq&\Box\midd\langle\sigma,\sigma'\rangle\midd\langle\lsym,\sigma\rangle\midd
\langle\rsym,\sigma\rangle
\end{array}
\]
Intuitively, exponential signatures are binary trees with $\Box$, $\lsym$ or $\rsym$ as leaves, 
where $\lsym$ and $\rsym$ denote the left/right premise of a contraction. 
Fig.~\ref{fig:proofnets} shows the IAM
transitions concerning exponential signatures that are the relevant difference with respect to 
the $\IAM$.

To explain how $\iamdvar$ is simulated by the IAM, let's recall it:
\[
\dstate{ \var }{ \ctxp{\la\var\ctxtwo_n} }{ \stme }{ \expsn\cdot\ste } 
\iamdvar 
\ustate{ \la\var\ctxtwo_n\ctxholep\var}{ \ctx }{ 
	(\var,\la\var\ctxtwo_n,\expsn)\cdot\stme }{ \ste }.
\]
The IAM does the same, just in more steps and with another syntax. Consider a 
token $(B_n\cdot B,S)$ 
approaching a variable $\var$ that is $n$ 
boxes deeper than its binder $\la\var\ctxtwo_n\ctxholep\var$. Variables are translated as 
dereliction links and thus we have: $(B_n\cdot B,S)\to(B_n\cdot B,\Box\cdot S)$.

Then, the token travels until the binder of $\var$ is found (a $\parr$ in the proof net translation 
of the term), \ie it traverses exactly $n$ 
boxes always exiting from the auxiliary doors. Moreover, for every such box a 
contraction could be encountered. Let us suppose for the moment that $\var$ is 
used linearly, so that no contractions are encountered. Then the token rewrites 
in the following way, traversing $n$ auxiliary doors.
\[
\begin{array}{rcl}
(\sigma_1\cdot B_{n-1}\cdot B, \Box\cdot S)&\to&
(\sigma_2\cdot B_{n-2}\cdot B,\langle \sigma_1,\Box\rangle\cdot S)\\
&\to&
(\sigma_3\cdot B_{n-3}\cdot B, \langle \sigma_2,\langle 
\sigma_1,\Box\rangle\rangle\cdot S)\\
&\to&\cdots\to(B, \langle \sigma_n,\langle\cdots\langle \sigma_1,\Box\rangle
\cdots\rangle\rangle\cdot S).
\end{array}
\]
Note the perfect matching between the two formulations: in both cases 
the first $n$ \trposs/signatures in the log/boxes stack are removed from it 
and, once wrapped in a 
single \trpos/signature, then put on the tape/balancing stack. In presence of 
contractions the exponential signature $\langle \sigma_n,\langle\cdots\langle \sigma_1,\Box\rangle
\cdots\rangle\rangle$ is interleaved by $\lsym$ and $\rsym$ leaves. 
These symbols represent nothing more than a binary code used to traverse the 
contraction tree of $\var$. In the $\IAM$, we use a more human readable way of 
representing the same information: we explicitly save the variable occurrence 
through its position inside its binder. 


\section{Conclusions}
This paper presents a direct proof of the implementation theorem for Mackie and Danos \& Regnier's Interaction Abstract Machine, building over a natural notion of bisimulation and avoiding detours via game semantics. Additionally, it (re)formulates the machine directly on $\lambda$-terms, 
making it conceptually closer to traditional abstract machines, and more apt to formalizations in proof assistants.

Our work opens the way to a fine analysis of the complexity of the implementation of the $\lambda$-calculus, in particular regarding the space-time trade-off, by comparing the $\IAM$, that the literature suggests being tuned for space-efficiency, to traditional environment machines that are instead tuned for time-efficiency.


\bibliography{main}

\end{document}